\title[Almost Stable Clustering]{The Computational Complexity of Almost Stable Clustering with Penalties}
\definecolor{darkcyan}{rgb}{0.0, 0.55, 0.55}
\newcommand{\done}{\rlap{$\square$}{\raisebox{2pt}{\large\hspace{1pt}\checkmark}}}
\newcommand{\attention}[1]{\textcolor{red}{^ #1 ^}} 
\newcommand{\far}[1]{\textcolor{Red}{#1}}
\newcommand{\kam}[1]{\textcolor{WildStrawberry}{#1}}
\newcommand{\farr}[1]{\textcolor{Red}{$\Box$ F: #1}}
\newcommand{\kamr}[1]{\textcolor{WildStrawberry}{$\Box$ K: #1}}
\newcommand{\sanr}[1]{\textcolor{darkcyan}{$\Box$ Sa: #1}}
\newcommand{\farrd}[1]{\textcolor{Red}{$\done$ F: \sout{#1}}}
\newcommand{\kamrd}[1]{\textcolor{WildStrawberry}{$\done$ K: \sout{#1}}}
\newcommand{\sanrd}[1]{\textcolor{darkcyan}{$\done$ Sa: \sout{#1}}}
\newcommand{\attention}[1]{} 
\newcommand{\far}[1]{#1}
\newcommand{\kam}[1]{#1}
\newcommand{\farr}[1]{}
\newcommand{\kamr}[1]{}
\newcommand{\sanr}[1]{}
\newcommand{\farrd}[1]{}
\newcommand{\kamrd}[1]{}
\newcommand{\sanrd}[1]{}
\newcommand{\cost}{\operatorname{cost}}
\newcommand{\costp}{\cost_{\mathrm{pen}}}
\newcommand{\bR}{\ensuremath{\mathbb{R}}\xspace}
\newcommand{\bS}{\ensuremath{\mathbb{S}}\xspace}
\newcommand{\cF}{\ensuremath{\mathcal{F}}\xspace}
\newcommand{\cI}{\ensuremath{\mathcal{I}}\xspace}
\newcommand{\cO}{\ensuremath{\mathcal{O}}\xspace}
\newcommand{\kms}{\textsc{$k$-Means}\xspace}
\newcommand{\kmd}{\textsc{$k$-Median}\xspace}
\newcommand{\xmd}[1]{\textsc{$#1$-Median}\xspace}
\newcommand{\xms}[1]{\textsc{$#1$-Means}\xspace}
\newcommand{\kcen}{\textsc{$k$-Center}\xspace}
\newcommand{\gti}{\textsc{Grid Tiling Inequality}\xspace}
\newcommand{\pvc}{\textsc{PVC}\xspace}
\newcommand{\pvclong}{\textsc{Partial Vertex Cover}\xspace}
\newcommand{\ETH}{\textsf{ETH}\xspace}
\newcommand{\FPT}{\textsf{FPT}\xspace}
\newcommand{\distbij}[1]{\delta^{bij}_{#1}}
\newcommand{\floor}[1]{\ensuremath{\left\lfloor#1\right\rfloor}}
\newcommand{\bigO}[1]{\ensuremath{\mathcal{O}\left(#1\right)}}
\newcommand{\opt}{\ensuremath{\textsf{OPT}}}
\begin{document}

\maketitle

\begin{abstract}
    We investigate the complexity of stable (or perturbation-resilient) instances of \kms and \kmd clustering problems in metrics with small doubling dimension. While these problems have been extensively studied under multiplicative perturbation resilience in low-dimensional Euclidean spaces (e.g., \citep{friggstad2019exact,C-AS17}), we adopt a more general notion of stability, termed ``almost stable'', \kam{which is closer to the notion of $(\alpha, \varepsilon)$-perturbation resilience introduced by \cite{BL16}}. Additionally, we extend our results to \kms/\kmd with penalties, where each data point is either assigned to a cluster centre or incurs a penalty.

    We show that certain special cases of almost stable \kms/\kmd (with penalties) are solvable in polynomial time. To complement this, we also examine the hardness of almost stable instances and $(1 + \frac{1}{poly(n)})$-stable instances of \kms/\kmd (with penalties), proving super-polynomial lower bounds on the runtime of any exact algorithm under the widely believed Exponential Time Hypothesis (ETH).

\end{abstract}

\begin{keywords}%
Clustering, $k$-Means, $k$-Median, Stability, Perturbation Resilience, Hardness Results
\end{keywords}

\section{Introduction}

A fundamental challenge in statistical data analysis is to organize an unlabelled set of data points into groups of similar objects. Clustering, a prominent technique of unsupervised learning, is widely used across scientific disciplines to address this challenge. The computational complexity of various clustering objectives has been extensively studied in the literature. While many well-known objectives, such as \kms, \kmd, and \kcen, are $\mathsf{NP}$-hard to optimize \citep{GK99, Vazirani01, DFKVV04}, simple heuristics like local search have long been known to perform effectively on real-world data. This disparity has sparked significant interest in beyond worst-case analysis, aiming to identify structural properties of the input data points that enable heuristic algorithms to produce nearly optimal solutions. 

\cite{BL12} first introduced the perturbation resilience condition in the context of max-cut clustering, where the maximum cut would remain the unique optimum after multiplying the edge weights by a factor between $1$ and $1 + \gamma$. \cite{ABS10} studied stables instances of centre-based clustering objectives (including \kms and \kmd) and proved that $3$-stable instances of these problems are solvable in polynomial time. Since then, the challenge has been to determine the smallest stability value for which the problem is in $\mathsf{P}$. \cite{BL16} improved the bound to $1 + \sqrt{2}$, and \cite{MMV12} showed that with a slightly weaker promise of metric perturbation resilience, $2$-stable instances of a class of centre-based clustering objectives can be optimized in polynomial time. \cite{friggstad2019exact} focused on metrics of bounded doubling dimension (which include $\bR^d$ for constant $d$) and proved that for any constant $\varepsilon$, $(1 + \varepsilon)$-stable instances of \kms and \kmd are solvable in polynomial time. They complemented this result by showing that when the dimension $d$ is part if the input, a universal constant $\varepsilon_\circ$ exists such that even a PTAS for $(1 + \varepsilon_\circ)$-stable instances of \kms in $\bR^d$ would imply $\mathsf{NP}=\mathsf{RP}$.

The definition of stable clustering problems is motivated by the intuition that most real-world instances of such problems behave in a ``stable'' manner for the right objective \citep{BL12}, i.e., the structure of the optimum solution is resilient against small perturbations in input values. Here, the objective value of a solution is used as a proxy for its resilience to perturbation, i.e., for stable instances the solution with the optimum objective value must maintain its optimality under a set of valid perturbations. To address cases when this model of stability is too restrictive, the notion of $\alpha$-stability can be relaxed  to ensure that perturbing the metric keeps the optimum solution ``$\beta$-close'' (for some notion of closeness) to the optimum solution of the original metric, for some $\beta \geq 0$. \cite{BLP06} viewed stability as a property of the clustering algorithm itself, and allowed for various optimum solutions that are structurally similar under Hamming distance. \cite{BL16} introduced $(\alpha, \beta)$-stable instances of centre-based clustering (see Section \ref{sec:prelim} for details) and provided an approximation scheme for $(\alpha, \beta)$-stable instances of \kmd when $\alpha > 2 + \sqrt{3}$ and $\beta$ is a function of the minimum size of a cluster in the optimal solution.

\subsection{Our Contribution}

We study the perturbation resilience of \kms and \kmd in metrics of bounded doubling dimension. Our ultimate goal is to give tight bounds on the $(\alpha, \beta)$-stability parameter of efficiently solvable problem instances. While we do not fully close the gap, we make significant progress towards this goal by establishing both upper and lower bounds, thereby providing a more refined picture of the complexity of the problem. Our results also extend to the penalty version of \kms and \kmd, in which points may not be assigned to a cluster, thus incurring a penalty. 

    

We first consider a generalization of $\alpha$-perturbation resilience, where the given instance is not required to have a unique optimum (see Section \ref{sec:prelim} for details). We show that such instances of the \kms and \kmd (with penalties) are solvable over doubling metrics: 


\begin{restatable}[Exact Solution for 
Stable \kms]{theorem}{warmup}\label{thm:warmup}
    Fix any $\varepsilon' > 0$.  $(1 + \varepsilon')$-stable instances of \kms in doubling metrics can be solved in polynomial time.
\end{restatable}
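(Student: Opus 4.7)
The plan is to adapt the dynamic-programming framework of \cite{friggstad2019exact} for exactly solving $(1+\varepsilon)$-stable \kms in doubling metrics to the generalized notion of stability that permits ties in the optimum. Their framework rests on two pillars: a geometric ``center proximity'' guarantee implied by stability, and a bottom-up DP over a hierarchical net-tree decomposition of the metric that exploits this guarantee to enumerate only polynomially many consistent local clusterings. Since $\varepsilon'$ is a fixed constant and the doubling dimension is bounded, both ingredients can be reused essentially verbatim once the structural guarantee is re-established in the generalized setting.

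The first step is to prove a center-proximity lemma tailored to the generalized definition: if $\cC^\ast = \{c_1^\ast,\ldots,c_k^\ast\}$ is any optimum of a $(1+\varepsilon')$-stable instance and $p$ is assigned to $c_i^\ast$, then for every $j \neq i$,
\[
    d(p, c_j^\ast) \;\geq\; \sqrt{1+\varepsilon'}\,\cdot\, d(p, c_i^\ast).
\]
The proof is a standard perturbation argument: if the inequality fails, one constructs a $(1+\varepsilon')$-perturbation of the metric by scaling only the distances incident to $p$ and $c_i^\ast$, after which reassigning $p$ to $c_j^\ast$ strictly decreases the squared-distance cost of $\cC^\ast$, contradicting the requirement that $\cC^\ast$ remain optimal after perturbation. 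The generalization allows multiple optima but still requires each of them to survive any valid perturbation, so the argument forces this inequality on every optimum $\cC^\ast$ and every point $p$.

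With this separation in hand, I would instantiate the Friggstad et al.\ DP on an $\varepsilon'$-net hierarchy of the doubling metric: each cell of the net tree stores a compact description of the centers lying inside it together with how boundary points interact with ``nearby'' centers outside. Center proximity plus bounded doubling dimension ensures that only a constant number of centers are relevant per cell, so the DP has polynomially many states and bottom-up combining recovers an optimum cost. The main subtlety --- and the key departure from the uniqueness-based analysis --- is that when a point has several equidistant optimal assignments, the DP must commit to one globally consistent clustering rather than stitch together fragments of distinct optima. I plan to address this by fixing a deterministic tie-breaking rule on (point, center) pairs and then verifying that, combined with the structural lemma, this rule singles out a ``canonical'' optimum whose local structure the DP enumerates correctly; this bookkeeping tweak is the step I expect to require the most care.
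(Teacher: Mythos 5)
There is a genuine gap here, and it starts with a mischaracterization of the tool you are importing: the exact algorithm of \cite{friggstad2019exact} for $(1+\varepsilon)$-stable \kms in doubling metrics is \emph{not} a dynamic program over a net-tree decomposition driven by center proximity; it is the $\rho$-swap local search algorithm (with $\rho = d^{O(d)}\varepsilon^{-O(d/\varepsilon)}$), analyzed by showing (i) that local search reaches a ``nearly-good'' solution, i.e.\ one with $\cost(S) \le \cost(O) + 2\varepsilon\,\Psi(S,O)$, within polynomially many iterations, and (ii) that under $(1+\varepsilon')$-stability a nearly-good solution must itself be optimal, via an explicit adversarial perturbation that scales every distance by $1+\varepsilon'$ except each point's distance to its assigned center in $S$. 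The paper's proof follows exactly this route, and the only new work needed for the generalized (non-unique-optimum) stability notion is bookkeeping: every optimum of the perturbed instance is, by definition, an optimum of the original, so the nearly-good $S$ found by local search is compared against an optimum of the perturbed instance and shown to tie it, forcing $S$ to be optimal for the original instance. No canonical-optimum or tie-breaking machinery is required.

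The deeper problem is that the approach you propose in its place would not go through. A center-proximity bound with factor $\sqrt{1+\varepsilon'}$ (or $1+\varepsilon'$) is vanishingly weak when $\varepsilon'$ is a small constant: it does not localize clusters relative to net cells, does not bound cluster diameters against inter-center distances, and does not limit how many centers can be nearly equidistant from a point, so your key claim that ``only a constant number of centers are relevant per cell'' and hence that the DP has polynomially many states is unsupported. The DP/linkage-style algorithms in the stability literature that do exploit center proximity (e.g.\ the $(1+\sqrt{2})$- and $(2+\sqrt{3})$-type results of Balcan--Liang and successors, or the $\alpha \ge 2$ metric-perturbation-resilience results) all require a separation factor bounded well away from $1$; the whole reason \cite{friggstad2019exact} resorted to local search in doubling metrics is precisely that no such combinatorial decomposition is known once $\alpha = 1+\varepsilon'$ is arbitrarily close to $1$. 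So the structural lemma you prove in step one, even granting its extension to multiple optima, is not strong enough to power the algorithm you describe, and the proposal as written does not yield the theorem.
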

\vspace*{-0.5cm}
\begin{restatable}[Exact Solution for 
Stable Penalty \kms]{theorem}{almoststablepen}\label{thm:almost_stable_pen}
   Fix any $\varepsilon' > 0$. $(1 + \varepsilon')$-stable instances of \kms with penalties in doubling metrics can be solved in polynomial time.
\end{restatable}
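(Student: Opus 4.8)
The plan is to obtain Theorem~\ref{thm:almost_stable_pen} by augmenting the algorithm behind Theorem~\ref{thm:warmup} rather than by a black-box reduction, since the natural ``add a penalty facility'' reduction does not preserve the metric (the connection costs $\sqrt{p_x}$ need not satisfy the triangle inequality) nor the bounded doubling dimension. The first step is to recast the problem: for a centre set $C$ with $|C|=k$, penalty \kms asks to minimise $\sum_x \min\{d(x,C)^2, p_x\}$, so a point is penalised exactly when its squared distance to its nearest chosen centre exceeds its penalty, and the penalised set $P(C)$ is a \emph{function} of $C$. Equivalently, the instance is ordinary \kms together with one extra ``penalty facility'' $z$ that is always open at zero cost and serves each $x$ at cost $p_x$. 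The key observation is that $z$ is non-local: its connection costs ignore the metric and do not couple across a hierarchical decomposition, so the decision to penalise $x$ compares $p_x$ only against the cost of $x$'s nearest opened metric centre.

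I take the proof of Theorem~\ref{thm:warmup} to solve stable \kms in doubling metrics by a dynamic program over a (randomly shifted) hierarchical net decomposition, whose number of states stays polynomial because $(1+\varepsilon')$-stability forces a separation/centre-proximity structure on the optimal clusters. I would run this DP essentially verbatim with a single change: whenever a point $x$ would be charged to a portal or centre, it may instead be ``dropped'' for the fixed cost $p_x$, and the DP takes the cheaper option. Since dropping a point costs the same regardless of the decomposition, the state count is unchanged and only the transition costs pick up the $\min\{\cdot,p_x\}$ terms. The analysis then has to be revisited in two places: (i) the charging argument bounding the gap between the best decomposition-respecting solution and the true optimum --- penalised points contribute the identical $p_x$ on both sides, so they add no loss and the argument survives unchanged on the served points; and (ii) the step that uses $(1+\varepsilon')$-stability to upgrade ``within $(1+\varepsilon')$ of optimal'' to ``exactly optimal'', which now needs the penalty-instance stability notion of Section~\ref{sec:prelim} to guarantee that the \emph{full} optimal structure, including the penalised set, is invariant under the perturbations that ``pay for'' the decomposition cuts.

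The hardest part is (ii): penalties are not distances, so a perturbation inflating distances by a factor in $[1,1+\varepsilon']$ makes serving strictly more expensive while leaving the $p_x$ fixed, which a priori could enlarge the optimal penalised set and defeat the rigidity argument. I would address this by showing the certifying perturbation can be chosen to respect the optimal penalty partition --- perturbing only intra-cluster and point-to-centre distances in a controlled way, or arguing straight from the definition that such a flip is already precluded by stability --- so that the DP's output, which differs from a genuine optimum only by rounding induced by the decomposition, must coincide with one. A secondary point is to re-derive the separation lemma underlying the polynomial state bound for \emph{penalty} stable instances: one must rule out degenerate optima in which a ``cluster'' is almost entirely penalised, since then its centre is only weakly pinned down; I expect the $\min\{\cdot,p_x\}$ reformulation together with stability to handle this, but it is the step that most needs care.
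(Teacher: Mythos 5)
Your proposal is built on an incorrect model of how Theorem~\ref{thm:warmup} is proved. The paper does \emph{not} use a dynamic program over a randomly shifted hierarchical net decomposition; it uses the $\rho$-swap local search of Friggstad et al.\ (Algorithm~\ref{alg:rho-swap}), together with a ``nearly-good solution'' argument that bounds the number of iterations and a perturbation argument that upgrades nearly-good to exactly optimal under stability. Nothing in Section~\ref{sec:pol} is a portal/decomposition DP, so the plan to ``run this DP essentially verbatim with a single change'' has no object to modify, and the separation-lemma/state-count concerns you raise are about a machinery the paper never invokes.

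Moreover, the very reduction you dismiss at the start --- adding a penalty facility $z^*$ with $\delta(z^*,j)^2 = p(j)$ --- is exactly what the paper does. You are right that it breaks the triangle inequality and the doubling structure; the paper's resolution is not to avoid the reduction but to constrain the local search so that no swap ever evicts $z^*$, which removes the need to apply the triangle inequality (or the doubling-dimension-based partition sampling of Theorem~\ref{thm:rand-partition}) at $z^*$. The analysis then goes through on the remaining metric centres, with the penalty-adjusted quantities $\Psi_{pen}$, $\overline{X}^{pen}_{S,O}$, and Theorem~\ref{thm:cost-drop-penlty} substituting for their non-penalty analogues.

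Your concern in (ii) --- that inflating distances while holding $p_x$ fixed could enlarge the penalised set and defeat the rigidity argument --- is real but already handled by the stability \emph{definition} in Section~\ref{sec:prelim}: for penalty \kms an $\alpha$-stable instance is required to have its optimum preserved under any $p \le p' \le \alpha^2 p$ in addition to any $\delta \le \delta' \le \alpha \delta$. The paper's proof exploits this by constructing a joint perturbation $(\delta',p')$ (equations~\eqref{eq:perturbed-dist} and~\eqref{eq:perturbed-pen}) that inflates both the distance to non-$\sigma(j,S)$ centres and the penalty of points currently served, then shows $\costp'(S) = \costp(S) \le \costp'(O_q)$ via the $X^1,\dots,X^4$ case analysis and Lemma~\ref{lem:c4}. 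So the ``flip'' you worry about is precluded by hypothesis, not by a separate structural argument you would need to supply. As written, your proposal would need to be rebuilt from scratch on the local-search framework and to actually use the $p'$ half of the stability promise; the present sketch does neither.
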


Next, we focus on the hardness of $(\alpha, \beta)$-instances of the problem (which we call ``almost stable'' instances). Assuming \ETH, we prove that an optimal solution cannot be obtained, even if we allow the algorithm a super-polynomial runtime, and even if we restrict the instance to a bounded-dimensional Euclidean space.

\begin{restatable}[Hardness of Almost Stable Penalty \kmd]{theorem}{kmedpenaltyhardness}\label{thm:k-median-penalty-alpha-beta-stable-hardness-mult-mult}
    For any $\alpha\in(1,1.2)$, $\beta>0$, there is no $f(k) n^{o(\sqrt{k})}$-time algorithm for $(\alpha, \beta)$-stable Euclidean \kmd with penalties 
    in $\bR^2$, unless \ETH fails.    
\end{restatable}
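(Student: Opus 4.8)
The plan is to reduce from a canonical grid-structured \wonehard problem --- most naturally \gti (or its inequality variant \gti), which is the standard source of $n^{o(\sqrt{k})}$ lower bounds for planar/Euclidean clustering under \ETH (as in the Marx-style framework used for \kms/\kmd in $\bR^2$). We start from an instance of \gti with $k^2$ cells, each holding a set of pairs from $[n]\times[n]$, and build a Euclidean \kmdp instance in $\bR^2$ in which the cluster centres must be placed ``near'' the chosen grid coordinates. The $k$ in the clustering instance will be $\Theta(k^2)$ so that a $f(k')n'^{o(\sqrt{k'})}$ algorithm would give an $f(k)n^{o(k)}$ algorithm for \gti, contradicting \ETH. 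The penalty mechanism is the key gadget: points that are ``far'' from every consistent centre placement are meant to be penalized rather than assigned, so the cost of an optimal solution decomposes cleanly into (i) a fixed baseline of penalties paid by ``filler'' points plus (ii) small assignment costs that encode the row/column consistency constraints of \gti. A valid \gti solution should yield cost exactly some target $T$, and any inconsistent selection should force either an unassigned point to be assigned at high cost or a would-be-penalized point to remain penalized when it should have been cheaply served, pushing the cost strictly above $T$.

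The nontrivial part --- and the reason the statement is phrased with an explicit window $\alpha\in(1,1.2)$ and arbitrary $\beta>0$ --- is that we must do this reduction while guaranteeing $(\alpha,\beta)$-stability of the produced instance, i.e.\ that after any multiplicative $\alpha$-perturbation of the metric, every optimal solution is $\beta$-close to an optimum of the original instance. So the second phase is to verify the stability promise. The idea is to engineer a large multiplicative gap between the ``intended'' assignment cost of a correctly-placed centre and any alternative: cluster each gadget's core points into tight bundles of diameter $\ll r$ while separating distinct candidate centre locations by distance $\gg \alpha r$, and choose the penalty value $p$ so that each core point strictly prefers its intended centre over being penalized by more than a factor $\alpha$, and strictly prefers being penalized over any non-intended centre by more than a factor $\alpha$. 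Under these separations, an $\alpha$-perturbation cannot flip any of these preferences, so the combinatorial structure (which points are served, by which gadget-centre, and which are penalized) of every optimum is preserved; the $\beta$ slack is then easily absorbed because the only freedom left is the exact in-bundle placement of centres, which moves within a ball of radius $O(r)$ and we scale the construction so $O(r)\le\beta$ (or, if $\beta$ is measured as a clustering-cost difference, so that the residual cost gap is below $\beta$). This is where most of the care goes: picking $r$, the inter-location gap, and $p$ simultaneously so that all ``intended vs.\ alternative'' ratios exceed $1.2>\alpha$, which is exactly why the hardness window is bounded above by a constant.

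Concretely I would carry out the steps as: (1) recall the \gti instance and fix the clustering parameter $k' = \Theta(k^2)$; (2) place, for each grid cell $(i,j)$ and each admissible pair $(a,b)$ in that cell, a candidate centre location on a scaled integer sub-grid of $\bR^2$, with horizontal coordinate encoding $a$ and vertical encoding $b$; (3) for each cell add a tight bundle of ``selector'' points forcing exactly one candidate centre per cell to be opened (using the penalty to make opening elsewhere or leaving them penalized strictly worse); (4) for each pair of horizontally adjacent cells add ``consistency'' points, placed so that they are cheaply served (below the penalty threshold) iff the two opened centres agree in their $a$-coordinate, and are forced to be penalized --- raising the cost by a known increment --- otherwise, and symmetrically vertically with the $b$-coordinate; (5) compute the exact target cost $T$ of a consistent solution and argue YES/NO instances of \gti map to cost $\le T$ / $\ge T + \Omega(1)$ after rescaling; (6) verify the $(\alpha,\beta)$-stability promise via the separation argument above. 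The main obstacle I expect is step (6) together with the simultaneous constraint that distances be realizable in $\bR^2$: enforcing pairwise separation $>\alpha r$ among all candidate locations of all $k^2$ cells while keeping consistency points at the right (smaller) distance to exactly the ``agreeing'' pairs, all inside the plane, forces a careful coordinate layout (nested grids, or a ``value axis'' per row and per column), and checking that no unintended short triangle-inequality path undermines either the cost separation or the stability gap will be the delicate calculation to get right.
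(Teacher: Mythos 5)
Your high-level plan matches the paper's in outline: reduce from \gti, produce a planar clustering instance with $k^{2}$ centres so that an $f(k')n^{o(\sqrt{k'})}$ clustering algorithm would give an $f(k)n^{o(k)}$ algorithm for \gti, use the penalty to cap the contribution of far-away points, organize candidate centres into one small ``cluster'' of nearby candidates per \gti cell, and derive $(\alpha,\beta)$-stability from the fact that the cells are geometrically separated while candidates within a cell are $O(\varepsilon n)$-close (which bounds $\distbij{}$ by $\beta$). That much is in agreement, and the observation that the in-bundle slack is exactly what generates $\beta$ is the right intuition.

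However, the core of the proof is missing and part of your stability mechanism is misaimed. First, the paper does not design new selector/consistency gadgets: it reuses the \gti-to-\xmd{k^2}-with-penalties instance $\cI(S,n)$ of Cohen-Addad et al.\ (their Theorem~6.2) verbatim --- a \emph{dense, nearly uniform} grid of $\varSigma=(2k/\varepsilon+n)^2$ data points of penalty~$1$ filling a square, with each cell's $\le n^{2}$ candidate centres packed in an $\varepsilon$-fine subgrid. The density is not incidental: the entire stability proof is a pigeonhole-plus-exchange argument showing that any optimum of a perturbed instance must open exactly one centre per cell, and this is established by comparing the \emph{integral} gain of adding a centre to an empty cell (over a disk of radius $1/\alpha$, where the penalty of $1$ caps the per-point loss) against the \emph{integral} loss of deleting a duplicate centre (over a disk of radius $\alpha$). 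These integrals are evaluated via a discretization lemma (Lemma~\ref{lem:mu-approx}), and the comparison $\frac{\pi}{3\alpha^2} > (\alpha^4-\alpha^3)\frac{2\pi}{3}$ is precisely where the constant $1.2$ comes from. Your discrete selector/consistency gadget does not give these integrals, and you would have to rebuild the \gti correctness proof from scratch in the plane --- a substantial amount of work you are implicitly assuming away.

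Second, your stated stability heuristic --- engineer all ``intended vs.\ alternative'' assignment ratios to exceed $\alpha$ so that an $\alpha$-perturbation cannot flip any preference --- is in tension with the rest of the paper. If every point's assignment preference (including over paying the penalty) were rigid under all $\alpha$-perturbations, you would have built an $\alpha$-stable instance with $\alpha$ a fixed constant $>1$, and Theorem~\ref{thm:almost_stable_pen} shows such instances are polynomial-time solvable, so your reduction could not be hard. The actual argument is weaker and more delicate: individual point assignments \emph{do} flip under perturbation, and a different \gti-consistent solution can become optimal; what is preserved is only the coarse ``one centre per cell'' pattern, and that is enforced not by per-point separation ratios but by the aggregate exchange calculation above. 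So the step you defer as ``the delicate calculation to get right'' is in fact the heart of the proof, and the stability principle you would use to attack it is the wrong one. You should instead aim directly at showing: for any $\delta\le\delta'\le\alpha\delta$, $p\le p'\le\alpha p$, and any optimum $O$ of the perturbed instance, $O$ has exactly one centre per cell; then conclude $\distbij{}(O,O')\le k^{2}\varepsilon(n-1)\le\beta$ by choosing $\varepsilon=O(\beta/n^{3})$.
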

\vspace*{-0.5cm}
\begin{restatable}[Hardness of Almost Stable \kmd]{theorem}{kmedhardness}\label{thm:k-median-alpha-beta-stable-hardness-mult-mult}
    For any $\alpha\in(1,1.2)$, $\beta>0$,  there is no $f(k) n^{o(\sqrt{k})}$-time algorithm for $(\alpha, \beta)$-stable \kmd 
    in metrics of doubling dimension $3$, unless \ETH fails.
\end{restatable}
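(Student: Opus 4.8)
The plan is to reduce from \gti or a grid-structured variant of \pvclong, mirroring the standard machinery used to prove $n^{o(\sqrt k)}$ lower bounds for planar/low-dimensional \kmd (as in the Marx-style arguments underlying \cite{friggstad2019exact}'s hardness direction). First I would take the instance of \kmd in $\bR^2$ (doubling dimension $2$) constructed for Theorem~\ref{thm:k-median-penalty-alpha-beta-stable-hardness-mult-mult}, strip away the penalty gadgetry, and show how to simulate the effect of ``paying a penalty $P$ for point $x$'' by instead attaching to $x$ a cheap private facility location at distance exactly $P$ from $x$ and from everything else (placed, say, on a third coordinate axis), so that opening it costs nothing toward the center budget beyond what the optimal solution already spends. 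This is why the ambient metric grows from doubling dimension $2$ to doubling dimension $3$: the ``penalty pockets'' live in an extra orthogonal direction. The key correctness claim is a bijection between solutions of the penalty instance using $k$ centers and solutions of the penalty-free instance using $k' = k + (\text{number of penalty pockets})$ centers, preserving cost exactly, so that the ETH-hard gap and the $\sqrt{k}$ parameter dependence carry over (since the number of pockets is polynomially bounded and, more importantly, $k' = \Theta(k)$ up to the reduction's blow-up, keeping $\sqrt{k'} = \Theta(\sqrt k)$).

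The second and more delicate step is verifying that the resulting penalty-free instance is genuinely $(\alpha,\beta)$-stable for every $\alpha \in (1,1.2)$ and every $\beta > 0$. Here I would reuse the stability analysis from the penalty construction: the hard instance is designed so that the optimum clustering is ``rigid'' — any near-optimal clustering, and in particular any optimum of any $\alpha$-perturbation of the metric with $\alpha < 1.2$, must agree with the intended optimum on the global gadget structure (which cell of the grid each ``row/column gadget'' selects), differing only within a controlled set of at most $\beta$-many points (the $\beta$-closeness slack). The penalty pockets need to be shown not to break this: because each pocket facility sits at distance $P$ from its single client and $P$ is chosen strictly between the intra-cluster and inter-cluster distance scales, an $\alpha$-perturbation with $\alpha < 1.2$ cannot make it profitable to either (i) leave the pocket client unclustered by its private facility, or (ii) absorb a non-pocket point into a pocket — so the pocket structure is itself rigid, and stability of the whole instance follows from stability of the core grid gadget by the same perturbation-robustness argument, with the same $\beta$.

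Finally I would assemble the pieces: start from an \ETH-hard instance of, say, \pvclong or \textsc{Grid Tiling}, apply the planar \kmd hardness reduction to get a $(1,0)$- (hence trivially stable after the gadget rigidity argument) instance whose optimum encodes the solution and which lies in doubling dimension $3$, confirm the parameter bookkeeping ($k' = \Theta(k)$, instance size polynomial), and conclude that an $f(k)\,n^{o(\sqrt k)}$ algorithm for $(\alpha,\beta)$-stable \kmd in doubling dimension $3$ would solve the \ETH-hard problem in time $f'(k)\,n^{o(\sqrt k)}$, contradicting \ETH. The main obstacle I anticipate is \textbf{step two}: making the stability argument robust to the extra orthogonal dimension and the pocket facilities simultaneously — in particular ruling out ``mixed'' perturbed optima that pay a penalty for one grid cell and reroute its points through a neighbouring pocket, which requires carefully separating the three length scales (intra-cluster $\ll$ penalty $P$ $\ll$ inter-cluster) and checking the inequalities survive multiplication by any factor up to $1.2$ with only $\beta$ points of slack. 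A secondary subtlety is that the cost-preserving bijection must hold not just at optimum but for the perturbed metrics as well, since $(\alpha,\beta)$-stability quantifies over all $\alpha$-perturbations, so the pocket-simulation lemma has to be stated and proved uniformly over the perturbation family rather than only for the unperturbed instance.
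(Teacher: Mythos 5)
Your high-level intuition --- simulate penalties by adding extra centres in a higher-dimensional space that points can ``fall back'' to --- is exactly the spirit of the paper's construction, but two concrete choices in your proposal break it, and the paper's proof differs in precisely the two places where your plan runs into trouble.

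First, you attach one private pocket facility \emph{per data point}. The grid-tiling reduction has $\varSigma = (2k/\varepsilon + n)^2$ data points with $\varepsilon = O(\beta/n^3)$, which is polynomial in $n$ and independent of $k$. If each of those points gets a mandatory pocket centre, your penalty-free instance has $k' = k^2 + \varSigma = \Theta(\operatorname{poly}(n))$ centres. You assert in the same breath that ``the number of pockets is polynomially bounded'' and that ``$k' = \Theta(k)$,'' but these cannot both hold. With $k' = \Theta(\operatorname{poly}(n))$, an $f(k')\,n^{o(\sqrt{k'})}$ algorithm gives you nothing contradicting the $n^{o(k)}$ lower bound for \gti, so the parameter bookkeeping collapses. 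The paper avoids this by introducing only $2k^2$ \emph{shared} penalty centres $C'' = \bigcup_{i,j}\{(2i-1,2j,1),(2i,2j-1,1)\}$, one in each ``corridor'' of the grid, and forcing them to be opened by stacking $\varSigma$ data points on top of each; the resulting instance is a $\xmd{3k^2}$ instance, so $\sqrt{k'} = \Theta(k)$ and the \ETH bound transfers.

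Second, you want the pocket facility to be ``at distance exactly $P$ from $x$ and from everything else,'' placed in an extra Euclidean coordinate. In Euclidean $\bR^3$ this is impossible: a point lifted to height $P$ above $x$ is at distance $\sqrt{\delta(x,y)^2 + P^2}$ from any other ground point $y$, not $P$. This is why the theorem statement says ``metrics of doubling dimension $3$'' and not ``$\bR^3$.'' The paper defines the non-Euclidean max-metric
$\delta^{pen}(\vec x,\vec y) := \max\bigl(\|(x_1,x_2)-(y_1,y_2)\|_2,\ |x_3-y_3|\bigr)$,
verifies the triangle inequality directly, and shows its balls are cylinders, giving doubling dimension $3$. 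Under this metric, a corridor centre at height $1$ is at distance \emph{exactly} $1$ from every ground point within horizontal distance $1$, so a single such centre acts as a uniform penalty-of-$1$ for an entire disk --- this is the mechanism that lets one centre replace polynomially many pockets. Your step two --- ``making the stability argument robust to the extra orthogonal dimension'' --- is also handled in the paper by observing that once the $2k^2$ corridor centres are forced open, the problem literally reduces to the penalty instance, and the perturbation argument from Theorem~\ref{thm:k-median-penalty-alpha-beta-stable-hardness-mult-mult} applies verbatim; no new rigidity analysis is needed. So both of the obstacles you flag as ``delicate'' are dissolved, not overcome, by choosing the max-metric and corridor placement instead of per-point Euclidean pockets.
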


Finally, to complement our algorithmic upper bounds for $\alpha$-stable instances of \kms/\kmd (with penalties), we ask the question: \emph{how close to one can $\alpha$ get to ensure the instance is still efficiently solvable?} While we do not obtain \textsf{APX}-hardness, we manage to show that for inverse polynomial values of $\varepsilon$, the problem is not likely to have an \FPT solution.

\begin{restatable}[Hardness of $(1 + \frac{1}{poly(n)})$-Stable Penalty \kmd]{theorem}{kmedpeninversehardness}\label{thm:k-median-alpha-hardness-panlty}
    There is no $f(k) n^{o(k)}$-time algorithm for $(1 + \bigO{\frac{1}{n^{12}}})$-stable Euclidean \kmd with penalties in $\bR^4$, unless \ETH fails. 
\end{restatable}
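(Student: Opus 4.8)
The plan is to prove Theorem~\ref{thm:k-median-alpha-hardness-panlty} by a reduction from a problem known to require $f(k)n^{\Omega(k)}$ time under \ETH -- the natural candidate being $k$-\textsc{Clique} (or equivalently, the $k \times k$ \gti or multicoloured clique), which is \wonehard and, by the standard \ETH-based lower bound of Chen et al., admits no $f(k)n^{o(k)}$ algorithm. We build an instance of Euclidean \kmdp in $\bR^4$ whose optimal solution encodes a clique, in the spirit of the gadget constructions used for the hardness of low-dimensional clustering (cf.\ \citep{friggstad2019exact} and the grid-tiling machinery). The extra dimensions beyond $\bR^2$ give us room to place ``selection'' gadgets for vertices along some coordinates and ``consistency/edge'' gadgets along others, so that a set of $k$ centres of total cost below a threshold exists if and only if the chosen vertices form a clique; points that would be expensive to serve are instead absorbed by the penalty term, which is exactly what the penalty version buys us and what lets us keep the dimension constant.

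The crux, and the part that distinguishes this from a plain hardness reduction, is establishing that the constructed instance is $(1+\mathcal{O}(n^{-12}))$-\emph{stable} in the $(\alpha,\beta)$-sense, i.e.\ that under any metric perturbation by a factor in $[1, 1+\mathcal{O}(n^{-12})]$ the optimum solution stays $\beta$-close to the original optimum. The strategy is a gap amplification / padding argument: we scale all pairwise distances and penalties so that the cost of the intended (clique-encoding) solution is separated from the cost of every structurally different candidate solution by a \emph{multiplicative} gap of at least $(1+\mathcal{O}(n^{-12}))^{c}$ for a large enough constant (or slowly growing function) $c$ -- large enough that a perturbation of magnitude $1+\mathcal{O}(n^{-12})$ cannot reorder the two. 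Because the number of distinct ``clusterings up to the relevant equivalence'' in the gadget is polynomial in $n$, and each distance appears in the cost with polynomially bounded multiplicity, a $1+\mathcal{O}(n^{-12})$ perturbation changes any solution's cost by at most a $(1+\mathcal{O}(n^{-12}))$ factor, so a gap of this size on the cost \emph{ratio} survives; the exponent $12$ is then just whatever bookkeeping the multiplicities and the $\bR^4$ embedding force. This is the main obstacle: one has to simultaneously (i) keep the gap large enough to be perturbation-resilient, (ii) keep all coordinates and hence distances bounded by $\mathrm{poly}(n)$ so the perturbation factor really is $1+\mathcal{O}(n^{-12})$ and not smaller, and (iii) not destroy the YES/NO correctness of the reduction, since tightening the gadget too much can collapse the intended and unintended solutions.

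Concretely the steps are: (1) Recall the $f(k)n^{o(k)}$ lower bound for $k$-\textsc{Clique}/\gti under \ETH and fix an instance with $k$ colour classes of $n$ vertices each. (2) Describe the $\bR^4$ embedding: place a block of ``client'' points and candidate centre locations for each colour class along a dedicated pair of coordinates, using a grid so that a centre ``selects'' a vertex; add edge-consistency points that are cheap to serve only when the two selected endpoints across a pair of classes are adjacent, and otherwise are left to pay the penalty. (3) Set the penalty value and the distance scale so the optimal cost is some explicit $\Theta(\mathrm{poly}(n))$ quantity $T$ in the YES case and at least $T\cdot(1+\Omega(n^{-11}))$ in the NO case, and more importantly so that within the YES case every non-clique-faithful solution also costs at least $T\cdot(1+\Omega(n^{-11}))$. (4) Prove the stability claim: since every candidate solution's cost is a sum of at most $\mathrm{poly}(n)$ terms each scaled by the distance scale, a metric perturbation by $1+\mathcal{O}(n^{-12})$ multiplies every cost by a factor in $[1,1+\mathcal{O}(n^{-12})]$, which is strictly below the $\Omega(n^{-11})$ gap, so the unique optimum is preserved exactly (hence $\beta$-close for every $\beta>0$). (5) Conclude: an $f(k)n^{o(k)}$ algorithm for $(1+\mathcal{O}(n^{-12}))$-stable Euclidean \kmdp in $\bR^4$ would decide $k$-\textsc{Clique} in $f(k)n^{o(k)}$ time, contradicting \ETH. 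Throughout, the delicate bookkeeping is choosing the polynomial exponents ($12$ for the perturbation, $11$ for the gap) consistently with the number of clients and the grid resolution; I expect that to be where most of the real work -- and the only genuinely non-routine estimation -- lies.
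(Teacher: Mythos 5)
Your high-level strategy — reduce from a parameterized problem with an $f(k)n^{o(k)}$ \ETH lower bound, embed it in constant dimension, use penalties to absorb uncovered items, and argue stability via a multiplicative cost gap that dominates the perturbation budget — is in the right ballpark. But your proposal is missing the key technical idea the paper actually relies on, and several of your framing choices diverge from the paper in ways that matter.

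First, the source problem: the paper reduces from \pvclong (\pvc), not $k$-\textsc{Clique} or \gti. The \gti machinery you gesture at is what the paper uses for the $(\alpha,\beta)$-stable hardness in Section~\ref{sec:hard-alpha-beta}, where $\alpha$ is bounded away from $1$; it is \emph{not} used for the $(1+1/\mathrm{poly}(n))$-stable results. \pvc is a cleaner fit here because the clustering cost becomes a linear function of a single integer (the number of covered edges), which is what makes the stability argument go through.

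Second, and more importantly, your proposal does not say how to actually build the $\bR^4$ instance, and the "grid of selection/edge gadgets" you sketch is not obviously realizable in constant dimension with the required metric properties. The paper's key construction is to place the candidate centre for vertex $v_i$ on the 3-dimensional moment curve at $\tilde v_i = (i, i^2, i^3, 0)$, and for each edge $(v_i,v_j)$ to place its data point at the centre of the 2-sphere tangent to the moment curve at $\tilde v_i$ and $\tilde v_j$, lifted into the fourth coordinate so that every edge-point is at distance \emph{exactly} $r_q$ from its two endpoint centres (with $r_q$ the maximum such sphere radius). Lemma~\ref{lem:mom-curve-2}, via Descartes' rule of signs, guarantees that every \emph{other} vertex centre is strictly outside this sphere, and because all quantities in \eqref{eq:hardness-alpha-k-median-penalty-abc} are integers over $2$, the squared distance to any non-endpoint centre is at least $r_q^2 + \tfrac14$. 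That integrality gap is what produces the $\varepsilon = \Omega(1/n^{10})$ separation, and after dividing by $2m = O(n^2)$ to set the stability parameter $\varepsilon' = \varepsilon/(2m)$, you land on $1 + \Omega(1/n^{12})$. Your proposal treats the exponent $12$ as "bookkeeping," but it is a direct consequence of the degree-$5$ polynomials in the moment-curve coordinates (so $r_q^2 = O(n^{10})$) times the $O(n^2)$ number of edges — without this specific construction there is no reason the exponent would come out to $12$, or that any exponent would work at all.

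Third, your stability argument is not quite right as stated. You claim "the number of distinct clusterings up to the relevant equivalence in the gadget is polynomial in $n$" — it isn't: there are $\binom{n}{k}$ choices of centres. What actually saves the paper is that $\costp(S)$ depends only on the integer $s^*$ (number of edges covered), so the cost spectrum of feasible solutions is a polynomially-sparse arithmetic set with gaps of exactly $r_q\varepsilon$, and a $(1+\varepsilon')$ perturbation with $\varepsilon' = \varepsilon/(2m)$ moves any cost by at most $r_q(m + m\varepsilon)\varepsilon' < r_q\varepsilon$, which cannot jump a gap. You also describe the conclusion as "unique optimum is preserved," but the instance generally has many optima (every maximal partial vertex cover); the argument needs to show all optima of the perturbed instance are optima of the original, which is the $\alpha$-stability definition the theorem actually uses — not $(\alpha,\beta)$-stability as your opening sentence assumes.

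In short: the reduction shape is right, but you are missing the moment-curve embedding and the Descartes'-rule separation lemma, which together are what make the constant-dimensional construction and the $n^{-12}$ perturbation bound possible; replacing them with unspecified "grid gadgets" plus a generic "padding/gap-amplification" step does not yield a proof.
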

\vspace*{-0.5cm}
\begin{restatable}[Hardness of $(1 + \frac{1}{poly(n)})$-Stable \kmd]{theorem}{kmedinversehardness}\label{thm:k-median-alpha-hardness}
     There is no $f(k) n^{o(k)}$-time algorithm for $(1 + \bigO{\frac{1}{n^{16}}})$-stable Euclidean \kmd in $\bR^6$, unless \ETH fails.    
\end{restatable}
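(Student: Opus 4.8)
The plan is to reduce, in polynomial time, from the \gti problem on a $k\times k$ grid over alphabet $[N_0]$ --- which under \ETH has no $f(k)\,N_0^{o(k)}$-time algorithm --- to an instance of Euclidean \kmd in $\bR^6$ that has $\Theta(k)$ centres, $n=\mathrm{poly}(N_0,k)$ points, and is $(1+\bigO{1/n^{16}})$-stable by construction. Recall a \gti instance asks for $a_1,\dots,a_k,b_1,\dots,b_k\in[N_0]$ with $(a_i,b_j)\in S_{i,j}$ for all $i,j$, where $S_{i,j}$ is the relation attached to cell $(i,j)$. First I would create $k$ ``row-selector'' clusters (one per $a_i$) and $k$ ``column-selector'' clusters (one per $b_j$), each of which encodes a value in $[N_0]$ by which of $N_0$ closely spaced candidate sites it opens; then, for every cell $(i,j)$ and every legal pair $(v,w)\in S_{i,j}$, I would place a ``constraint point'' positioned so that it can be served at low cost whenever selector $i$ opens its site for $v$ and selector $j$ its site for $w$. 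Calibrating the distances so that a clustering has total cost below a threshold $C^\star$ exactly when the induced assignment satisfies every $S_{i,j}$, an exact \kmd solver decides \gti; and since the reduction increases the number of centres by only a constant factor, an $f(k)\,n^{o(k)}$-time algorithm would give (folding the $k^{o(k)}$ factor into $f$) an $f'(k)\,N_0^{o(k)}$-time algorithm for \gti, contradicting \ETH. An essentially equivalent route is to start from the hard penalty instances of Theorem~\ref{thm:k-median-alpha-hardness-panlty} in $\bR^4$ and replace the ``pay a penalty'' option by a geometric collector gadget; this is what accounts for the jump from $\bR^4$ to $\bR^6$ and from $n^{12}$ to $n^{16}$.

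Six coordinates are needed --- as opposed to the two that suffice for the $n^{o(\sqrt k)}$ bounds of Theorems~\ref{thm:k-median-penalty-alpha-beta-stable-hardness-mult-mult} and~\ref{thm:k-median-alpha-beta-stable-hardness-mult-mult} --- because the planar constructions realise the $k\times k$ grid as $\Theta(k^2)$ cell-gadgets laid out in $\bR^2$, and one cannot pack only $\Theta(k)$ centres into the plane while still enforcing a quadratic number of pairwise constraints; with a linear number of centres, the $\Theta(k^2)$ constraints between the $k$ row-selectors and the $k$ column-selectors have to be routed through additional dimensions. Concretely I would use one coordinate pair to carry the ``which row / which value'' data of the row-selectors, a second pair for the column side, and a third ``matching'' pair in which a row-selector's site for $v$ and a column-selector's site for $w$ are made to (almost) coincide precisely when $(v,w)$ is legal, so a constraint point near both can be cheaply absorbed by either; two of these coordinates also give the room to install the collector gadget that removes penalties.

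The hard part will be showing that \emph{every} output instance --- in the NO case as well as the YES case --- really is $(1+\bigO{1/n^{16}})$-stable, i.e.\ has a unique optimal clustering that stays optimal under all $(1+\bigO{1/n^{16}})$-perturbations of the metric. I would split this into two ingredients. The first is a gap lemma: by taking all coordinates to be integers bounded by $\mathrm{poly}(N_0)$, one argues that any clustering deviating from the ``intended'' selector clusterings --- by splitting or merging selector gadgets, opening a non-candidate site, mishandling constraint or collector points, or inducing an assignment that violates some $S_{i,j}$ --- has cost at least $C^\star+\gamma$ for a fixed $\gamma=\Omega(1/\mathrm{poly}(N_0))$, while the optimum itself costs $\bigO{\mathrm{poly}(N_0)}$; since a $(1+\varepsilon)$-perturbation only increases costs and scales each clustering's cost by a factor at most $1+\varepsilon$, picking $\varepsilon$ of order $1/n^{16}$ --- small enough that $(1+\varepsilon)C^\star<C^\star+\gamma$ --- prevents any deviating clustering from overtaking the optimum under any perturbation. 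The second is uniqueness: when several intended clusterings would tie (in particular when the \gti instance has several solutions), I would add to each candidate site a distinct, inverse-polynomially small generic offset so that exactly one intended clustering becomes strictly cheapest, while $\gamma$ is preserved and the instance remains $(1+\bigO{1/n^{16}})$-stable. One then checks that on the promised-stable instances the exact algorithm returns cost $C^\star$ (up to the tie-breaking fuzz) on YES-instances of \gti and at least $C^\star+\gamma/2$ on NO-instances, so that thresholding decides \gti in the claimed time. The main obstacle I anticipate is carrying out the case analysis inside the gap lemma --- enumerating every way a clustering can depart from the intended structure and lower-bounding the extra cost in each case --- \emph{simultaneously} with verifying correctness of the constraint and collector gadgets and keeping the whole construction inside $\bR^6$.
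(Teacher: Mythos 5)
Your primary proposal reduces from \gti via row/column selector gadgets with $\Theta(k)$ centres in $\bR^6$; the paper does not do this. It reduces from \pvclong, placing one candidate centre on the $4$-dimensional moment curve for each vertex of the graph, one special ``collector'' centre $\tilde z^*$, and one data point per edge at the centre of the $3$-sphere that is tangent to the moment curve at the two incident vertices and passes through $z^*$ (this is precisely why Lemma~\ref{lem:mom-curve} is stated for a sphere through one fixed point and tangent at two others). The paper uses \gti only in Section~\ref{sec:hard-alpha-beta}, where it yields $n^{o(\sqrt k)}$ because the reduction produces $k^2$ centres; it switches to \pvc in Section~\ref{sec:hard-alpha} exactly because \pvc gives $\Theta(k)$ centres for free. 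A $\Theta(k)$-centre encoding of \gti in constant dimension via selector gadgets is a nontrivial claim that you assert but do not construct, and it is not how the paper gets there.

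Your alternative route --- take the penalty instance of Theorem~\ref{thm:k-median-alpha-hardness-panlty} and replace ``pay the penalty'' with a collector gadget --- is the conceptually correct picture: the paper adds $\tilde z^*$ as a candidate centre carrying $\lceil m r_q\rceil$ collocated data points, so that any optimum must open $\tilde z^*$, and a data point is ``penalised'' when it is served by $\tilde z^*$ at distance $(1+\varepsilon)r_q$. But you leave it undeveloped, and the features you guess at for the extra two dimensions are not the right ones: the jump is one coordinate to make the moment curve pass through $z^*$ while preserving the tangent-sphere property, plus two lifting coordinates (one to equalise all edge radii to $r_q$, one to position the collector mass at distance $\sqrt{r_q^2+\tfrac14}$ off the curve). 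You also do not account for the exponent $16$: it comes from the explicit solution \eqref{eq:hardness-alpha-k-median-abcd} having degree $\le 7$ in $\tilde i,\tilde j$, hence $r_q^2=O(n^{14})$, $\varepsilon=\Omega(1/n^{14})$, and finally $\varepsilon'=\varepsilon/2m=\Omega(1/n^{16})$. Without the moment-curve structure, there is no reason your gap $\gamma$ would be $\Omega(1/\mathrm{poly}(n))$ rather than exponentially small.

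Two smaller points. Your gap-lemma strategy (show every deviating clustering is additively $\gamma$ worse, then pick $\varepsilon$ so that $(1+\varepsilon)\opt<\opt+\gamma$) is the right shape of argument and matches the paper's final contradiction. However, your tie-breaking by generic offsets solves a problem the paper does not have: the paper's definition of $\alpha$-stability in Section~\ref{sec:prelim} explicitly permits multiple optima, so no uniqueness perturbation is needed, and adding one would complicate rather than help the stability calculation.
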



\section{Preliminaries} \label{sec:prelim}


Consider a doubling metric space $(V, \delta)$, where $V$ is a set of points and $\delta$ a distance function over $V$ with doubling dimension $d$. We focus on stable instances of discrete clustering problems $\kms$ and $\kmd$, with and without penalties, over $V$ with metric~$\delta$. Let $X \subseteq V$ be a finite set of data points, $C \subseteq V$ a finite set of candidate centres, $p:  X \to \bR$ a penalty function, $\delta: V \times V \to \bR$ a metric, and $k$ an integer. We denote a \kms/\kmd instance as a triple $(X, C, \delta)$, and a \kms/\kmd with penalties instance as a four-tuple $(X, C, p, \delta)$. A feasible solution to the instance is a set $S \subseteq C$ with $|S|=k$. For \kms (\kmd) the cost of solution is defined as $\cost(S):=\sum_{j \in X} \delta(j, S)^2$ ($\cost(S):=\sum_{j \in X} \delta(j, S)$), and for \kms (\kmd) with penalties the cost of solution is defined as $\costp(S):=\sum_{j \in X} \min\left(\delta(j, S)^2, p(j)\right)$ ($\costp(S):=\sum_{j \in X} \min\left(\delta(j, S), p(j)\right)$), where  $\delta(j, S)$  is defined as $\min_{i \in S} \delta(j, i)$.

Next, we define $\alpha$-stable clustering. Intuitively, $\alpha$-stability for $\kms / \kmd$ ensures that the optimal solution for a perturbed metric remains optimal for the original metric. We follow the concept of multiplicative stability, first introduced by \cite{ABS10} and \cite{BL12}, but we present a more general version here. Let $\alpha \geq 1$. We call an instance $\cI=(X, C, \delta)$ of \kms/\kmd \emph{$\alpha$-stable} if for any $\delta \leq \delta^\prime \leq \alpha \delta$ every optimum solution of the \kms/\kmd (with penalties) instance $\cI^{\prime}=\left(X, C, \delta^{\prime} \right)$ is also an optimum solution for $\cI$. Similarly, an instance $\cI=(X, C, \delta, p)$ of \kms/\kmd with penalties is called  \emph{$\alpha$-stable} if for any $\delta \leq \delta^\prime \leq \alpha \delta$ and any $p'$ with $p \leq p' \leq \alpha^2 \cdot p$ for \kms and $p \leq p^\prime \leq \alpha \cdot p$ for \kmd, every optimum solution of the \kms/\kmd with penalties instance $\cI^\prime=(X, C, \delta^\prime, p^\prime)$ is also an optimum solution for $\cI$. \farrd{removed definition of strongly stable.}
Let $\cost(S)$ and $\cost'(S)$ denote the cost of $S$ before and after perturbation. We also study $(\alpha, \beta)$-stable instances of \kms and \kmd, as defined by \cite{BL16}. Recall that in this generalization, $\beta$ is a measure of proximity between feasible solutions. Various distance metrics can be considered to define $\beta$-closeness. We adopt the notion $\distbij{\cI}: \binom{C}{k} \times \binom{C}{k} \to \bR^{\geq 0}$, defined as follows. However, all our results  easily extend to other natural notions of distance. For any $S_1, S_2 \subseteq C$ with $|S_1| = |S_2| = k$, define: 
\begin{equation}
    \distbij{\cI}(S_1, S_2) := \min \left\{ \sum\nolimits_{s_1 \in 
S_1 } \delta(s_1, f(s_1)) \mid f \text{ is a bijection between $S_1$ and $S_2$} \right\}.
\end{equation}
We also generalize this notion to clustering with penalties. Let $\alpha \geq 1$ and $\beta \geq 0$. We call an instance $\cI=(X, C, \delta)$  ($\cI=(X, C, \delta, p))$ of \kms/\kmd (with penalties) $(\alpha, \beta)$-stable if for any optimum solution $O \subseteq C$ of $\cI$, and any optimum solution $O^\prime \subseteq C$ of the \kms/\kmd instance (with penalties) $\cI^{\prime}=\left(X, C, \delta^{\prime}\right)$ ($\cI^\prime=(X, C, \delta^\prime, p^\prime))$, where $\delta^\prime$ is any symmetric function such that $\delta \leq \delta^\prime \leq \alpha \delta$ (and $p'$ is any function such that $p \leq p' \leq \alpha^2 \cdot p$ for \kms, and $p \leq p^\prime \leq \alpha \cdot p$ for \kmd), we have $\distbij{\cI}(O, O^\prime) \leq \beta$.


Observe that setting $\beta = 0$ yields the definition of $\alpha$-stable instances. If $\beta > 0$, we call the instance of \kms/\kmd  an \emph{almost stable} instance. Below, we study the performance of a local search algorithm for cases where $\alpha$ is close to 1, that is $\alpha = 1 + \varepsilon'$. To avoid confusion, we  use $\varepsilon'$ to refer to the stability parameter, and $\varepsilon$ to refer to a parameter of our local search analysis. It should be mentioned that $\varepsilon$ only affects the runtime of the local search and is a function of $\varepsilon'$.

\section{Polynomial Algorithms for Stable Clusterings} \label{sec:pol}
In this section, we build upon the exact polynomial-time algorithm of \cite{friggstad2019exact} \farrd{made some minor changes here}
and design a version that incorporating penalties. Despite high-level similarities, our analysis departs where needed to handle penalty terms. We first show how to obtain optima for $\kms$ without penalties in doubling metrics (see Theorem~\ref{thm:warmup}). Next, we provide a polynomial-time algorithm for $\kms$ with penalties (see Theorem~\ref{thm:almost_stable_pen}). Let $(X, C, p, \delta)$ be the given instance of  \kms with penalties. Define $\cF_k = \{S \subseteq C:|S|=k\}$ to be the set of feasible solutions. Throughout this section, we use Algorithm~\ref{alg:rho-swap}, a generalization of the local search method introduced by \cite{friggstad2019exact} to $\kms$ with penalties. Their original algorithm is simply the standard $\rho$-swap local search algorithm, modified to perform the swap that provides the best improvement at each step. We set $\rho=d^{O(d)} \cdot \varepsilon^{-O(d / \varepsilon)}$.

\begin{algorithm} 
\caption{$\rho$-Swap Local Search}
\textbf{Input:} Let $\cF_k$ be the family of feasible sets. Let $S$ be any set in $\cF_k$. \\
\textbf{Output:} The local optimal set $S$

\begin{algorithmic}[1] \label{alg:rho-swap}
\WHILE {there exists a set $S' \in \cF_k$ such that $|S - S'| \leq \rho$ and $\cost(S') < \cost(S)$ ($\costp(S') < \costp(S)$)}
    \STATE $S \gets \arg \min_{S' \in \cF_k, |S - S'| \leq \rho} \cost(S')$ ($\arg \min_{S' \in \cF_k, |S - S'| \leq \rho} \costp(S')$ )
\ENDWHILE
\STATE \textbf{return} $S$
\end{algorithmic}
\end{algorithm}

To expand Friggstad et al.'s terminology to the case with penalties, 
for any $S, O \in \cF_k$, define:
\begin{itemize}
    \item For $j \in X$,
    let $\sigma(j, S)$ be the centre in $S$ nearest to  $j$, breaking ties by a fixed ordering of $C$. 
    \item $\overline{X}_{S, O}=\{j \in X: \sigma(j, S) \in S-O$ and $\sigma(j, O) \in O-S\}$. 
    \item $\overline{X}^{pen}_{S, O}=\{j \in X: \sigma(j, S) \in S-O$ and $\sigma(j, O) \in O-S$ and $\delta(j, S)^2, \delta(j, O)^2 < p(j)$$\}$.
    \item $\Psi(S, O)=\sum_{j \in \overline{X}_{S, O}} \delta(j, \sigma(j, S))^2+\delta(j, \sigma(j, O))^2$, $\Psi_{pen}(S, O)=\sum_{j \in \overline{X}^{pen}_{S, O}} \delta(j, \sigma(j, S))^2+\delta(j, \sigma(j, O))^2$.
\end{itemize}

\cite{friggstad2019exact} first showed that Algorithm~\ref{alg:rho-swap} visits a ``nearly-good'' solution in a polynomial number of rounds and returns such a solution upon termination. Next, they used the properties of stable instances to show that any nearly-good solution must be the unique optimal solution, hence Algorithm~\ref{alg:rho-swap} is an exact polynomial-time algorithm to solve the problem. We slightly modify the notion of nearly-good solutions to handle technical aspects of our generalized definition of stability. 


\farrd{We only need one nearly-good definition. I changed the definition.}

\begin{definition}\label{def:nearly-good}
    We call $S \in \cF_k$ a \emph{nearly-good} solution if for every optimum solution $F \in \cF_k$ we have $\cost(S) \leq \cost(F)+2 \varepsilon \cdot \Psi(S, F)$ for $\kms$ and $\costp(S) \leq \costp(O)+ 2 \varepsilon \cdot \Psi_{pen}(S, F)$ for $\kms$ with penalties. 
\end{definition} 



Using a by now standard technique introduced by \cite{friggstad2019local}, we can show the existence of a desirable partitioning of the set of centres via a probabilistic argument. Fix any $S', O', T \subseteq C$ and $X' \subseteq X$ such that $S'$ and $O'$ are disjoint, $T \subseteq O' \cup S'$, and $|T| = |O'| = |S'|$. Define $\Delta_j^T$ for each $j \in X'$ to be $\delta\left(j, \sigma\left(j, S^{\prime} \triangle T\right)\right)^2-\delta\left(j, \sigma\left(j, S^{\prime}\right)\right)^2$. $\Delta_j^T$ denotes the change in $j$'s assignment cost when solution $S'$ is replaced by $S' \Delta T$. One obtains the following.

\begin{theorem}[Theorem 5 in \citep{friggstad2019exact}] \label{thm:rand-partition}
    There is a randomized algorithm that samples a partition $\pi$ of $S^{\prime} \cup O^{\prime}$ such that $\left|T \cap S^{\prime}\right|=\left|T \cap O^{\prime}\right| \leq \rho$ for each $T \in \pi$ and $$\mathbf{E}_\pi\left[\sum\nolimits_{T \in \pi} \sum\nolimits_{j \in X'} \Delta_j^T\right] \leq \sum\nolimits_{j \in X'}(1+\varepsilon) \cdot \delta\left(j, O^{\prime}\right)^2-(1-\varepsilon) \cdot \delta\left(j, S^{\prime}\right)^2\,.$$
\end{theorem}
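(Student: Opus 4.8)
The plan is to reduce the statement to the construction of a suitable randomized laminar decomposition of the finite point set $S'\cup O'$ inside $(V,\delta)$, and then run a per-point charging argument. Concretely, I would first build a ``split-tree'': a hierarchical family of regions at geometrically decreasing scales $\dots,r/\tau,r,\tau r,\dots$, where each scale-$r$ region has diameter $O(r)$ and refines into at most $\tau^{O(d)}$ scale-$(r/\tau)$ subregions (this bounded branching is exactly the doubling property), and where a uniformly random shift is applied to the levels. The random shift buys the standard Lipschitz/padding guarantee for decompositions of doubling metrics: for any two centres $c,c'$ at distance $\ell$, the probability that they are first separated at (or below) scale $r$ is $O(d\,\ell/r)$. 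Given such a tree, define $\pi$ by the cutting rule: walk down from the root and make a region $R$ a part of $\pi$ --- contributing the group $T_R:=R\cap(S'\cup O')$ --- as soon as $|R\cap(S'\cup O')|\le\rho$ (so its parent has $>\rho$ centres). Choosing scale ratio $\tau=\Theta(1/\varepsilon)$ and allowing $O(1/\varepsilon)$ levels of ``slack'' before a cut, a direct count yields the claimed $\rho=d^{O(d)}\varepsilon^{-O(d/\varepsilon)}$.

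Next I would enforce the balance requirement $|T\cap S'|=|T\cap O'|$. The cut rule only guarantees both sides have size at most $\rho$, but since $|S'|=|O'|$ globally the surpluses over all parts cancel, so I would run a greedy matching between parts with excess $S'$-centres and parts with excess $O'$-centres and merge matched parts (equivalently, re-split each part into balanced sub-swaps). This only changes constants in $\rho$ and preserves that each resulting group still lies in a bounded-diameter region, which is all the charging below uses. (Alternatively one can follow the bookkeeping of \cite{friggstad2019exact} here verbatim.)

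The core is the expectation bound. Fix $j\in X'$ and write $s_j=\sigma(j,S')$, $o_j=\sigma(j,O')$, and let $T(s_j)$ be the part of $\pi$ containing $s_j$. Since $\pi$ partitions $S'\cup O'$, for every part $T$ with $s_j\notin T$ the centre $s_j$ still lies in $S'\triangle T$, so $\Delta_j^T\le 0$; hence $\sum_{T\in\pi}\Delta_j^T\le \Delta_j^{T(s_j)}$. In $S'\triangle T(s_j)$ the centre $s_j$ is removed, so I reassign $j$: if $o_j\in T(s_j)$ (so $o_j$ is added) then $\Delta_j^{T(s_j)}\le\delta(j,O')^2-\delta(j,S')^2$; otherwise I reassign $j$ to the surviving centre nearest to $o_j$ and bound the detour by the triangle inequality in terms of $\delta(s_j,o_j)$ and the region diameter at the relevant scale. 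The second event occurs only if the split-tree separates $s_j$ from $o_j$ before the cut, which by the Lipschitz guarantee happens with probability $O(d\,\delta(s_j,o_j)/r)$ at scale $r$; summing the geometrically decaying contributions over scales and using $\delta(s_j,o_j)\le\delta(j,S')+\delta(j,O')$ together with $2ab\le a^2+b^2$, the expected excess attributable to $j$ is at most $\varepsilon\big(\delta(j,O')^2+\delta(j,S')^2\big)$ provided $\tau$ is large enough (this is what pins $\tau=\Theta(1/\varepsilon)$, hence $\rho$). Adding $\delta(j,O')^2-\delta(j,S')^2$ from the first event and summing over $j\in X'$ gives the stated bound.

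The main obstacle is making the two requirements coexist \emph{quantitatively}: the cutting scale must be coarse enough that the per-point separation probability (hence the expected detour) is only an $\varepsilon$-fraction of $j$'s assignment cost, while the branching between that scale and the scale where the centre set spreads out must stay below $\rho$. Because for \kms the detour enters as a squared distance, the decomposition must in addition keep $s_j$ and $o_j$ together whenever $\delta(s_j,o_j)$ is much smaller than the local region scale, so that the ``bad'' case only involves pairs whose separation distance is comparable to the scale and the squared-triangle-inequality blow-up is absorbed by the small probability. Balancing these constraints is precisely the calculation that produces the stated $\rho$, and it is the step I expect to be the most delicate.
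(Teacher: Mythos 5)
Your reconstruction follows the same randomized split-tree strategy that the paper imports wholesale from \cite{friggstad2019exact} (the paper cites this as Theorem 5 of that work and does not re-prove it): a randomly shifted hierarchical decomposition of $S'\cup O'$ in the doubling metric, a cut at the first scale with at most $\rho$ centres per region, a rebalancing step to enforce $|T\cap S'|=|T\cap O'|$, and a per-point charge in which $\sum_{T}\Delta_j^T\le\Delta_j^{T(s_j)}$ and the scale-by-scale separation probability is absorbed into an $\varepsilon$-fraction of $\delta(j,S')^2+\delta(j,O')^2$ via $2ab\le a^2+b^2$. The only soft spot is the ``greedy-match-and-merge'' rebalancing, which as you yourself note must be replaced by the cited paper's more careful recursive sub-swap bookkeeping to keep merged groups both size-bounded and spatially coherent; that is a constant-factor bookkeeping issue rather than a conceptual gap, so the proposal is essentially the same approach.
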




Next, we observe that if a feasible solution $S$ is not nearly-good according to our more strict definition, it still holds that the local search algorithm must take a large step on $S$:

\begin{theorem}[Theorem 6 in \citep{friggstad2019exact})] \label{thm:cost-drop}
    For any $S, O \in \cF_k$, if $\cost(S)>\cost(O)+\varepsilon \cdot \Psi(S, O)$, then there exists an $S^{\prime} \in \cF_k$ with $\left|S-S^{\prime}\right| \leq \rho$ and $\cost\left(S^{\prime}\right) \leq \cost(S)+\big(\cost(O)-\cost(S)$ $+ \varepsilon \cdot \Psi(S, O)\big)/k$.
\end{theorem}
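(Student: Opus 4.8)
The plan is to reuse the local-search lemma of \cite{friggstad2019exact} essentially verbatim, since the statement above is its unpenalised special case: the argument combines the random exchange of Theorem~\ref{thm:rand-partition} with a reassignment bound and a final averaging over the blocks of the partition. First I would write $A := S\setminus O$ and $B := O\setminus S$, so that $|A| = |B| =: m$, and observe that under the hypothesis $\cost(S) > \cost(O) + \varepsilon\,\Psi(S,O)$ we have $S\neq O$, hence $1\le m\le k$. Then I would invoke Theorem~\ref{thm:rand-partition} with $S':=A$, $O':=B$, and with $X'$ the set of data points whose service cost can be affected by an exchange between $A$ and $B$. This produces a random partition $\pi$ of $A\cup B$ into at most $m$ nonempty balanced blocks $T$ (each with $|T\cap A| = |T\cap B|\le\rho$) such that
\[
\mathbf{E}_\pi\Bigl[\textstyle\sum_{T\in\pi}\sum_{j\in X'}\Delta_j^T\Bigr]\ \le\ \sum_{j\in X'}(1+\varepsilon)\,\delta(j,B)^2 - (1-\varepsilon)\,\delta(j,A)^2 ,
\]
and for each block $T$ the set $S\triangle T$ is feasible with $|S-(S\triangle T)| = |T\cap A|\le\rho$.

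Next I would bound $\cost(S\triangle T)$ from above by reassigning data points inside $S\triangle T$: a point served in $S$ by a centre of $S\cap O$ keeps that centre (it survives every exchange, so such a point can only help); a point served by a centre of $A$ that $T$ does not remove keeps its centre; and a point whose $S$-centre is removed by $T$ is reassigned using the block structure guaranteed by Theorem~\ref{thm:rand-partition}. This gives $\cost(S\triangle T) - \cost(S)\le \sum_{j\in X'}\Delta_j^T$ for every $T$, and after taking expectations and simplifying the right-hand side one obtains the crucial estimate
\[
\mathbf{E}_\pi\Bigl[\textstyle\sum_{T\in\pi}\bigl(\cost(S\triangle T)-\cost(S)\bigr)\Bigr]\ \le\ \cost(O)-\cost(S)+\varepsilon\,\Psi(S,O)\ =:\ Z ,
\]
where the slack is $\varepsilon\,\Psi(S,O)$ (and not a cruder $\varepsilon(\cost(O)+\cost(S))$) precisely because only the crossing points $\overline{X}_{S,O}$ pay the $(1\pm\varepsilon)$ multiplicative loss, while all remaining points contribute exactly to the global difference $\cost(O)-\cost(S)$.

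To conclude, note that the hypothesis makes $Z<0$. I would fix any realisation of $\pi$ with $\sum_{T\in\pi}(\cost(S\triangle T)-\cost(S))\le Z$ (one exists because the expectation is at most $Z$); since $\pi$ has at most $k$ blocks and $Z<0$, its best block $T^\star$ satisfies $\cost(S\triangle T^\star) - \cost(S)\le Z/|\pi|\le Z/k$. Then $S\triangle T^\star\in\cF_k$, $|S-(S\triangle T^\star)|\le\rho$, and $\cost(S\triangle T^\star)\le \cost(S) + \bigl(\cost(O)-\cost(S)+\varepsilon\,\Psi(S,O)\bigr)/k$, which is the claim.

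\textbf{Main obstacle.} The delicate step — the reason this is a theorem rather than a one-line averaging argument — is establishing the crucial estimate with the \emph{tight} slack $\varepsilon\,\Psi(S,O)$: one must apply Theorem~\ref{thm:rand-partition} with exactly the right $S',O',X'$ and then partition the data points according to where they are served in $S$ versus in $O$ (in particular handling points served by $S\setminus O$ in $S$ but by $S\cap O$ in $O$), so that the multiplicative $(1\pm\varepsilon)$-loss is charged only against the crossing points counted by $\Psi(S,O)$. A weaker slack here would make the downstream local-search analysis collapse. This bookkeeping is exactly what \cite{friggstad2019exact} carry out, and it transfers without modification since our statement concerns only the unpenalised cost.
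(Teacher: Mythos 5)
Your proposal is correct and reconstructs precisely the argument from \cite{friggstad2019exact} (random balanced partition via Theorem~\ref{thm:rand-partition}, crossing-point bookkeeping to keep the slack at $\varepsilon\,\Psi(S,O)$, then averaging over the at-most-$k$ blocks), which is exactly what the paper relies on: it does not reprove the statement but simply cites Friggstad et al.\ and observes that their argument never uses global optimality of $O$ or local optimality of $S$. So the approach is the same; you have merely made explicit what the paper leaves as a citation.
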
 

Rather than recreating Friggstad et al.'s proof of the theorem, we simply observe that at no point does their proof require $O$ to be globally optimal or $S$ to be locally optimal.  Theorem~\ref{thm:cost-drop} implies that Algorithm~\ref{alg:rho-swap} converges to a 
nearly-good solution in  polynomially many iterations. Due to space constraints, we defer the proof to the appendix.

\begin{lemma} \label{lem:rho-swap-terminate}
     Given any instance of the \kms problem, Algorithm~\ref{alg:rho-swap} with $\rho=d^{O(d)} \cdot \varepsilon^{-O(d / \varepsilon)}$ terminates at a nearly-good solution. Also, within $2 k \cdot \ln (n \Delta)$ iterations, the algorithm produces a nearly-good solution, where $\Delta=\max _{j \in X, i \in C} \delta^2(i, j)$.
\end{lemma}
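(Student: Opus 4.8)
## Proof Plan for Lemma \ref{lem:rho-swap-terminate}

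The plan is to combine Theorem~\ref{thm:cost-drop} with a potential-function/amortization argument in the style of standard local-search convergence proofs. First I would observe that Theorem~\ref{thm:cost-drop} gives us exactly the per-iteration progress guarantee we need, once we instantiate $O$ to be a fixed optimum solution. So fix an optimum $O \in \cF_k$ for the rest of the argument. The algorithm, at each iteration, is allowed to perform \emph{any} swap of size at most $\rho$ that strictly decreases the cost, and it picks the best such swap. Hence at an iteration where the current solution $S$ is \emph{not} nearly-good, we have (taking $F = O$ in Definition~\ref{def:nearly-good}) $\cost(S) > \cost(O) + 2\varepsilon \cdot \Psi(S, O) > \cost(O) + \varepsilon \cdot \Psi(S,O)$, so the hypothesis of Theorem~\ref{thm:cost-drop} is met. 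That theorem then hands us a feasible $S'$ with $|S - S'| \le \rho$ and
\[
\cost(S') \le \cost(S) + \frac{\cost(O) - \cost(S) + \varepsilon \cdot \Psi(S,O)}{k}.
\]
Since $S$ is not nearly-good, $\cost(S) - \cost(O) > 2\varepsilon \cdot \Psi(S,O) \ge \varepsilon \cdot \Psi(S,O)$, so the numerator $\cost(O) - \cost(S) + \varepsilon\cdot\Psi(S,O)$ is strictly negative; write it as $-(\cost(S) - \cost(O) - \varepsilon\cdot\Psi(S,O))$. This yields a strict decrease, so the $\arg\min$ step of Algorithm~\ref{alg:rho-swap} will move to a solution at least as good as $S'$. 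Therefore the loop only terminates at a nearly-good solution — that is the first claim.

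For the iteration bound I would set up the quantity $\Phi(S) := \cost(S) - \cost(O) \ge 0$ and track how it shrinks. From the displayed inequality, if the current $S$ is not nearly-good then the next solution $\tilde S$ (the $\arg\min$, which is at least as good as $S'$) satisfies
\[
\Phi(\tilde S) \le \Phi(S) - \frac{\cost(S) - \cost(O) - \varepsilon\cdot\Psi(S,O)}{k}.
\]
To turn this into geometric decay I want to bound $\varepsilon \cdot \Psi(S,O)$ against $\cost(S) - \cost(O)$. The cleanest route: observe that $\Psi(S,O) = \sum_{j \in \overline X_{S,O}} \delta(j,\sigma(j,S))^2 + \delta(j,\sigma(j,O))^2$, and each term $\delta(j,\sigma(j,O))^2$ is part of $\cost(O)$, while each $\delta(j,\sigma(j,S))^2$ is part of $\cost(S)$; more usefully, since $S$ is not nearly-good we have $2\varepsilon\cdot\Psi(S,O) < \cost(S) - \cost(O) = \Phi(S)$, hence $\varepsilon \cdot \Psi(S,O) < \Phi(S)/2$. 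Plugging in, $\Phi(\tilde S) \le \Phi(S) - \frac{\Phi(S) - \Phi(S)/2}{k} = \Phi(S)\left(1 - \frac{1}{2k}\right)$. So as long as the algorithm has not yet reached a nearly-good solution, $\Phi$ contracts by a factor $(1 - \tfrac{1}{2k})$ each iteration.

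It remains to convert this contraction into the stated bound of $2k \ln(n\Delta)$ iterations. Initially $\Phi(S) = \cost(S) - \cost(O) \le \cost(S) \le n \cdot \Delta$ (each of the $n$ points contributes at most $\Delta = \max_{j,i}\delta^2(i,j)$), so after $t$ iterations without hitting a nearly-good solution, $\Phi \le n\Delta \cdot (1 - \tfrac{1}{2k})^t \le n\Delta \cdot e^{-t/(2k)}$. The subtle point — and the one requiring a little care — is the stopping criterion: $\Phi$ can take arbitrarily small positive values, so I cannot simply say "$\Phi < 1 \Rightarrow$ done." Instead I argue as follows: take $t = 2k\ln(n\Delta)$, so $\Phi \le 1$ at that point, and then appeal directly to the structural dichotomy from the first part — if the solution at step $t$ is still not nearly-good, Theorem~\ref{thm:cost-drop} forces yet another strict improvement, but combined with the contraction this means we must at some step $t' \le 2k\ln(n\Delta)$ first reach a solution that \emph{is} nearly-good (or reach the global optimum $O$ itself, where $\Phi = 0$ and which is trivially nearly-good). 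Making this last step fully rigorous — ruling out an infinite tail of tiny-but-positive improvements before reaching nearly-goodness — is the main obstacle; the resolution is to note that nearly-goodness is not a statement about $\Phi$ being small in absolute terms but about $\Phi \le 2\varepsilon\Psi(S,O)$, and the contraction argument shows that within $2k\ln(n\Delta)$ steps we exit the regime "$\Phi > 2\varepsilon\Psi(S,O)$" at least once, which is precisely the first time we land on a nearly-good solution. Since $\varepsilon$ and $\rho$ depend only on $d$ and $\varepsilon'$ (constants here), each iteration — enumerating all swaps of size $\le \rho$ — takes polynomial time, completing the claim. I would defer the fully detailed case analysis of the stopping condition to the appendix as the lemma statement indicates.
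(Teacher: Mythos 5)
Your overall strategy — combining Theorem~\ref{thm:cost-drop} with a geometric-contraction argument on the potential $\Phi(S) = \cost(S) - \opt$ — is the same as the paper's, and your derivation of the $(1-\tfrac{1}{2k})$ contraction factor is correct (and arguably stated more cleanly than the paper's own justification). However, there are two genuine gaps.

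First, and most substantively: you fix one optimum $O$ at the outset and then assert that ``$S$ is not nearly-good'' yields $\cost(S) > \cost(O) + 2\varepsilon \cdot \Psi(S, O)$ for \emph{that} $O$. This does not follow. Definition~\ref{def:nearly-good} requires the inequality $\cost(S) \le \cost(F) + 2\varepsilon\Psi(S,F)$ to hold for \emph{every} optimum $F$, so ``not nearly-good'' only gives you the existence of \emph{some} witness optimum $F$ violating it — not necessarily the $O$ you fixed. This is precisely the subtlety the paper flags as the departure from \cite{friggstad2019exact}: their nearly-goodness is measured against a unique optimum, whereas here there may be many. The paper's fix is to choose a (possibly different) witness optimum $O_i$ at each iteration $i$ and apply Theorem~\ref{thm:cost-drop} with that $O_i$; since all optima share the value $\opt$, the potential $\cost(S_i) - \opt$ still contracts by $(1 - \tfrac{1}{2k})$ regardless of which witness is used. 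Your argument as written does not establish the contraction at any iteration where $S$ happens to be nearly-good with respect to your fixed $O$ but fails with respect to another optimum, and the same gap affects your proof of the first claim (that the loop only halts at a nearly-good solution): showing $S$ is nearly-good w.r.t.\ one $O$ is weaker than showing it is nearly-good.

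Second, you correctly flag but do not resolve the issue that $\Phi$ could shrink below $1$ yet remain positive, and your attempt to close this (``the contraction argument shows we exit the regime at least once'') is circular — the contraction is only available \emph{while} $S$ is not nearly-good, so it cannot by itself certify that nearly-goodness is ever reached. The paper resolves this cleanly by invoking integrality of costs: after $K = \lceil 2k \ln(n\Delta) \rceil$ iterations, $\Phi \le n\Delta \cdot e^{-K/(2k)} < 1$, and since costs are integral this forces $\Phi = 0$, i.e.\ $S_K$ is itself an optimum and hence trivially nearly-good — contradicting the assumption that no iterate up to $K$ was nearly-good. You would need to either invoke this integrality assumption or supply a substitute; as written, the termination argument is incomplete.
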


\subsection{Stable \kms without Penalties} \label{sec:pol-no-penalty}

Lemma~\ref{lem:rho-swap-terminate} alone does not guarantee that Algorithm~\ref{alg:rho-swap} finds an optimal solution for the $(1 + \varepsilon')$-stable instances of \kms. We use the stability condition to show that a
nearly-good solution must be one of the optimal ones for the instance. Combined with the fact that the returned solution of Algorithm~\ref{alg:rho-swap} is nearly-good 
we get the desired result:

\warmup*

\begin{proof} 
    \farr{I changed the proof here to not require strong stability.}
    \kam{We refrain from recreating the proof in entirety. Instead we refer interested readers to the proof of Lemma 2 in \citep{friggstad2019exact}.}\kamr{I would say we should omit this since we are invoking the needed lemmas from \cite{friggstad2019exact} anyway.}. We first perturb the distances as follows:

    \begin{equation}\label{eq:perturbed-dist}
        \delta^{\prime}(i, j)=\left\{\begin{tabular}{l l}
            $\left(1+\varepsilon^{\prime}\right) \cdot \delta(i, j)$    &   ; if $i \neq \sigma(j, S)$ \\
            $\delta(i, j)$                                              &   ; otherwise
    \end{tabular}\right.
    \end{equation}    

    We present a slightly modified version of Lemma 2 in \citep{friggstad2019exact}:
    
    \begin{lemma}[Lemma 2 of \citep{friggstad2019exact}, Restatement]
        Let $S, O \in \cF_k$ be feasible solutions. If $cost(S) \leq cost(O) + 2\varepsilon \Psi(S, O)$, then $cost'(S) \leq cost'(O)$.\sanrd{The sentence still isn't grammatical. There are still two ``if''s. How about: ``Let $S, O \in \cF_k$ be feasible solutions. If $cost(S) \leq cost(O) + 2\varepsilon \Psi(S, O)$, then $cost'(S) \leq cost'(O)$.''?}
    \end{lemma}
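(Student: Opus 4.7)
The plan is to exploit the perturbation $\delta'$ from~\eqref{eq:perturbed-dist} to demonstrate that $S$ is no more costly than $O$ in the perturbed instance. The key feature of $\delta'$ is its asymmetry: distances from each $j$ to $\sigma(j, S)$ are preserved while all other distances from $j$ are inflated by $(1+\varepsilon')$. First, I would establish that $\cost'(S) = \cost(S)$: under $\delta'$, for each $j$, the center $\sigma(j, S)$ still minimizes distance among all centers in $S$ (its distance is unchanged while the others' are strictly larger), so the per-point contribution to the $S$-cost is preserved.

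Next, I would lower-bound $\cost'(O)$ by partitioning $X$ into $X_1 = \{j : \sigma(j, S) \in O\}$ and $X_2 = \{j : \sigma(j, S) \notin O\}$. For $j \in X_2$, every center in $O$ is distinct from $\sigma(j, S)$, so $\delta'(j, O) = (1+\varepsilon') \delta(j, O)$ and $j$'s contribution to $\cost'(O)$ equals $(1+\varepsilon')^2 \delta(j, O)^2$. For $j \in X_1$, the pointwise bound $\delta'(j, O) \geq \delta(j, O)$ suffices. Summing over $X$ yields $\cost'(O) \geq \cost(O) + 2\varepsilon' \sum_{j \in X_2} \delta(j, O)^2$.

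Since $\overline{X}_{S, O} \subseteq X_2$, combining this lower bound with the hypothesis $\cost(S) \leq \cost(O) + 2\varepsilon \, \Psi(S, O)$ reduces the task to establishing $2\varepsilon \, \Psi(S, O) \leq 2\varepsilon' \sum_{j \in X_2} \delta(j, O)^2$. Together with the identity $\cost'(S) = \cost(S)$, this would yield $\cost'(S) \leq \cost(O) + 2\varepsilon \, \Psi(S, O) \leq \cost'(O)$, which is the desired conclusion. The $\sum_{\overline{X}_{S, O}} \delta(j, O)^2$ part of $\Psi(S, O)$ is handled immediately by the inclusion $\overline{X}_{S, O} \subseteq X_2$ together with a choice of $\varepsilon \leq \varepsilon'$. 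The hard part, and the step I expect to require the most care, is absorbing the $\sum_{j \in \overline{X}_{S, O}} \delta(j, S)^2$ term, since the perturbation boost on $X_2$ only involves $\delta(j, O)^2$. I would address this via auxiliary structural facts: for $j \in X_2 \setminus \overline{X}_{S, O}$ one has $\sigma(j, O) \in S$, which forces $\delta(j, S) \leq \delta(j, O)$ and thus provides extra slack in $\sum_{j \in X_2} \delta(j, O)^2$ that can absorb the $\delta(j, S)^2$ contributions coming from $\overline{X}_{S, O}$; this is combined with the algorithm's selection of $\varepsilon$ as a sufficiently small function of $\varepsilon'$, exactly in the spirit of the Friggstad et al.\ analysis.
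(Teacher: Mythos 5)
Your skeleton matches the intended argument (and the paper's own penalty-version analogue in the proof of Theorem~\ref{thm:almost_stable_pen}): $\cost'(S)=\cost(S)$ because only distances to centres other than $\sigma(j,S)$ are inflated, and $\cost'(O)$ gains a factor $(1+\varepsilon')^2$ exactly on the points whose closest $S$-centre is not in $O$, i.e.\ on your $X_2=\{j:\sigma(j,S)\notin O\}\supseteq\overline{X}_{S,O}$. The reduction to showing $2\varepsilon\,\Psi(S,O)\leq 2\varepsilon'\sum_{j\in X_2}\delta(j,O)^2$ is also the right target. The gap is in how you propose to absorb the term $\sum_{j\in\overline{X}_{S,O}}\delta(j,S)^2$. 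The structural fact you invoke --- that for $j\in X_2\setminus\overline{X}_{S,O}$ one has $\sigma(j,O)\in S$ and hence $\delta(j,S)\leq\delta(j,O)$ --- is true but does not produce the needed slack: it only says the $S$-cost on that set is at most its $O$-cost, and gives no relation whatsoever between $\sum_{j\in X_2}\delta(j,O)^2$ and $\sum_{j\in\overline{X}_{S,O}}\delta(j,S)^2$ (consider the case $X_2\setminus\overline{X}_{S,O}=\varnothing$, where your ``extra slack'' is zero yet the term to absorb remains). Without the hypothesis, $\sum_{j\in\overline{X}_{S,O}}\delta(j,S)^2$ can be arbitrarily large compared to your budget, so no purely structural argument on $X_2$ can close this step.

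What actually closes it (Lemma 2 of \citep{friggstad2019exact}, mirrored here by Lemma~\ref{lem:c4} and inequality \eqref{eq:k-means-alpha-solution-penalty-polynomial-3} in the penalty setting) is the \emph{complementary} inequality on your $X_1=\{j:\sigma(j,S)\in O\}$: there $\delta(j,O)\leq\delta(j,S)$ (and equality when both nearest centres lie in $S\cap O$). Subtracting these contributions from the hypothesis $\cost(S)\leq\cost(O)+2\varepsilon\Psi(S,O)$ and rearranging yields the key bound $\sum_{j\in\overline{X}_{S,O}}\delta(j,S)^2\leq\frac{1}{1-2\varepsilon}\bigl(\sum_{j\in X_2}\delta(j,O)^2+2\varepsilon\sum_{j\in\overline{X}_{S,O}}\delta(j,O)^2\bigr)\leq(1+6\varepsilon)\sum_{j\in X_2}\delta(j,O)^2$, after which your reduction goes through once $\varepsilon$ is tied to $\varepsilon'$ (e.g.\ $1+6\varepsilon=(1+\varepsilon')^2$, so $\varepsilon\approx\varepsilon'/3$). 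So the missing ingredient is a second use of the hypothesis via the inequalities on $X_1$, not the inequality on $X_2\setminus\overline{X}_{S,O}$ that you cite; as written, the final step of your proposal would fail.
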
    
    
    In this lemma, we let $S$ be the nearly-good solution of Algorithm~\ref{alg:rho-swap}. Let $O$ be any optimum solution of $\cI^\prime$. Using stability condition $O$ should also be an optimum solution for $\cI$. We observe that
    \[ \cost'(S) = \cost(S) \leq \cost'(O),\]
    where $\cost'$ indicates the cost of a solution under $\delta'$. Therefore $S$ is a optimum solution for $\cI^\prime$. Using the stability condition one more time, we conclude that $S$ must also be an optimum solution of $\cI$ and this $S \in \cO = \{O_1, \ldots, O_\ell\}$. Assume that Algorithm~\ref{alg:rho-swap} encounters $S$ in iteration $t \leq \floor{2k\cdot \ln (\Delta n)}$. Since $S$ has the lowest possible cost for the instance $\cI$, the local search must terminate at iteration $t$ and return $S$. 
\end{proof}

\subsection{Stable \kms with Penalties}\label{sec:pol-penalty}

In this section, we provide polynomial-time exact ($(1 + \bigO{\varepsilon'})$-approximate) solutions for 
stable instances of \kms with penalties in doubling metrics. All the results in this section are extendable to \kmd with minor modification. We first show how to solve $(1 + \varepsilon')$-stable instances of \kms exactly in polynomial time when the metric has a bounded doubling dimension. We first focus on $(1 + \varepsilon')$-stable instances for which any optimal solution will remain the optimum for the perturbed instance.



\almoststablepen*

The local search in \citep{friggstad2019exact} is oblivious to penalties and cannot guarantee nearly-good solutions under this new setting.  Our core idea for addressing this issue is to capture the penalties for the points of $X$ by adding a dummy centre $z^*$ to $V$ (specifically, to $C$) and defining $C'=C \cup \{z^*\}$. We extend $\delta$ to $V \cup \{z^*\}$ by setting $\delta(z^*, j)^2 = \delta(j, z^*)^2 = p(j)$ for every $j \in X$, and $\delta(z^*, c) = 0$ for all $c \in C'$. Note that $\delta$ no longer satisfies the triangle inequality, making the analysis more challenging. For any solution $S$ of the \xms{k+1} instance $(X, C^{\prime}, \delta)$ that contains $z^*$, the cost of $S$ equals the cost of $S \backslash \{z^*\}$ for the original \kms with penalties instance $(X, C, p, \delta)$. Thus, the task of finding the optimum solution of the $(X, C, p, \delta)$ instance is reduced to finding the optimum solution that contains $z^*$ for the \xms{k+1} instance $(X, C^{\prime}, \delta)$. Our key insight here is to modify the $\rho$-swap algorithm to ensure it always makes the cheapest swap that does not evict $z^*$. This eliminates the need to invoke the triangle inequality on $z^*$.

We first show that our modified local search converges to a nearly-good solution in polynomial time. Then, we prove that the stability condition ensures that any such nearly-good solution is, in fact, the optimal one. While the structure of our proof is similar to Friggstad et al.'s, we need to tackle challenges in each step. We will highlight how we address these challenges in what follows.


Before rigorously proving Theorem~\ref{thm:almost_stable_pen}, 
we prove that a nearly-good solution can be found in polynomial time for $\kms$ with penalties as well. To do so, it suffices to replicate Theorem~\ref{thm:cost-drop} for the penalty setting. Consider the original instance $(X, C, p, \delta)$ and fix any $S, O \in \cF_k$. Define $S^{\prime} = S \backslash O$ and $O^{\prime} = O \backslash S$, and fix some $T \subseteq O^{\prime} \cup S'$ such that $|T| = |O'| = |S'|$. Moreover, for any $j \in \overline{X}^{pen}_{S, O}$ define $\Tilde{\Delta}_j^T$ to be $\min(\delta\left(j, \sigma\left(j, S^{\prime} \triangle T\right)\right)^2, p(j))-\delta\left(j, \sigma\left(S^{\prime}\right)\right)^2$. Using Theorem~\ref{thm:rand-partition} with $X' = \overline{X}^{pen}_{S, O}$, there exists a randomized algorithm that samples a partition $\pi$ of $S^{\prime} \cup O^{\prime}$ with $\left|T \cap S^{\prime}\right|=\left|T \cap O^{\prime}\right| \leq \rho$ for each $T \in \pi$ such that
\begin{equation} \label{eq:rand-partition-penalty}
   \begin{aligned}
    \mathbf{E}_\pi\left[\sum\nolimits_{T \in \pi} \sum\nolimits_{j \in \overline{X}^{pen}_{S, O}} \tilde{\Delta}_j^T\right] &\leq 
     \mathbf{E}_\pi\left[\sum\nolimits_{T \in \pi} \sum\nolimits_{j \in \overline{X}^{pen}_{S, O}} \delta\left(j, \sigma\left(j, S^{\prime} \triangle T\right)\right)^2 -\delta\left(j, \sigma\left(j, S^{\prime}\right)\right)^2\right] \\
     &\leq \sum\nolimits_{j \in \overline{X}^{pen}_{S, O}}(1+\varepsilon) \cdot \delta\left(j, O^{\prime}\right)^2-(1-\varepsilon) \cdot \delta\left(j, S^{\prime}\right)^2
\end{aligned} 
\end{equation}

We defer the proof of the following theorem to the appendix.

\begin{theorem} \label{thm:cost-drop-penlty}
    For any $S, O \in \cF_k$, if $\costp(S)>\costp(O)+\varepsilon \cdot \Psi_{pen}(S, O)$, then there exists an $S' \in \cF_k$ with $\left|S-S^{\prime}\right| \leq \rho$ such that
$$
\costp\left(S^{\prime}\right) \leq \costp(S)+\frac{\costp(O)-\costp(S)+\varepsilon \cdot \Psi_{pen}(S, O)}{k}
$$
\end{theorem}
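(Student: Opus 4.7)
My plan is to mirror the proof of Theorem~\ref{thm:cost-drop} in \cite{friggstad2019exact}, modifying the reassignment analysis to accommodate the penalty term. The core strategy is to use the randomized partition of Theorem~\ref{thm:rand-partition} (applied with $X' = \overline{X}^{pen}_{S, O}$) to produce a distribution over at most $k$ candidate swaps, bound the expected total cost change across these swaps by $\costp(O) - \costp(S) + \varepsilon \cdot \Psi_{pen}(S, O)$, and conclude via averaging, since this quantity is strictly negative under the hypothesis and $|\pi| \leq k$.

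Concretely, I would invoke Theorem~\ref{thm:rand-partition} with $X' = \overline{X}^{pen}_{S, O}$, $S' = S \setminus O$, and $O' = O \setminus S$, producing a random partition $\pi$ with $|T \cap S'| = |T \cap O'| \leq \rho$, so that each candidate $S \triangle T$ lies in $\cF_k$ and differs from $S$ in at most $\rho$ centres. For each $T \in \pi$ and $j \in X$, I would design a reassignment $\psi_T(j) \in S \triangle T$ and use the pointwise bound $\min(\delta(j, S \triangle T)^2, p(j)) - \min(\delta(j, S)^2, p(j)) \leq \min(\delta(j, \psi_T(j))^2, p(j)) - \min(\delta(j, S)^2, p(j))$. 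The reassignment is chosen so that: for $j \in \overline{X}^{pen}_{S, O}$, the nearest centre $\sigma(j, S) \in S'$ is evicted in exactly one $T \in \pi$, where $j$ is routed to $\sigma(j, O)$ or to a surrogate centre in $T$ close to $\sigma(j, O)$, with the expected surrogate detour controlled by~\eqref{eq:rand-partition-penalty}; and for every other $j$, we identify a single ``responsible'' swap in which $j$ is reassigned to $\sigma(j, O) \in S \triangle T$ (guaranteed when $\sigma(j, O) \in S \cap O$, or when the chosen swap contains $\sigma(j, O)$), contributing at most $\min(\delta(j, O)^2, p(j)) - \min(\delta(j, S)^2, p(j))$, while in the remaining swaps $j$ keeps its current centre and contributes zero. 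Summing over $j \in X$ and $T \in \pi$ and taking expectation gives
\[ \mathbf{E}_\pi\!\left[\sum\nolimits_{T \in \pi}\bigl(\costp(S \triangle T) - \costp(S)\bigr)\right] \leq \costp(O) - \costp(S) + \varepsilon \cdot \Psi_{pen}(S, O), \]
and averaging over the (at most $k$) parts of $\pi$ yields a $T^* \in \pi$ whose swap $S' := S \triangle T^*$ satisfies the required bound.

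The main technical obstacle is verifying that the per-point contributions telescope cleanly to $\costp(O) - \costp(S) + \varepsilon \cdot \Psi_{pen}(S, O)$. This requires choosing responsible swaps for every point so that the sum $\sum_{T \in \pi}\left[\min(\delta(j, S \triangle T)^2, p(j)) - \min(\delta(j, S)^2, p(j))\right]$ is bounded by $\min(\delta(j, O)^2, p(j)) - \min(\delta(j, S)^2, p(j))$, not merely by $0$, so that the sum over $j$ collapses to $\costp(O) - \costp(S)$ rather than a strictly larger quantity. The new wrinkle relative to the no-penalty case is handling points paying the penalty in $S$ or in $O$. The $z^*$ reformulation from Section~\ref{sec:pol-penalty} offers a useful conceptual lens: in the augmented $(k+1)$-means instance over $C \cup \{z^*\}$, penalty-paying points correspond to points assigned to the dummy $z^*$, which lies in both $S \cup \{z^*\}$ and $O \cup \{z^*\}$ and is never placed into any swap of $\pi$. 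Consequently these points behave exactly like ``intersection'' points in the standard no-penalty analysis, making the bookkeeping uniform.
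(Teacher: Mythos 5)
Your proposal is correct and follows essentially the same route as the paper's proof: invoke Theorem~\ref{thm:rand-partition} with $X' = \overline{X}^{pen}_{S,O}$ to get the random partition, handle the points of $\overline{X}^{pen}_{S,O}$ via \eqref{eq:rand-partition-penalty}, handle all other points by a per-point case analysis (reassign to $\sigma(j,O)$ in the single responsible swap, keep the assignment or pay the penalty otherwise) so each contributes at most $c_j^* - c_j$, and then average over the at most $k$ parts using the negativity of $\costp(O)-\costp(S)+\varepsilon\cdot\Psi_{pen}(S,O)$. The telescoping concern you flag is exactly what the paper's four-case bookkeeping resolves, so no gap remains.
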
 

Now similarly to how one obtains  Lemma~\ref{lem:rho-swap-terminate}, we can derive the following lemma.

\begin{lemma} \label{lem:rho-swap-terminate-penalty}
    For \kms with penalties when Algorithm~\ref{alg:rho-swap} terminates, the returned solution is a nearly-good solution. Also, within $2 k \cdot \ln (n \Delta)$ iterations Algorithm~\ref{alg:rho-swap} will have had some iteration with $S$ being a 
    nearly-good solution, where $\Delta=\max _{j \in X, i \in C} \delta^2(i, j)$.
\end{lemma}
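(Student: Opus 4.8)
The plan is to mirror the proof of Lemma~\ref{lem:rho-swap-terminate} step by step, substituting the penalty-aware analogues at each point, and to flag the two places where the dummy-centre construction forces a deviation. First I would set up the reduction exactly as described before the lemma: pass to the \xms{k+1} instance $(X, C', \delta)$ with $C' = C \cup \{z^*\}$ and $\delta(z^*, j)^2 = p(j)$, and recall that the modified $\rho$-swap never evicts $z^*$, so every visited solution $S$ contains $z^*$ and $\costp$ on the original instance equals $\cost$ on the lifted instance for the set $S \setminus \{z^*\}$. This is the key bookkeeping that lets the termination argument for the non-penalty case be reused almost verbatim.

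Next I would argue convergence in polynomially many iterations. The engine is Theorem~\ref{thm:cost-drop-penlty}: if the current solution $S$ is \emph{not} nearly-good, i.e.\ $\costp(S) > \costp(F) + 2\varepsilon \cdot \Psi_{pen}(S, F)$ for some optimum $F$, then a fortiori $\costp(S) > \costp(F) + \varepsilon \cdot \Psi_{pen}(S, F)$, so the theorem yields a feasible $S'$ with $|S - S'| \le \rho$ and $\costp(S') \le \costp(S) + \big(\costp(F) - \costp(S) + \varepsilon \cdot \Psi_{pen}(S, F)\big)/k$. Since the bracketed quantity is strictly negative, each such step shrinks $\costp(S) - \costp(F)$ by at least a $(1 - 1/k)$ factor (relative to the current gap); after $2k \ln(n\Delta)$ iterations the gap is below the smallest positive cost difference expressible over the instance, forcing $S$ to be nearly-good at some visited iteration. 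The $z^*$-eviction restriction must be checked to be compatible with the swap $S'$ produced by Theorem~\ref{thm:cost-drop-penlty}: because the relevant swaps act within $S' = S \setminus O$ and $O' = O \setminus S$ for an optimum $O$ that itself contains $z^*$ (the optimum of the lifted instance uses $z^*$), $z^*$ is never a candidate for eviction, so the restricted local search can indeed perform the improving swap. Finally, the "terminates at a nearly-good solution" half follows because at termination no improving $\rho$-swap exists, so the contrapositive of Theorem~\ref{thm:cost-drop-penlty} gives $\costp(S) \le \costp(F) + \varepsilon \cdot \Psi_{pen}(S, F) \le \costp(F) + 2\varepsilon \cdot \Psi_{pen}(S,F)$ for every optimum $F$, which is exactly Definition~\ref{def:nearly-good}.

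The main obstacle I anticipate is making sure the randomized partition result (Theorem~\ref{thm:rand-partition}, invoked through inequality~\eqref{eq:rand-partition-penalty}) still applies after the lift, given that $\delta$ no longer satisfies the triangle inequality on $z^*$. The point to verify carefully is that the partitioning argument only ever reassigns a client $j \in \overline{X}^{pen}_{S,O}$ — those are precisely the clients whose current and target service costs are both strictly below their penalty $p(j)$ — and for such clients the quantity $\tilde\Delta_j^T = \min(\delta(j, \sigma(j, S' \triangle T))^2, p(j)) - \delta(j, \sigma(j, S'))^2$ is bounded above by the penalty-free change $\delta(j, \sigma(j, S' \triangle T))^2 - \delta(j, \sigma(j, S'))^2$, which is the quantity Theorem~\ref{thm:rand-partition} controls; the triangle inequality is never invoked at $z^*$ because the swaps in question do not touch $z^*$. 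Once that is in place, I would conclude by noting that the analysis is identical in form to Lemma~\ref{lem:rho-swap-terminate} with $\cost$ replaced by $\costp$, $\Psi$ by $\Psi_{pen}$, and $\overline{X}_{S,O}$ by $\overline{X}^{pen}_{S,O}$, and defer the detailed computations (which are routine given Theorems~\ref{thm:rand-partition} and~\ref{thm:cost-drop-penlty}) to the appendix.
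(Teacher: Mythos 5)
Your proposal is correct and follows essentially the same approach the paper intends: the paper does not actually write out a proof of this lemma but simply states it is obtained ``similarly to how one obtains Lemma~\ref{lem:rho-swap-terminate},'' and your argument is exactly that proof transplanted to the penalty setting, with $\cost$, $\Psi$, $\overline{X}_{S,O}$, and Theorem~\ref{thm:cost-drop} replaced by $\costp$, $\Psi_{pen}$, $\overline{X}^{pen}_{S,O}$, and Theorem~\ref{thm:cost-drop-penlty}. One small slip: the per-iteration contraction factor is $(1-\tfrac{1}{2k})$, not $(1-\tfrac{1}{k})$ --- it comes from combining Theorem~\ref{thm:cost-drop-penlty} with the not-nearly-good bound $\varepsilon\,\Psi_{pen}(S,F) < \tfrac12\big(\costp(S)-\costp(F)\big)$, and this factor is what makes the $2k\ln(n\Delta)$ iteration count match; also, the discussion of the lifted $(X,C',\delta)$ instance and $z^*$-eviction is a useful sanity check but is not strictly needed, since Theorem~\ref{thm:cost-drop-penlty} is already stated directly over $\cF_k \subseteq \binom{C}{k}$ in the original penalty instance, where the triangle inequality on $\delta$ holds unconditionally.
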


\farrd{I have edited the following proof to not require strong stability.}
\paragraph{Proof of Theorem~\ref{thm:almost_stable_pen}}
Let $\varepsilon$ be such that $1+6 \varepsilon=\left(1+\varepsilon^{\prime}\right)^2$, roughly speaking we have $\varepsilon \approx \varepsilon^{\prime} / 3$ for small $\varepsilon^{\prime}$. Moreover, let $S$ be the 
nearly-good solution that Algorithm~\ref{alg:rho-swap} encounters within $2 k \cdot \ln (n \Delta)$ iterations (by Lemma~\ref{lem:rho-swap-terminate-penalty}).
Define a perturbed distance function $\delta^{\prime}$ as in \eqref{eq:perturbed-dist}
and a perturbed penalty function via
\begin{equation} \label{eq:perturbed-pen}
    p^{\prime}(j) = \left\{\begin{tabular}{l l}
        $\left(1+\varepsilon^{\prime}\right)^2 p(j)$\   & ; if $\delta(j, S) < p(j)$\,,  \\
        $p(j)$                                          & ; otherwise 
    \end{tabular}\right.
\end{equation}
Moreover, for any $S^{\prime} \in \cF_k$, let $\costp^{\prime}\left(S^{\prime}\right)=\sum_{j \in X} \min(\min _{i \in S^{\prime}} \delta^{\prime}(i, j)^2, p^{\prime}(j))$ be the cost of $S^{\prime}$ under  $\delta^{\prime}$. Clearly, 
$\delta \leq \delta^\prime \leq (1 + \varepsilon^\prime) \cdot \delta$ and $p \leq p^\prime \leq (1 + \varepsilon^\prime)^2 \cdot p$. Let $O_q$ be any optimum solution w.r.t. $\costp^{\prime}$, due to stability condition $O$ should be an optimum solution w.r.t. $\costp$. Therefore, $S$ is nearly good for $O_q$.
Partition $X$ as follows: 
\begin{itemize}
    \item $X^1=\{j \in X: \sigma(j, S) \in S-O_q$ and $\sigma(j, O_q) \in S \cap O_q$ and $\delta(j, S)^2 < p(j)\}$
    \item $X^2=\{j \in X: \sigma(j, S) \in S \cap O_q$ and $\sigma(j, O_q) \in O_q-S$ and $\delta(j, O_q)^2 < p(j)\}$
    \item $X^3=\{j \in X: \sigma(j, S), \sigma(j, O_q) \in S \cap O_q$ or $\delta(j, S)^2, \delta(j, O_q)^2 \geq p(j)\}$
    \item $X^4=\{j \in X: \sigma(j, S) \in S-O_q$ and $\sigma(j, O_q) \in O_q-S$ and $\delta(j, S)^2, \delta(j, O_q)^2 < p(j)\}$.
\end{itemize}
Observe that$X^4=\overline{X}^{pen}_{S, O_q}$. As in the proof of Theorem~\ref{thm:cost-drop-penlty}, let $c_j^*=\min\{\delta(j, \sigma(j, O_q))^2, p(j)\}$ be the cost incurred by point $j$ in the optimum solution, and analogously  $c_j=\min\{\delta(j, \sigma(j, S))^2, p(j)\}$ for $S$. To calculate $\costp^{\prime}(O_q)$, we consider each $j \in X$ case by case.

\textbf{i)} For $j \in X^1$ or $j \in X^4$ we have $p'(j) = (1 + \varepsilon^{\prime})^2 p(j)$ and $\delta^{\prime}(i, j) = (1 + \varepsilon^{\prime}) \delta(i, j)$ for all $i \in O_q$. Thus, $\min\{\min _{i \in O_q} \{\delta^{\prime}(i, j)^2\}, p^{\prime}(j)\} = (1 + \varepsilon^{\prime})^2 c^*_j$. \textbf{ii)} For $j \in X^2$, only $i = \sigma(j, S) \in O_q$ does not have its $\delta^{\prime}(i, j)^2$ multiplied by $(1 + \varepsilon^{\prime})$.  Thus, if $\delta^2(j, S) \leq p(j)$ we have $\min\{\min _{i \in O_q} \{\delta^{\prime}(i, j)^2\}, p^{\prime}(j)\} = \min\{ (1 + \varepsilon^{\prime})^2 c^*_j, \delta(j, S)\}$. Otherwise, we have that $\min\{\min _{i \in O_q} \{\delta^{\prime}(i, j)^2\}, p^{\prime}(j)\} = \min\{ (1 + \varepsilon^{\prime})^2 c^*_j, p(j)\}$. Therefore, $\min\{\min _{i \in O_q} \delta^{\prime}(i, j)^2, p^{\prime}(j)\} = \min\{ (1 + \varepsilon^{\prime})^2 c^*_j, c_j\}$. \textbf{iii)} For $j \in X^3$ clearly $\min\{\min _{i \in O_q} \delta^{\prime}(i, j)^2, p^{\prime}(j)\} = c_j = c^*_j$. Thus we derive

\begin{equation} \label{eq:k-means-alpha-solution-penalty-polynomial-1}
    \begin{aligned}
\costp^{\prime}(O_q)=\sum\nolimits_{j \in X^1} \left(1+\varepsilon^{\prime}\right)^2 \cdot c_j^* +\sum\nolimits_{j \in X^2} \min \left\{\left(1+\varepsilon^{\prime}\right)^2 \cdot c_j^*, c_j \right\} \\
+\sum\nolimits_{j \in X^3} c_j^*+\sum\nolimits_{j \in X^4} \left(1+\varepsilon^{\prime}\right)^2 \cdot c_j^*.
\end{aligned}
\end{equation}

Finally, we make one last observation, the proof of which is deferred to the appendix.
\begin{lemma}\label{lem:c4}
If $S$ is nearly-good for $O_q$, then $\sum_{j \in X^4} c_j 
\leq(1+6 \varepsilon) \cdot\left(\sum_{j \in X^1} c_j^*+\sum_{j \in X^4} c_j^*\right)$.
\end{lemma}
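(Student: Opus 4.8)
\textbf{Proof proposal for Lemma~\ref{lem:c4}.}

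The plan is to exploit the nearly-good inequality $\costp(S) \leq \costp(O_q) + 2\varepsilon \cdot \Psi_{pen}(S, O_q)$ together with the precise accounting of $\costp^{\prime}(O_q)$ from \eqref{eq:k-means-alpha-solution-penalty-polynomial-1} and the fact that $S$ is optimal (hence in particular no worse than $O_q$) under the perturbed instance $(\delta', p')$. First I would record the identity $\costp^{\prime}(S) = \costp(S)$: by construction $\delta'(i,j) = \delta(i,j)$ whenever $i = \sigma(j,S)$, so every point pays exactly its original cost $c_j$ under $S$ in the perturbed instance (the perturbed penalty $p'(j)$ only ever increased, so $\min\{\delta(j,S)^2, p'(j)\} = \min\{\delta(j,S)^2, p(j)\} = c_j$ still holds when $\delta(j,S)^2 < p(j)$, and when $\delta(j,S)^2 \geq p(j)$ the bound $p'(j) \geq p(j)$ keeps the min at $c_j$ as well — one has to check the two subcases but it is routine). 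Then, since $O$ (the un-perturbed optimum witnessing $\costp$-optimality) is also optimal for $(\delta',p')$ by the $(1+\varepsilon')$-stability hypothesis, and $S$ is nearly-good for $O_q$, we get $\costp^{\prime}(S) = \costp(S) \leq \costp^{\prime}(O_q)$, i.e. $S$ is itself optimal for the perturbed instance.

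Next I would combine $\costp^{\prime}(S) \leq \costp^{\prime}(O_q)$ with the decomposition \eqref{eq:k-means-alpha-solution-penalty-polynomial-1}. Writing $\costp^{\prime}(S) = \sum_{j} c_j$ split along the same partition $X^1 \cup X^2 \cup X^3 \cup X^4$, the $X^3$ terms cancel on both sides (they equal $c_j = c_j^*$ there), and on $X^2$ we have $\min\{(1+\varepsilon')^2 c_j^*, c_j\} \leq c_j$, so those terms can be dropped from the right side only after moving them; more carefully, the inequality $\sum_j c_j \leq \costp^{\prime}(O_q)$ rearranges — after cancelling $X^3$ and using that on $X^2$ the contribution of $O_q$ is at most $c_j$ — to
\begin{equation}\label{eq:c4-rearranged}
    \sum\nolimits_{j \in X^1} c_j + \sum\nolimits_{j \in X^4} c_j \leq (1+\varepsilon')^2 \left( \sum\nolimits_{j \in X^1} c_j^* + \sum\nolimits_{j \in X^4} c_j^* \right).
\end{equation}
Since $(1+\varepsilon')^2 = 1 + 6\varepsilon$ by the choice of $\varepsilon$, and since $\sum_{j \in X^1} c_j \geq 0$, dropping the nonnegative $X^1$ term on the left yields exactly $\sum_{j \in X^4} c_j \leq (1+6\varepsilon)(\sum_{j \in X^1} c_j^* + \sum_{j \in X^4} c_j^*)$, which is the claim.

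The main obstacle I anticipate is the $X^2$ bookkeeping in passing from $\costp^{\prime}(S) \leq \costp^{\prime}(O_q)$ to \eqref{eq:c4-rearranged}: one must verify that the points in $X^2$ never force a term onto the wrong side of the inequality — specifically that the $O_q$-cost of each $j \in X^2$ under the perturbation, namely $\min\{(1+\varepsilon')^2 c_j^*, c_j\}$, is genuinely at most the $S$-cost $c_j$ of the same point, so that these points can simply be discarded when isolating the $X^1, X^4$ contributions. This needs the observation that $\sigma(j,O_q) \in S \cap O_q$ for $j \in X^2$ is unaffected by the perturbation of $\delta$ in \eqref{eq:perturbed-dist} only when that center also happens to be $\sigma(j,S)$, which is not the case here; instead one uses the weaker fact that $S$ contains the center $\sigma(j,S)$ at its original (unperturbed) distance, giving an upper bound of $c_j$ on $j$'s perturbed $O_q$-cost is false in general — rather, the correct move is to bound $j$'s perturbed $S$-cost, which is $c_j$, against $j$'s perturbed $O_q$-cost termwise inside the global inequality, and the $\min$ in \eqref{eq:k-means-alpha-solution-penalty-polynomial-1} is precisely what makes the termwise comparison come out right. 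I would double-check this case against the analogous step in the proof of Theorem~\ref{thm:cost-drop-penlty}, from which the structure of \eqref{eq:k-means-alpha-solution-penalty-polynomial-1} is inherited, and lift the argument verbatim where possible.
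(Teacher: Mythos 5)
Your proposal is circular. You begin by asserting $\costp^{\prime}(S) = \costp(S) \leq \costp^{\prime}(O_q)$, i.e.\ that $S$ is optimal for the perturbed instance $(\delta',p')$. But that inequality is exactly the conclusion that the proof of Theorem~\ref{thm:almost_stable_pen} is driving toward, and Lemma~\ref{lem:c4} is a step inside that proof: the theorem proof first establishes Lemma~\ref{lem:c4}, then chains it with \eqref{eq:k-means-alpha-solution-penalty-polynomial-1} to conclude $\costp^{\prime}(S)\leq\costp^{\prime}(O_q)$. You cannot invoke that conclusion to prove the lemma. (There is a secondary slip as well: you justify the inequality by saying $O$ is ``also optimal for $(\delta',p')$ by the $(1+\varepsilon')$-stability hypothesis,'' but the stability definition only runs from perturbed-optimum to original-optimum, not the reverse; $O_q$ is optimal for $(\delta',p')$ by fiat, not by stability.)

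The paper's proof needs none of this and never touches the perturbed instance. The nearly-good hypothesis alone says
\begin{equation*}
\costp(S) \;\leq\; \costp(O_q) + 2\varepsilon\,\Psi_{pen}(S,O_q) \;=\; \costp(O_q) + 2\varepsilon\sum\nolimits_{j\in X^4}\bigl(c_j^*+c_j\bigr),
\end{equation*}
using $X^4=\overline{X}^{pen}_{S,O_q}$. One then writes $\sum_{X^4}c_j \leq \sum_{X^1}c_j+\sum_{X^4}c_j = \costp(S)-\sum_{X^2}c_j-\sum_{X^3}c_j$, substitutes the bound on $\costp(S)$, uses $c_j^*\leq c_j$ on $X^2$ and $c_j^*=c_j$ on $X^3$, and observes that $\costp(O_q)-\sum_{X^2}c_j^*-\sum_{X^3}c_j^*=\sum_{X^1}c_j^*+\sum_{X^4}c_j^*$. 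This leaves $\sum_{X^4}c_j$ on both sides; moving the $2\varepsilon\sum_{X^4}c_j$ term over and dividing by $1-2\varepsilon$ gives $\sum_{X^4}c_j \leq \tfrac{1}{1-2\varepsilon}\bigl(\sum_{X^1}c_j^* + (1+2\varepsilon)\sum_{X^4}c_j^*\bigr) \leq (1+6\varepsilon)\bigl(\sum_{X^1}c_j^*+\sum_{X^4}c_j^*\bigr)$. Your final rearrangement from \eqref{eq:c4-rearranged} is fine, but you need to reach an analogous inequality from the nearly-good condition alone, not from perturbed optimality of $S$.
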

By Lemma~\ref{lem:c4}, we bound $\costp(S)$ in the following way:
\begin{equation} \label{eq:k-means-alpha-solution-penalty-polynomial-3}
\begin{aligned}
\costp(S) & \leq \costp(O_q)+2 \varepsilon \cdot \Psi_{pen}(S, O_q)=\costp(O_q)+2 \varepsilon \sum_{j \in X^4} c_j^*+2 \varepsilon \sum_{j \in X^4} c_j \\
& \leq \costp(O_q)+2 \varepsilon \sum_{j \in X^4} c_j^*+2 \varepsilon \cdot(1+6 \varepsilon) \cdot \left(\sum_{j \in X^1} c_j^*+\sum_{j \in X^4} c_j^* \right) \\
& \leq \sum_{j \in X^1}(1+6 \varepsilon) \cdot c_j^*+\sum_{j \in X^2} c_j^*+\sum_{j \in X^3} c_j^*+\sum_{j \in X^4}(1+6 \varepsilon) \cdot c_j^* \\
& \leq \sum_{j \in X^1}(1+6 \varepsilon) \cdot c_j^*+\sum_{j \in X^2} \min \left\{(1+6 \varepsilon) \cdot c_j^*, c_j\right\}+\sum_{j \in X^3} c_j^*+\sum_{j \in X^4}(1+6 \varepsilon) \cdot c_j^*.
\end{aligned}
\end{equation}
The last bound again uses $c_j^* \leq c_j$ for $j \in X^2$. Recall we chose $\varepsilon$ so that $(1+6 \varepsilon)=\left(1+\varepsilon^{\prime}\right)^2$. Thus, combining Lemma~\ref{lem:c4}, \eqref{eq:k-means-alpha-solution-penalty-polynomial-3} and the simple observation that $\costp^{\prime}(S)=\costp(S)$, we get that $\costp^{\prime}(S)=\costp(S) \leq \costp^{\prime}(O_q)$. 
Which means that $S$ is an optimum solution for $(X, C, p^{\prime}, \delta^{\prime})$.
This completes the proof.

\section{Hardness of $(\alpha, \beta)$-stable \kmd} \label{sec:hard-alpha-beta}
Section~\ref{sec:pol} discussed steps towards finding optimum and near-optimum solutions of $(\alpha, \beta)$-stable \kms / \kmd with or without penalties. To complement that, in this section, we establish hardness results for $(\alpha, \beta)$-stable \kmd, as the question of whether we can efficiently obtain the exact optimum solution of $(\alpha, \beta)$-stable \kms / \kmd with penalties remains open. In Section~\ref{sec:hard-alpha-beta-penalty} and Section~\ref{sec:hard-alpha-beta-no-penalty}, we obtain hardness results for $(\alpha, \beta)$-stable \kmd, respectively, with and without penalties for any $\alpha\in(1,1.2)$, $\beta>0$. While, for ease of presentation, our hardness results are shown for \kmd, every result in this section readily extends to \kms as well.

The results of this section are inspired by a reduction of the \gti problem to \xmd{k^2} with penalties, which was introduced by \cite{cohen2018bane}. We here present our adapted version of the reduction from \gti to \xmd{k^2}.

Consider any \gti problem with integer $n$ and collection $S$ of $k^2$ non-empty sets. We define a Euclidean \xmd{k^2} instance with penalties similar to the one introduced in Theorem 6.2 of \citep{cohen2018bane}. Let $\cI(S, n) = (X^{grid}, C^{grid}, p^{grid}, \delta)$ denote this instance. Fix $\varepsilon= O(\beta / n^3)$ for any $\beta > 0$. Define a set of data points $X^{grid}$ in the region consisting of a square of side length $2 k+\varepsilon(n-1)$, where the lower left corner of the square is on the origin (i.e., $A=\{(x, y) \mid$ $0 \leq x, y \leq 2 k+\varepsilon(n-1)\})$. They are spaced evenly in a grid $G$, with two consecutive (horizontal or vertical) data points at distance $\varepsilon$ from each other; thus there are $\varSigma=(2 k / \varepsilon+n)^2$ data points. Each data point $\vec j$ has a penalty of $p^{grid}(\vec j) = 1$. Thinking of this grid as a discrete approximation of the uniform measure on the square $A$, we work with the discrete measure $\mu$ carried by the data points, where each data point is weighted 1, so that $\iint_A d \mu=\varSigma$.

For each set $S_{i, j}$, we introduce $\left|S_{i, j}\right| \leq n^2$ candidate centres, and let $C_{i, j}$ denote the set of such candidate centres (note that there are $k^2$ such sets), where $C_{i, j}=\{(2 i-1,2 j-1)+\varepsilon(u-1, v-1) \mid$ $\left.(u, v) \in S_{i, j}\right\}$. Note that the candidate centres are also placed on vertices of $G$, and that if $S_{i, j}$ has all possible pairs so that $S_{i, j}=[n] \times[n]$, then $C_{i, j}$ precisely forms a subgrid of $n^2$ evenly spaced points in which consecutive points are at distance $\varepsilon$ from each other and the lower left point of the subgrid lies at $(2 i-1,2 j-1)$. The final set of candidate centres is given by $C^{grid}=\cup_{1 \leq i, j \leq k} C_{i, j}$.

\begin{theorem} [Theorem 6.2 of \cite{cohen2018bane}] \label{thm:cohen-6-2}
There exist a $\nu \geq 0$, such that the \gti problem with integer $n$ and collection $S$ has a solution if and only if there exists a solution to $\cI(S, n)$  with cost at most $\nu$.
\end{theorem}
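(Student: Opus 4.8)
The plan is to pin down the minimum cost of $\cI(S,n)$ geometrically and show it coincides with solvability of \gti. Write $\varSigma$ for the number of data points, and for a candidate centre $c$ let $W(c)=\sum_{\vec j\in X^{grid}}\bigl(1-\delta(\vec j,c)\bigr)^{+}$ be the total ``savings'' its radius-$1$ ball extracts over paying the unit penalty. Because the square $A$ has side length $2k+\varepsilon(n-1)$, the radius-$1$ ball around every candidate centre lies inside $A$, so $W(c)$ is the same constant $W$ for all $c\in C^{grid}$. Put $\nu:=\varSigma-k^{2}W$; this is $\ge 0$ since $W=\Theta(1/\varepsilon^{2})$ while $\varSigma\ge(2k/\varepsilon)^{2}$. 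The identity I would start from is $\costp(S)=\varSigma-\sum_{\vec j}\max_{c\in S}\bigl(1-\delta(\vec j,c)\bigr)^{+}$; combining it with $\max_{c\in S}(1-\delta(\vec j,c))^{+}\le\sum_{c\in S}(1-\delta(\vec j,c))^{+}$ gives $\costp(S)\ge\varSigma-k^{2}W=\nu$ for every feasible $S$, with equality if and only if no data point lies strictly inside the radius-$1$ balls of two distinct centres of $S$. So the theorem reduces to: some $k^{2}$-subset of $C^{grid}$ has pairwise ``ball-disjoint'' radius-$1$ balls if and only if $(S,n)$ is a yes-instance of \gti.

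For the forward direction I would take a \gti solution $(a_{i,j},b_{i,j})\in S_{i,j}$ and set $S=\{c_{i,j}\}$ with $c_{i,j}=(2i-1,2j-1)+\varepsilon(a_{i,j}-1,b_{i,j}-1)\in C_{i,j}$. Horizontally adjacent centres differ in their $x$-coordinate by $2+\varepsilon(a_{i+1,j}-a_{i,j})\ge 2$ by the monotonicity constraint of \gti, vertically adjacent ones similarly in $y$, and all remaining pairs are more than $2$ apart once $\varepsilon=\bigO{\beta/n^{3}}$ is small enough. Hence the open radius-$1$ balls are pairwise disjoint, each lies in $A$, and $\costp(S)=\nu$, so $\cI(S,n)$ admits a solution of cost at most $\nu$.

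For the reverse direction I would assume $\costp(S)\le\nu$, so equality holds above and the radius-$1$ balls of $S$ are ball-disjoint. Two centres in the same block $C_{i,j}$ lie within $\sqrt2\,\varepsilon(n-1)<2$, so their balls overlap in a region of area $\Theta(1)$ containing $\Theta(1/\varepsilon^{2})$ data points, which is impossible; hence $S$ contains exactly one centre per block, recorded as a choice $(a_{i,j},b_{i,j})\in S_{i,j}$. If some adjacent pair violated a \gti inequality, say $a_{i,j}>a_{i+1,j}$, then $c_{i,j}$ and $c_{i+1,j}$ would be at distance at most $2-\varepsilon$, their balls would overlap, and I would need to argue that this overlap contains a data point, contradicting ball-disjointness; the same argument handles vertical pairs and the other coordinate. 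It follows that $(a_{i,j},b_{i,j})$ satisfies every \gti constraint and is a \gti solution.

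The main obstacle is precisely that last claim. A \gti violation ``by one'' puts two centres at distance $2-\varepsilon$, so their balls meet only in a lens of depth $\Theta(\varepsilon)$ --- the same order as the spacing of $X^{grid}$ --- and one therefore cannot simply invoke a continuous-area argument to conclude the lens carries positive measure; instead one must exploit the precise placement of the grid $G$ relative to the candidate centres together with the smallness of $\varepsilon$ (this is where $\varepsilon=\bigO{\beta/n^{3}}$ is used) to guarantee that every overlap a \gti violation can force still contains at least one data point, so that the violation is registered by $\costp$. Upgrading this to the quantitative statement that every violation raises the cost by some explicit $\eta(\varepsilon,n)>0$ is what the later stability results rely on, but even the qualitative ``$>0$'' form is the crux here; the remaining pieces --- the cost identity, the one-centre-per-block count, and the forward direction --- are routine.
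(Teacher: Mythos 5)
The paper does not give its own proof of Theorem~\ref{thm:cohen-6-2}; it is cited verbatim from \cite{cohen2018bane}, so there is nothing internal to compare you against. Your reconstruction --- the cost identity $\costp(S)=\varSigma-\sum_{j}\max_{c\in S}(1-\delta(j,c))^{+}$, the union-bound lower bound $\costp(S)\ge\varSigma-k^{2}W$, the translation-invariance of $W(c)$ (which is correct: all candidate balls lie strictly inside $A$), the one-centre-per-block count, and the forward direction --- is the natural skeleton. You also correctly put your finger on the crux in the ``only if'' direction. But that gap is worse than you make it sound, and it cannot be closed by ``exploit the precise placement of the grid relative to the candidate centres'': for the construction exactly as written, a minimal violation produces \emph{no} grid point strictly inside the overlap.

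Concretely, take a minimal violation $a_{i,j}=a_{i+1,j}+1$ with $b_{i,j}=b_{i+1,j}$, so $c_{i,j}$ and $c_{i+1,j}$ are at Euclidean distance exactly $2-\varepsilon$ and at the same $y$. Put $c_{i,j}$ at the origin of a local frame; since the candidate centres lie on vertices of $G$ we may write $\varepsilon=1/N$ for an integer $N$. A grid point $(u\varepsilon,v\varepsilon)$ is strictly inside both balls iff $u^{2}+v^{2}<N^{2}$ and $(u-(2N-1))^{2}+v^{2}<N^{2}$, which forces $N-1<u<N$ --- no integer solution. So the lens, though it has positive area $\Theta(\varepsilon^{3/2})$, threads exactly between two consecutive grid columns, the double-count correction is $0$, and $\costp$ of this GTI-violating solution equals your $\nu=\varSigma-k^{2}W$. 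Shrinking $\varepsilon$ does not help --- the lens width and the grid spacing both scale as $\varepsilon$, so the alignment is exact at every scale. Thus your lower bound is tight for a no-instance solution, and the reverse implication as you have set it up is false. A correct argument has to either (i) choose a threshold $\nu$ strictly above $\varSigma-k^{2}W$ together with a quantitative upper bound on the yes-instance cost, (ii) perturb the centre placements off the data grid so the lens is forced to swallow a grid column, or (iii) carry out the computation against the continuous measure $\mu^{\star}$ (where the lens contributes $\Theta(\sqrt{\varepsilon})$) and control the discretization error with enough care that the signal survives; the latter is the spirit of the tools (cf.\ Lemma~\ref{lem:mu-approx}) used in \cite{cohen2018bane}, and the place you should look before trying to repair the discrete ``ball-disjointness equals optimality'' claim.
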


Previously, we mentioned that $\mu$ approximates the uniform measure. Let us elaborate on what we mean by this. Define, $R =[-0.5\varepsilon, 2k + \varepsilon (n - 0.5)] \times [-0.5\varepsilon, 2k + \varepsilon (n - 0.5)]$ to be a square with bottom left corner $(-0.5\varepsilon, -0.5\varepsilon)$ and top right corner $(2k + \varepsilon (n - 0.5), 2k + \varepsilon (n - 0.5))$. Denote by $u_R$ the uniform probability measure on $R$ and define $\mu^\star := \varSigma \cdot u_R$. In other words, $\mu^\star$ is a constant measure on $R$ with the property $\mu^\star(R) = \varSigma$ and $\mu^\star(M) = 1$ for any unit square $M$. Notice that for any measurable set $D$ and measurable function $f$, $\varepsilon^2 \iint f d \mu$ can be viewed as the two-dimensional Riemann sum for $\iint_{D} f d \mu^\star$. Therefore, $\iint_D f d \mu$ approximates $\frac{1}{\varepsilon^2} \iint_D f d \mu^\star$. The following Lemma will formalize this approximation for two specific $f$ and $D$ that will be used later. We defer the proof to the appendix.
      \begin{lemma} \label{lem:mu-approx}
     Fix any $\vec a \in C$. Define $D_r$ to be the circle with centre $\vec a$ and radius $r \leq 1$. Then \\
     (i)  \quad $\frac{1}{\varepsilon^2 }\iint_{D_{r - \varepsilon}} d \mu^\star  \leq \iint_{D_{r}} 1 d \mu \leq \frac{1}{\varepsilon^2 }\iint_{D_{r + \varepsilon}} d \mu^\star$, \\
     (ii) \quad $\iint_{D_r} \delta(\vec x, \vec a) d \mu \leq \frac{1}{\varepsilon^2} \left(\iint_{D_{r + \varepsilon}} \delta(\vec x, \vec a)  d \mu^\star +
         \varepsilon \iint_{D_{r + \varepsilon}} d \mu^\star \right).$\\
     Here $\iint_{D_{r}} 1 d \mu^\star = \pi r^2$ and $\iint_{D_r} \delta(\vec x, \vec a)  d \mu^\star = \frac{2}{3} \pi r^3$. 
         
        
 \end{lemma}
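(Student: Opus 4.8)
The plan is to exploit that $\mu$ is the counting measure of an $\varepsilon$-spaced grid, and to charge each data point $\vec x$ to the open axis-aligned cell $Q_{\vec x}$ of side $\varepsilon$ centred at $\vec x$. These cells are pairwise disjoint for distinct grid vertices, each has $\mu^\star$-mass exactly $\varepsilon^2$ (recall $\mu^\star$ has unit density, hence agrees with area measure), and every point of $Q_{\vec x}$ lies within distance $\varepsilon/\sqrt2<\varepsilon$ of $\vec x$. The one global fact we need is that $\vec a\in C$ is at $L^\infty$-distance at least $1$ from the boundary of the data square $A$: from $C_{i,j}=\{(2i-1,2j-1)+\varepsilon(u-1,v-1)\mid(u,v)\in S_{i,j}\}$ with $1\le i,j\le k$ and $A=[0,2k+\varepsilon(n-1)]^2$, each coordinate of $\vec a$ lies in $[1,2k-1+\varepsilon(n-1)]$. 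Since $r\le1$, this forces $D_r\subseteq A$, so every grid vertex in $D_r$ is an actual data point, and both $D_r$ and $D_{r+\varepsilon}$ lie in the region on which $\mu^\star$ coincides with unit-density Lebesgue measure (up to a boundary sliver of $O(\varepsilon)$-order mass, absorbed into the error terms).

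For part (i), put $P:=\{\vec x\in X^{grid}:\vec x\in D_r\}$, so $\iint_{D_r}1\,d\mu=|P|$. Upper bound: for $\vec x\in P$ and $\vec y\in Q_{\vec x}$, $\|\vec y-\vec a\|\le\|\vec y-\vec x\|+\|\vec x-\vec a\|<\varepsilon/\sqrt2+r<r+\varepsilon$, so $Q_{\vec x}\subseteq D_{r+\varepsilon}$; disjointness gives $\varepsilon^2|P|=\mu^\star\bigl(\bigcup_{\vec x\in P}Q_{\vec x}\bigr)\le\mu^\star(D_{r+\varepsilon})$. Lower bound: each $\vec y\in D_{r-\varepsilon}$ lies in $\overline{Q_{\vec x}}$ for some grid vertex $\vec x$, and $\|\vec x-\vec a\|\le\|\vec x-\vec y\|+\|\vec y-\vec a\|<\varepsilon/\sqrt2+(r-\varepsilon)<r$, so $\vec x\in D_r$ and thus $\vec x\in P$; hence $D_{r-\varepsilon}\subseteq\bigcup_{\vec x\in P}\overline{Q_{\vec x}}$ and $\mu^\star(D_{r-\varepsilon})\le\varepsilon^2|P|$. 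Dividing by $\varepsilon^2$ gives (i).

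For part (ii), since $\delta(\vec x,\vec a)$ is constant on $Q_{\vec x}$ and $\delta(\vec x,\vec a)\le\delta(\vec y,\vec a)+\varepsilon$ for all $\vec y\in Q_{\vec x}$, integrating against $\mu^\star$ over $Q_{\vec x}$ yields $\varepsilon^2\,\delta(\vec x,\vec a)\le\iint_{Q_{\vec x}}\delta(\vec y,\vec a)\,d\mu^\star+\varepsilon^3$. Summing over $\vec x\in P$, bounding the first term by $\iint_{D_{r+\varepsilon}}\delta(\vec x,\vec a)\,d\mu^\star$ (disjointness of the $Q_{\vec x}$, their containment in $D_{r+\varepsilon}$, and $\delta\ge0$) and the second by $\varepsilon^3|P|\le\varepsilon\iint_{D_{r+\varepsilon}}1\,d\mu^\star$ (part (i)), we obtain $\varepsilon^2\iint_{D_r}\delta(\vec x,\vec a)\,d\mu\le\iint_{D_{r+\varepsilon}}\delta(\vec x,\vec a)\,d\mu^\star+\varepsilon\iint_{D_{r+\varepsilon}}1\,d\mu^\star$, which is the claim after dividing by $\varepsilon^2$. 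Finally, $\iint_{D_r}1\,d\mu^\star=\pi r^2$ and $\iint_{D_r}\delta(\vec x,\vec a)\,d\mu^\star=\int_0^r t\cdot2\pi t\,dt=\frac{2}{3}\pi r^3$ by integrating in polar coordinates centred at $\vec a$.

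The content is elementary planar geometry, so I expect the only delicate point to be the boundary bookkeeping of the first paragraph: confirming that the disks $D_{r\pm\varepsilon}$ and every cell $Q_{\vec x}$ used sit inside the region where $\mu^\star$ is unit-density Lebesgue measure and where grid vertices are data points, and that the $O(\varepsilon)$-scale mismatch between $R$ and $D_{r+\varepsilon}$ near $\partial R$ is dominated by the $\varepsilon$-order slack already built into the statement. The rest is a one-line cell-charging estimate for each of (i) and (ii) plus the routine polar integral.
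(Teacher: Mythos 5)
Your proof is correct and follows essentially the same route as the paper's: both decompose the plane into axis-aligned $\varepsilon$-side cells centred at grid points (your $Q_{\vec x}$, the paper's $M_{\vec y}$), use disjointness and unit density of $\mu^\star$ to convert counts to $\mu^\star$-mass, and exploit the sandwich $D_{r-\varepsilon}\subseteq\bigcup_{\vec y\in X^{grid}\cap D_r}M_{\vec y}\subseteq D_{r+\varepsilon}$ together with the triangle inequality $|\delta(\vec y,\vec a)-\delta(\vec x,\vec a)|\le\varepsilon$ inside each cell; your part (ii) is the paper's Claim 2 read in the forward rather than backward direction. The one genuine improvement in your write-up is the explicit verification that every $\vec a\in C$ sits at distance at least $1$ from $\partial A$ (so $D_{r\pm\varepsilon}$ stays inside $R$ for $r\le1$), a hypothesis the paper uses silently.
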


\subsection{Stable Euclidean \kmd with Penalties} \label{sec:hard-alpha-beta-penalty}

In this section, we 
will prove that for any $1 < \alpha < 1.2$, $0 < \beta$ 
 $(\alpha, \beta)$-stable Euclidean \kmd with penalties in $\bR^2$ is hard.
 The proof idea is to show that $\cI(S, n)$ is also stable. Since any two centres in any $C_{i, j}$ for $1 \leq i, j \leq k$ are close, for showing $(\alpha, \beta)$-stability it is enough to show that, even if we perturb distances by a constant factor, the optimum solution would be $k^2$ candidate centres each from a $C_{i, j}$ for $i, j \leq k$. Intuitively, this is not hard to see, as the cost of switching from $\vec{a} \in \mathcal{C}_{i, j}$ to another $\vec{b} \neq \vec{a} \in \mathcal{C}_{i, j}$ is relatively small. Conversely, by adding a centre $\vec{c}$ in a cluster $\mathcal{C}_{i', j'}$ that currently lacks a centre, we can significantly reduce the costs for points located near $\vec{c}$.

\kmedpenaltyhardness*


\begin{proof}
 By Theorem~\ref{thm:cohen-6-2}, it suffices to show that $\cI(S, n)$ is also $(\alpha, \beta)$-stable. Consider any $\delta \leq \delta^\prime \leq \alpha \cdot \delta$ and $p \leq p^\prime \leq \alpha \cdot p$. Define $\costp^\prime(S) = \sum_{j \in X^{grid}} \min_{i \in S} \left\{\min \left\{\delta^\prime(i, j), p^\prime(j)\right\}\right\}$ for every $S \subseteq C^{grid}$. First we prove that every optimal solution $O$ of the \xmd{k^2}, $(X^{grid}, C^{grid}, \delta^{\prime}, p')$ cannot have two centres in a $C_{i, j}$ for $1 \leq i, j, \leq k$. For the sake of contradiction assume that there exist $1 \leq i, j, \leq k$ such that there exist $\vec a, \vec b \in C_{i, j} \cap O$. Due to the pigeonhole principle, we have $i'$ and $j'$, $1 \leq i^{\prime}, j^{\prime}, \leq k$ such that $C_{i^{\prime}, j^{\prime}} \cap O = \varnothing$. Let $O^{\prime} = (O \backslash \{\vec a\}) \cup \{\vec c\}$ for any $\vec c \in C_{i^{\prime}, j^{\prime}}$. We then argue that $\costp^\prime(O^{\prime})$ is smaller than $\costp^\prime(O)$, which is a contradiction. Let $D$ be the set of all points in a circle with centre $\vec c$ and radius $\frac{1}{\alpha}$. Every $\vec x \in D$ satisfies $\delta^{\prime}(\vec x, \vec c) \leq \alpha \delta(\vec x, \vec c) \leq 1$. Also, since every other point $\vec d \neq \vec a$, $\vec d \in O^\prime$, has the property $\delta(\vec c, \vec d) \geq 2 -  (n - 1) \varepsilon$, we have $\delta^\prime(\vec d, \vec x) \geq \delta(\vec d, \vec x) \geq 2 - \alpha - (n - 1) \varepsilon$ which is greater than 1 for large enough $n$ (for $n$ polynomial in $\frac{1}{\alpha - 1}$). Also, $p'(\vec x) \geq p^{grid}(\vec x) = 1$ which is smaller than $\delta^{\prime}(\vec x, \vec c)$. Therefore, every point in $D$ will be assigned to $\vec c$. Thus, adding $\vec c$ to $O$ will decrease $\costp^\prime(O)$ at least by
\begin{equation} \label{eq:k-median-penalty-alpha-beta-stable-hardness-1}
\begin{aligned}
    \iint_{D} (1 - \delta^{\prime}(\vec x, \vec c)) d\mu &\geq
    \iint_{D} \left(1 - \alpha \delta(\vec x, \vec c)\right) d\mu \\
    & \stackrel{(i)}{\geq} \frac{1}{\varepsilon^2} \left( \pi \left(\frac{1}{\alpha} - \varepsilon\right)^2  -  \pi \alpha \left( \frac{2}{3} \cdot \left(\frac{1}{\alpha} + \varepsilon\right)^3  + \varepsilon \left(\frac{1}{\alpha} + \varepsilon\right)^2 \right) \right),\\
\end{aligned}
\end{equation}
where (i) is due to Lemma~\ref{lem:mu-approx}. Next, note that in order for any $\vec x \in A$ to be assigned to $\vec a$, $\delta^{\prime}(\vec x, \vec a)$ should be less than $p'(\vec x) \leq \alpha$. Subsequently, all such $\vec x$ should have $\delta(\vec x, \vec a) \leq \alpha$, i.e., they should belong to a circle with radius $\alpha$ and centre $\vec a$, which we denote by $D^{\prime}$. Next, notice that
\begin{equation} \label{eq:k-median-penalty-alpha-beta-stable-hardness-2}
    \begin{aligned}
\delta^{\prime}(\vec x, \vec b) & \leq \alpha \delta(\vec x, \vec b) \leq \alpha (\delta(\vec x, \vec a) + \varepsilon (n - 1))  = \delta(\vec x, \vec a) + \left(\alpha - 1\right) \delta(\vec x, \vec a) + \alpha \varepsilon (n - 1) \\
& \leq \delta^{\prime}(\vec x, \vec a) + \left(\alpha - 1\right) \delta(\vec x, \vec a) + \alpha \varepsilon (n - 1). 
\end{aligned}
\end{equation}
Thus, removing $\vec a$ from $O$ will decrease $\costp^\prime(O)$ by at most
\begin{equation} \label{eq:k-median-penalty-alpha-beta-stable-hardness-3}
    \begin{aligned}
        \iint_{D^{\prime}} &\max \left[ \delta^{\prime}(\vec x, b) - \delta^{\prime}(\vec x, \vec a), 0\right] \stackrel{(i)}{\leq} 
     \iint_{D^{\prime}} \left(\alpha - 1\right) \delta(\vec x, a) + \alpha \varepsilon (n-1) d\mu \\
    &\leq \frac{1}{\varepsilon^2} \left( \left(\alpha - 1\right) \frac{2\pi}{3} (\alpha + \varepsilon)^3 + (\alpha - 1)\varepsilon \pi (\alpha + \varepsilon)^2 + \alpha \varepsilon (n-1) \pi ( \alpha + \varepsilon)^2  \right). \\
    \end{aligned}
\end{equation}
Next we combine \eqref{eq:k-median-penalty-alpha-beta-stable-hardness-1} and \eqref{eq:k-median-penalty-alpha-beta-stable-hardness-3} to derive $\costp^\prime(O^{\prime}) - \costp^\prime(O) \leq \frac{1}{\varepsilon^2} \left((\alpha^4 - \alpha^3) \frac{2 \pi}{3} - \frac{ \pi}{3 \alpha^2}  + O(n \varepsilon) \right)$,
which is less than zero for $1 < \alpha < 1.2$ for large enough $n$. This  contradicts $O$ minimizing $\costp^\prime$.


Consider $O^\prime \in \cF_{k^2}$ to be any optimum solution of $\cI(S, n)$. So far, we have established that $O$ contains $k^2$ centres, one in each $C_{i, j}$ for $1 \leq i, j, \leq k$. This indicates that $O^\prime$ has the same property. Let $f:O^\prime \rightarrow O$ be the unique function such that for any $o \in O^\prime$, if $o \in C_{i, j}$ for some $1 \leq i, j, \leq k$, then also $f(o) \in C_{i, j}$. Therefore,  
$\sum\nolimits_{o \in O_p} \delta(o, f_p(o)) \leq k^2 \varepsilon (n-1) \leq \beta$.
This implies that 
$\distbij{\cI}(O, O^\prime) \leq \beta$. Thus, $\cI(S, n)$ is $(\alpha, \beta)$-stable.
\end{proof}

\subsection{Stable \kmd in Doubling Metrics} \label{sec:hard-alpha-beta-no-penalty}
Focusing on the setting without penalties, we prove that
 $(\alpha, \beta)$-stable \kmd 
 with doubling dimension $3$ is hard, for any $1 < \alpha < 1.2$, $0 < \beta$. The idea for the following theorem is that while the point set is located in $\bR^3$, introducing a metric $\delta^{pen}$ with doubling dimension $3$ (different from standard norm two metric) allows us to replicate the effect of penalties. The proof of the following theorem is deferred to the appendix.

\kmedhardness*

\section{Hardness of $\alpha$-stable \kmd in Bounded-Dimensional Metrics} 
\label{sec:hard-alpha}


\cite{friggstad2019exact} set out to resolve the complexity of $(1 + \varepsilon)$-stable instances of \kmd and \kms for constant values of $\varepsilon$, which is highly desirable since this is the range where the $\rho$-swap local search can produce solutions in polynomial time. However, the computational complexity of the problem remains unknown when we go beyond constant error values. This question is of particular interest when discussing Fixed Parameter Tractable (\FPT) algorithms that allow us slightly super-polynomial running times. Therefore, we examine the computational complexity of $(1 + \varepsilon)$-stable instances of \kmd (and \kms) when $\varepsilon$ is an inverse polynomial in $n$. As the main result of this section, we show that for $\varepsilon = \frac{1}{poly(n)}$, the problem is still hard. 

Inspired by \cite{cohen2018bane}, our general approach is to reduce \pvc to $\kmd$ using moment curves. We first establish some useful properties regarding moment curves and spheres. These properties resemble those shown in Section 2.1 of \cite{cohen2018bane}, but are modified to our purpose. The proofs, in particular, utilize Descartes’ rule of signs (see \cite{curtiss1918recent}), and are deferred to the appendix. 

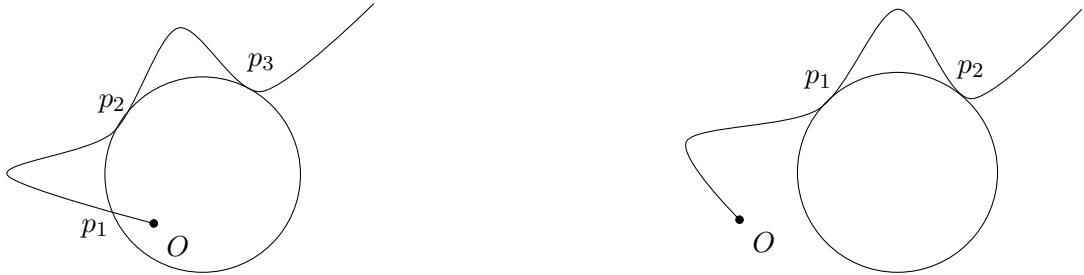
\begin{figure*}[ht!]
\begin{tikzpicture}[scale=0.65]
\begin{scope}[
  vertex/.style={
    draw,
    circle, fill,
    minimum size=1mm,
    inner sep=0pt,
    outer sep=0pt%
  },
]

\node[vertex, label={below right:$O$}] (o) at (0,0) {};

\node[label={$p_1$}] at (-1.2,-0.7) {};
\node[label={$p_2$}] at (-0.85,1.85) {};  
\node[label={$p_3$}] at (2.2,2.7) {};  

\draw plot [smooth] coordinates {(0,0) (-3,1) (-0.85,1.85) (0.5,4) (2.2,2.7) (4.5,4.5)}; 

\draw (1,1) circle (2cm);

\end{scope}
\end{tikzpicture}
\hspace{0.25\linewidth}
\begin{tikzpicture}[scale = 0.7]
    \begin{scope}[
        vertex/.style={
            draw,
            circle, fill,
            minimum size=1mm,
            inner sep=0pt,
            outer sep=0pt%
        },
        ]
        
        \node[vertex,label={below right:$O$}] (o) at (0,0) {};
		\node[label={$p_1$}] (o) at (1.5,2.1) {};
		\node[label={$p_2$}] (o) at (4.4,2.3) {};
        \draw plot [smooth] coordinates {(0,0) (-1,1.5) (1.5, 2.1) (3,4) (4.4, 2.3) (6.5,4)}; 
        \draw (3,0.9) circle (1.9cm);
    \end{scope}
\end{tikzpicture}
    \caption{In $\bR^4$ (left), the $3$-sphere that goes through the points $p_1, p_2, p_3$ on the moment curve and is tangent to the moment curve on $p_2$ and $p_3$, has no other intersections with the moment curve
after the origin. 
        In $\bR^3$ (right), the sphere that is tangent to the moment curve on $p_1$ and $p_2$ has no other intersections with the moment curve.
\label{fig:sphereandcurve}}
\end{figure*}

\begin{lemma} \label{lem:mom-curve}
      Fix any 3 positive values $0<t_1<t_2<t_3$. Consider the 4-dimensional moment curve $\left(t, t^2, t^3, t^4\right)$ and the 3-sphere in $\bR^4$ that goes through the moment curve at $t = t_1$ and is tangent to the curve at $t = t_2, t_3$. The segments on the moment curve corresponding to $t \in\left(t_1, t_2\right) \cup\left(t_2, t_3\right) \cup (t_3, \infty)$ all lie outside of the sphere (i.e., the distance of all such points from the centre of the 3-sphere is strictly more than its radius).
\end{lemma}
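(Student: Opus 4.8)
The plan is to convert this geometric statement into a one‑variable polynomial root count and then close it with Descartes' rule of signs. Write the given $3$-sphere as the zero set of $Q(\vec x)=\|\vec x-\vec c\|^{2}-R^{2}$, where $\vec c=(c_{1},c_{2},c_{3},c_{4})$ is its centre and $R$ its radius, and let $\gamma(t)=(t,t^{2},t^{3},t^{4})$ be the moment curve. Set $g(t):=Q(\gamma(t))=\|\gamma(t)\|^{2}-2\langle\vec c,\gamma(t)\rangle+\|\vec c\|^{2}-R^{2}$. Since $\|\gamma(t)\|^{2}=t^{2}+t^{4}+t^{6}+t^{8}$ involves only even powers and $\langle\vec c,\gamma(t)\rangle=c_{1}t+c_{2}t^{2}+c_{3}t^{3}+c_{4}t^{4}$ has degree at most $4$, the polynomial $g$ is monic of degree $8$ and, crucially, its coefficients of $t^{7}$ and of $t^{5}$ both vanish:
\[
g(t)=t^{8}+t^{6}+(1-2c_{4})\,t^{4}-2c_{3}\,t^{3}+(1-2c_{2})\,t^{2}-2c_{1}\,t+(\|\vec c\|^{2}-R^{2}).
\]
A point $\gamma(t)$ lies strictly outside the sphere exactly when $g(t)>0$, so the lemma reduces to proving $g(t)>0$ for every $t\in(t_{1},t_{2})\cup(t_{2},t_{3})\cup(t_{3},\infty)$.

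Next I would read off the root multiplicities forced by the hypotheses. Passing through $\gamma(t_{1})$ gives $g(t_{1})=0$. Tangency at $\gamma(t_{2})$ means the velocity $\gamma'(t_{2})$ is orthogonal to the radial vector $\gamma(t_{2})-\vec c$; since $g'(t)=2\langle\gamma'(t),\gamma(t)-\vec c\rangle$, this yields $g'(t_{2})=0$ in addition to $g(t_{2})=0$, so $t_{2}$ is a root of $g$ of multiplicity at least $2$, and likewise $t_{3}$. Hence $g$ has at least $1+2+2=5$ positive real roots counted with multiplicity.

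The heart of the proof is to bound the same quantity from above by $5$. After deleting the two coefficients known to vanish, the coefficient list of $g$ is $1,\,1,\,(1-2c_{4}),\,-2c_{3},\,(1-2c_{2}),\,-2c_{1},\,(\|\vec c\|^{2}-R^{2})$ — at most seven nonzero entries, the first two of which are $+1$ — so it has at most five sign changes, and (any further vanishing among the remaining coefficients can only shorten the list and hence lower the count) Descartes' rule of signs gives at most five positive real roots with multiplicity. Combining the two bounds, $g$ has \emph{exactly} five positive roots: $t_{1}$ with multiplicity $1$, and $t_{2},t_{3}$ each with multiplicity $2$, and nothing else. I would then finish with a sign chase: $g$ is monic of even degree, so $g(t)\to+\infty$ and thus $g>0$ on $(t_{3},\infty)$; at $t_{3}$ and at $t_{2}$ the roots have even multiplicity, so $g$ does not change sign there; and $g$ has no root inside $(t_{2},t_{3})$ or inside $(t_{1},t_{2})$. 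Propagating the sign leftwards gives $g>0$ on all of $(t_{1},t_{2})\cup(t_{2},t_{3})\cup(t_{3},\infty)$, i.e.\ every such curve point is strictly outside the sphere.

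The one idea the argument hinges on — and therefore the main obstacle to finding it — is observing that the $t^{7}$ and $t^{5}$ coefficients of $g$ are identically zero, which caps the Descartes sign‑change count at exactly $5$, matching the lower bound extracted from the two tangency conditions; equality then rules out any extra intersection. Everything else is routine bookkeeping: were some of the remaining coefficients also zero, the sign‑change count would drop below $5$ and contradict the lower bound, so they are not; and a tangency of order $\ge 3$ at $t_{2}$ or $t_{3}$ (or an accidental tangency at $t_{1}$) would force the total positive‑root multiplicity above $5$ and is likewise excluded. A brief remark that such a sphere exists (five conditions on the five‑parameter family of $3$-spheres) rounds out the statement, though the inequality itself is vacuous if no such sphere exists.
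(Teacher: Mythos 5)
Your proof is correct and rests on exactly the observation the paper leans on: the $t^{7}$ and $t^{5}$ coefficients of $g$ vanish because $\|\gamma(t)\|^{2}=t^{2}+t^{4}+t^{6}+t^{8}$ has only even powers, so Descartes' rule caps the sign-change count. The execution differs slightly: the paper applies Descartes to the derivative $f'$ (six-term coefficient list, at most $4$ sign changes) and pads the two tangency roots $t_2,t_3$ with two Rolle roots $t'_1\in(t_1,t_2)$, $t'_2\in(t_2,t_3)$ to saturate that bound, whereas you apply Descartes directly to $g$ itself (seven-term list, at most $5$ sign changes) and saturate it with the multiplicity count $1+2+2=5$, so no appeal to Rolle is needed. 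Both are sound; your variant is a touch cleaner since it pins down the full root-multiplicity structure of $g$, which makes the closing sign-chase over $(t_1,t_2)\cup(t_2,t_3)\cup(t_3,\infty)$ mechanical, while the paper's final sentence compresses the analogous sign-alternation argument on $f'$.
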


\begin{lemma} \label{lem:mom-curve-2}
      Fix any 2 positive values $0<t_1<t_2$. Consider the 3-dimensional moment curve $\left(t, t^2, t^3\right)$ and the unique 2-sphere in $\bR^3$ that is tangent to the moment curve at $t = t_1, t_2$. The sphere only contacts the curve at $t_1$ and $t_2$.
\end{lemma}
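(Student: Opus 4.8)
The plan is to reduce the geometric claim to an algebraic statement about a single-variable polynomial and then apply Descartes' rule of signs, exactly in the spirit of the cited treatment in \cite{cohen2018bane} but for the lower-dimensional moment curve $\gamma(t) = (t, t^2, t^3)$. First I would record the general shape of the "distance-squared from a fixed centre" function: if the $2$-sphere has centre $\vec q = (q_1, q_2, q_3)$ and radius $R$, then the point $\gamma(t)$ lies on the sphere exactly when
\[
  P(t) := \|\gamma(t) - \vec q\|^2 - R^2 = t^6 + t^4 - 2 q_3 t^3 + (\text{lower-order terms in } t) = 0 .
\]
The key structural observation is that $P$ is a degree-$6$ polynomial with leading coefficient $1$ (coming from the $t^3$-coordinate squared) and, crucially, $P(0) = \|\vec q\|^2 - R^2$. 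Tangency of the sphere to the curve at $t = t_1$ and $t = t_2$ means $t_1$ and $t_2$ are each \emph{double} roots of $P$, so $(t - t_1)^2 (t - t_2)^2$ divides $P(t)$; writing $P(t) = (t-t_1)^2 (t-t_2)^2 \, Q(t)$, the cofactor $Q$ is a monic quadratic. What we must show is that $Q$ has no real root in $(0,\infty)$ other than possibly $t_1$ or $t_2$ — equivalently, that the only intersections of the sphere with the curve for $t > 0$ are the two tangency points. (For $t \le 0$ the claim is vacuous since $0 < t_1 < t_2$ and we only care about the positive part of the moment curve, matching how the curve is used in the reduction; but I would make the argument for all real $t$ to keep it clean.)

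Next I would argue that $Q$ cannot have two positive roots. Suppose it did; then $P$ would have at least six positive real roots counted with multiplicity (the two double roots $t_1, t_2$ plus two more), hence exactly six, so $P(t) = (t - r_1)(t-r_2)(t-r_3)(t-r_4)(t-r_5)(t-r_6)$ with all $r_i > 0$. Expanding, the coefficient of $t^5$ in $P$ would be $-\sum r_i < 0$. But from the explicit form above, the coefficient of $t^5$ in $P(t) = \|\gamma(t)-\vec q\|^2 - R^2$ is $0$: indeed $\|\gamma(t)\|^2 = t^2 + t^4 + t^6$ has no $t^5$ term, and the cross term $-2\vec q \cdot \gamma(t) = -2q_1 t - 2 q_2 t^2 - 2 q_3 t^3$ contributes nothing of degree $5$. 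This contradiction — or, phrased via Descartes' rule of signs, the sign pattern of the coefficients of $P$ forces at most a bounded number of positive roots — shows $Q$ has at most one positive real root. A symmetric bookkeeping argument, looking additionally at the coefficient of $t^4$ (which equals $1$, forcing the sum of pairwise products of roots to be too large if there were many positive roots) finishes the ruling-out of a stray intersection. In fact the cleanest version is: $P$ has at most (a small constant) sign changes in its coefficient sequence, so by Descartes it has at most that many positive roots, and the two double roots $t_1, t_2$ already exhaust the budget, leaving no room for $Q$ to vanish at a third positive value.

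The remaining loose end is the possibility that $Q$ vanishes at $t_1$ or $t_2$ themselves, i.e.\ that a tangency point is secretly a contact of order $\ge 3$; this is harmless for the statement ("the sphere only contacts the curve at $t_1$ and $t_2$" is about the set of contact points, not multiplicities), but for the uniqueness claim in the lemma I would note that a monic quadratic $Q$ with no root in $(0,\infty)\setminus\{t_1,t_2\}$ is pinned down, hence so is $P$, hence so is the sphere — giving existence-and-uniqueness of the bitangent sphere. I expect the main obstacle to be purely computational: carefully expanding $P(t)$ to read off enough low-order coefficients (particularly $t^5$ and $t^4$, and checking the constant and linear terms do not sabotage the sign-change count) so that Descartes' rule yields a bound tight enough to leave no slack. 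Once the coefficient sequence of $P$ is in hand, the sign-change count is a finite check and the conclusion follows; the geometric picture in Figure~\ref{fig:sphereandcurve} (right) is exactly this statement.
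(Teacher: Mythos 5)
Your ``cleanest version'' at the end --- applying Descartes' rule of signs directly to the degree-$6$ polynomial $P(t)=\|\gamma(t)-\vec q\|^2-R^2$ --- is correct, and it is a slightly leaner route than the one the paper takes. The paper instead passes to the derivative $f'(t)=6t^5+4t^3-6ct^2+2(1-2b)t-2a$, applies Descartes to $f'$ (coefficient sequence $(6,4,-6c,2(1-2b),-2a)$, at most three sign changes), notes that tangency gives $f'(t_1)=f'(t_2)=0$ and Rolle gives a third root $t_1'\in(t_1,t_2)$, and then invokes Rolle once more to argue that any additional positive zero of $f=P$ would force a fourth positive root of $f'$. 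Your variant skips the detour through the derivative: tangency already hands you $t_1,t_1,t_2,t_2$ as four positive roots of $P$, and since the coefficients of $t^6$ and $t^4$ are both $+1$ while the $t^5$-coefficient is $0$, the coefficient sequence of $P$ admits at most four sign changes, so Descartes caps the positive-root count at exactly the four you already have. That buys you a proof without Rolle's theorem.

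Your intermediate elementary-symmetric ``bookkeeping'', however, is not carried to completion and should not be presented as a proof. You cleanly rule out $Q$ having two positive roots via $e_1=0$, but the case of $Q$ having exactly one positive root is dispatched with a gesture at $e_2=1$ ``forcing the sum of pairwise products to be too large'', and as written that does not close the gap: a configuration with positive roots $t_1,t_1,t_2,t_2,t_3$ and a single negative root $t_4=-(2t_1+2t_2+t_3)$ is consistent with $e_1=0$, and $e_2=1$ is an equality, not an inequality one can read a contradiction off of without computation. If you want to rescue that style of argument, use the power sum: $p_2=\sum r_i^2=e_1^2-2e_2=0-2=-2<0$, which is impossible if all six roots were real, so the monic quadratic cofactor $Q$ has a complex-conjugate pair and no real zero at all --- in fact strictly stronger than the lemma, since it excludes contacts at every real $t$, not just $t>0$. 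Otherwise, simply lean on the Descartes-on-$P$ version, which is complete as you sketched it.
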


\subsection{$(1 + \frac{1}{poly(n)})$-Stable Euclidean \kmd with Penalties}
In this subsection, we focus on $\kmd$ with penalties and show that solving $(1 + \frac{1}{poly(n)})$-stable Euclidean \kmd with penalties 
in $\bR^4$ is hard. The structure of the proof is as follows. Just like \cite{cohen2018bane}, we reduce from \pvclong (\pvc) by creating a candidate centre on a moment curve for each vertex of the input graph, along with a data point for each edge of the input graph $G = (V, E)$. For each edge, we ensure that the point corresponding to that edge is closer to the two centres representing each vertex of the edge than to any other candidate centre. This allows us to reduce a \pvc instance to a $\kmd$ instance.

However, to ensure the stability of $\kmd$ with penalties, we need to guarantee that the difference between the distances of the representation of an edge to its corresponding vertex and to the representation of other vertices is significant. Moreover, with the reduction introduced by \cite{cohen2018bane}, perturbing the metric function risks having the optimum solution switch from one maximal partial cover of the graph to another. To avoid this, we ensure that the distance between the representation of edges and the representation of their respective vertices remains constant. These are the key properties in our proof, which is deferred to the appendix.


\kmedpeninversehardness*

\subsection{Hardness of $(1 + \frac{1}{poly(n)})$-Stable Euclidean \kmd}
Next, we focus on $\kmd$ without penalties and show that $(1 + \frac{1}{poly(n)})$-stable Euclidean \kmd 
in $\bR^6$ is hard. The proof is similar to that of Theorem~\ref{thm:k-median-alpha-hardness-panlty}, and is deferred to the appendix. 

\kmedinversehardness*

\section{Conclusion}

This paper addresses the complexity of stable instances of \kms and \kmd under generalized notions of stability, and in the generalized setting with penalties. We  show that, under our most general definition of stability, i.e., $(\alpha, \beta)$-stability, the problem appears to be highly intractable, even in small dimensional Euclidean spaces. We fell short of answering the following question.

\smallskip

\noindent \textbf{Open Problem:} Does a $(1 + \varepsilon)$-approximation for any constant $\varepsilon > 0$ exist for almost stable (i.e., $\varepsilon_\circ, \beta_\circ)$-stable) instances of \kms/\kmd?

Note that in the almost-stable setting, a $(1 + \varepsilon)$-approximation would mean to find a solution that is both $(1 + \varepsilon)$-close in cost and $\beta$-close in solution structure to an optimum solution of the given instance. This second requirement implies that, for example, the known PTASs for the \kmd \cite{friggstad2019local,C-AKM19} do not necessarily produce the desired solution in this setting.
\newpage 

\bibliography{ref}

\begin{thebibliography}{17}
\providecommand{\natexlab}[1]{#1}
\providecommand{\url}[1]{\texttt{#1}}
\expandafter\ifx\csname urlstyle\endcsname\relax
  \providecommand{\doi}[1]{doi: #1}\else
  \providecommand{\doi}{doi: \begingroup \urlstyle{rm}\Url}\fi

\bibitem[Awasthi et~al.(2010)Awasthi, Blum, and Sheffet]{ABS10}
Pranjal Awasthi, Avrim Blum, and Or~Sheffet.
\newblock Stability yields a {PTAS} for {k-Median} and {k-Means} clustering.
\newblock In \emph{51th Annual {IEEE} Symposium on Foundations of Computer
  Science, {FOCS} 2010, October 23-26, 2010, Las Vegas, Nevada, {USA}}, pages
  309--318. {IEEE} Computer Society, 2010.
\newblock \doi{10.1109/FOCS.2010.36}.
\newblock URL \url{https://doi.org/10.1109/FOCS.2010.36}.

\bibitem[Balcan and Liang(2016)]{BL16}
Maria{-}Florina Balcan and Yingyu Liang.
\newblock Clustering under perturbation resilience.
\newblock \emph{{SIAM} J. Comput.}, 45\penalty0 (1):\penalty0 102--155, 2016.
\newblock \doi{10.1137/140981575}.
\newblock URL \url{https://doi.org/10.1137/140981575}.

\bibitem[Ben{-}David et~al.(2006)Ben{-}David, von Luxburg, and
  P{\'{a}}l]{BLP06}
Shai Ben{-}David, Ulrike von Luxburg, and D{\'{a}}vid P{\'{a}}l.
\newblock A sober look at clustering stability.
\newblock In G{\'{a}}bor Lugosi and Hans~Ulrich Simon, editors, \emph{Learning
  Theory, 19th Annual Conference on Learning Theory, {COLT} 2006, Pittsburgh,
  PA, USA, June 22-25, 2006, Proceedings}, volume 4005 of \emph{Lecture Notes
  in Computer Science}, pages 5--19. Springer, 2006.
\newblock \doi{10.1007/11776420\_4}.
\newblock URL \url{https://doi.org/10.1007/11776420\_4}.

\bibitem[Bilu and Linial(2012)]{BL12}
Yonatan Bilu and Nathan Linial.
\newblock Are stable instances easy?
\newblock \emph{Comb. Probab. Comput.}, 21\penalty0 (5):\penalty0 643--660,
  2012.
\newblock \doi{10.1017/S0963548312000193}.
\newblock URL \url{https://doi.org/10.1017/S0963548312000193}.

\bibitem[Cohen{-}Addad and Schwiegelshohn(2017)]{C-AS17}
Vincent Cohen{-}Addad and Chris Schwiegelshohn.
\newblock On the local structure of stable clustering instances.
\newblock In Chris Umans, editor, \emph{58th {IEEE} Annual Symposium on
  Foundations of Computer Science, {FOCS} 2017, Berkeley, CA, USA, October
  15-17, 2017}, pages 49--60. {IEEE} Computer Society, 2017.
\newblock \doi{10.1109/FOCS.2017.14}.
\newblock URL \url{https://doi.org/10.1109/FOCS.2017.14}.

\bibitem[Cohen-Addad et~al.(2018)Cohen-Addad, De~Mesmay, Rotenberg, and
  Roytman]{cohen2018bane}
Vincent Cohen-Addad, Arnaud De~Mesmay, Eva Rotenberg, and Alan Roytman.
\newblock The bane of low-dimensionality clustering.
\newblock In \emph{Proceedings of the Twenty-Ninth Annual ACM-SIAM Symposium on
  Discrete Algorithms}, pages 441--456. SIAM, 2018.

\bibitem[Cohen{-}Addad et~al.(2019)Cohen{-}Addad, Klein, and Mathieu]{C-AKM19}
Vincent Cohen{-}Addad, Philip~N. Klein, and Claire Mathieu.
\newblock Local search yields approximation schemes for {k-Means} and
  {k-Median} in {Euclidean} and {Minor-Free} metrics.
\newblock \emph{{SIAM} J. Comput.}, 48\penalty0 (2):\penalty0 644--667, 2019.
\newblock \doi{10.1137/17M112717X}.
\newblock URL \url{https://doi.org/10.1137/17M112717X}.

\bibitem[Curtiss(1918)]{curtiss1918recent}
DR~Curtiss.
\newblock Recent extentions of descartes' rule of signs.
\newblock \emph{Annals of Mathematics}, 19\penalty0 (4):\penalty0 251--278,
  1918.

\bibitem[Cygan et~al.(2015)Cygan, Fomin, Kowalik, Lokshtanov, Marx, Pilipczuk,
  Pilipczuk, and Saurabh]{cygan2015parameterized}
Marek Cygan, Fedor~V Fomin, {\L}ukasz Kowalik, Daniel Lokshtanov, D{\'a}niel
  Marx, Marcin Pilipczuk, Micha{\l} Pilipczuk, and Saket Saurabh.
\newblock \emph{Parameterized Algorithms}, volume~5.
\newblock Springer, 2015.

\bibitem[Drineas et~al.(2004)Drineas, Frieze, Kannan, Vempala, and
  Vinay]{DFKVV04}
Petros Drineas, Alan~M. Frieze, Ravi Kannan, Santosh~S. Vempala, and V.~Vinay.
\newblock Clustering large graphs via the singular value decomposition.
\newblock \emph{Mach. Learn.}, 56\penalty0 (1-3):\penalty0 9--33, 2004.
\newblock \doi{10.1023/B:MACH.0000033113.59016.96}.
\newblock URL \url{https://doi.org/10.1023/B:MACH.0000033113.59016.96}.

\bibitem[Friggstad et~al.(2019{\natexlab{a}})Friggstad, Khodamoradi, and
  Salavatipour]{friggstad2019exact}
Zachary Friggstad, Kamyar Khodamoradi, and Mohammad~R Salavatipour.
\newblock Exact algorithms and lower bounds for stable instances of {Euclidean
  k-Means}.
\newblock In \emph{Proceedings of the Thirtieth Annual ACM-SIAM Symposium on
  Discrete Algorithms}, pages 2958--2972. SIAM, 2019{\natexlab{a}}.

\bibitem[Friggstad et~al.(2019{\natexlab{b}})Friggstad, Rezapour, and
  Salavatipour]{friggstad2019local}
Zachary Friggstad, Mohsen Rezapour, and Mohammad~R Salavatipour.
\newblock Local search yields a {PTAS} for {k-Means} in doubling metrics.
\newblock \emph{SIAM Journal on Computing}, 48\penalty0 (2):\penalty0 452--480,
  2019{\natexlab{b}}.

\bibitem[Guha and Khuller(1999)]{GK99}
Sudipto Guha and Samir Khuller.
\newblock Greedy strikes back: Improved facility location algorithms.
\newblock \emph{J. Algorithms}, 31\penalty0 (1):\penalty0 228--248, 1999.
\newblock \doi{10.1006/JAGM.1998.0993}.
\newblock URL \url{https://doi.org/10.1006/jagm.1998.0993}.

\bibitem[Impagliazzo and Paturi(1999)]{IP99}
Russell Impagliazzo and Ramamohan Paturi.
\newblock Complexity of k-sat.
\newblock In \emph{Proceedings of the 14th Annual {IEEE} Conference on
  Computational Complexity, Atlanta, Georgia, USA, May 4-6, 1999}, pages
  237--240. {IEEE} Computer Society, 1999.
\newblock \doi{10.1109/CCC.1999.766282}.
\newblock URL \url{https://doi.org/10.1109/CCC.1999.766282}.

\bibitem[Impagliazzo et~al.(2001)Impagliazzo, Paturi, and
  Zane]{impagliazzo2001problems}
Russell Impagliazzo, Ramamohan Paturi, and Francis Zane.
\newblock Which problems have strongly exponential complexity?
\newblock \emph{Journal of Computer and System Sciences}, 63\penalty0
  (4):\penalty0 512--530, 2001.

\bibitem[Makarychev et~al.(2012)Makarychev, Makarychev, and
  Vijayaraghavan]{MMV12}
Konstantin Makarychev, Yury Makarychev, and Aravindan Vijayaraghavan.
\newblock Approximation algorithms for semi-random partitioning problems.
\newblock In Howard~J. Karloff and Toniann Pitassi, editors, \emph{Proceedings
  of the 44th Symposium on Theory of Computing Conference, {STOC} 2012, New
  York, NY, USA, May 19 - 22, 2012}, pages 367--384. {ACM}, 2012.
\newblock \doi{10.1145/2213977.2214013}.
\newblock URL \url{https://doi.org/10.1145/2213977.2214013}.

\bibitem[Vazirani(2001)]{Vazirani01}
Vijay~V. Vazirani.
\newblock \emph{Approximation algorithms}.
\newblock Springer, 2001.
\newblock ISBN 978-3-540-65367-7.
\newblock URL
  \url{http://www.springer.com/computer/theoretical+computer+science/book/978-3-540-65367-7}.

\end{thebibliography}

\appendix


\section{Omitted Proofs of Section~\ref{sec:pol}}

\subsection{Proof of Lemma~\ref{lem:rho-swap-terminate}}
\farrd{changed this proof}
\begin{proof}
    To establish this result, our analysis requires extra care compared to \cite{friggstad2019exact}. Friggstad et al.\ introduce the notion of being nearly-good with respect to a fixed reference point, namely the unique optimal solution. In our setting, we generalize stability to accommodate instances with multiple optimal solutions. As a result, the absence of a single fixed reference point introduces new complications, which we discuss below.
    
    We first argue that the algorithm terminates at a nearly-good solution. According to the definition of nearly-good solutions, any optimum solution could cause the solution $S$ to violate the nearly-good solution condition. We say that $S$ is nearly-good for $F$ if $cost(S) - cost(F) \leq 2\varepsilon \Psi(S, F)$. Algorithm~\ref{alg:rho-swap} clearly terminates as in each round it selects a new $S$ with strictly smaller cost. Assume that the set $S$ in the last round is not nearly-good, meaning that there exists an \far{optimum} $F \in \cF_k$ for which $S$ is not nearly-good. Observe that by Theorem \ref{thm:cost-drop}, there should exist $S' \in \cF_k$ such that $\left|S-S^{\prime}\right| \leq \rho$ and $\cost\left(S^{\prime}\right) \leq \cost(S)+\frac{\cost(F)-\cost(S)+\varepsilon \cdot \Psi(S, F)}{k} < \cost(S)$, since the numerator of the fraction is negative. This is a contradiction because the $\rho$-swap local search could discover $S'$ and move to it in the next iteration, contradicting the assumption that $S$ was the returned solution.
    
    It only remains to prove a nearly-good solution is found within $2 k \cdot \ln (n \Delta)$ iterations. Let $\cO = \{O_1, O_2, \ldots, O_\ell\}$ be the set of optimal solutions. For the sake of contradiction, suppose that after $K=[2 k \cdot \ln (n \Delta)]$ iterations Algorithm 1 has still not encountered a nearly-good solution. Say $S_0, S_1, \ldots, S_K \in \mathcal{F}_k$ is the sequence of sets held by the algorithm after the first $K$ iterations, where $S_0$ is the initial set. Then for all $i = 0, ..., K$ we should have $S_i$ is not nearly-good at least with some $O_{i} \in \cO$. Therefore, by Theorem \ref{thm:cost-drop}, for each $i \in [K]$ we have
    
    \begin{align*}
        \cost\left(S_i\right)-\cost(O_i)   & \leq \cost(S_{i - 1})-\cost(O_i)+\frac{\cost(O_i)-\cost(S_{i - 1})+\varepsilon \Psi(S_{i - 1}, O_i)}{k} \\
                                                & < \cost(S_{i - 1})-\cost(O_i)+\frac{\cost(O_i)-\cost(S_{i - 1})}{2 k} \\
                                                & = \left(1-\frac{1}{2 k}\right) \cdot(\cost(S_{i - 1})-\cost(O_i)) \\
                                                & = \left(1-\frac{1}{2 k}\right) \cdot(\cost(S_{i - 1})-\opt) \\
    \end{align*}
    The second inequality follows from the fact that $\varepsilon \Psi(S_{i - 1}, O_i) < \frac{1}{2} \Psi(S_{i - 1}, O_i) \leq \frac{1}{2}(\cost(S_{i - 1}) + \cost(O_i))$ by our choice of $\varepsilon$. Because costs are integral, $\cost(S_K) - \opt = 0$ which contradicts that $S_K$ is not a nearly-good solution
\end{proof}

\subsection{Proof of Theorem~\ref{thm:cost-drop-penlty}}

\begin{proof}
 Sample a random partition $\pi$ of $O^{\prime} \cup S^{\prime}$ with mentioned properties and consider the effect of the swap $S \rightarrow S \triangle T$ for each part $T \in \pi$. We bound $\mathbf{E}_\pi\left[\sum_{T \in \pi} \costp(S \triangle T)-\costp(S)\right]$ from above by describing a valid way to redirect each $j \in X$ in each swap. We use the shorthand $c_j^*=\min(\delta(j, \sigma(j, O))^2, p(j))$ for the cost of connecting $j$ in $O$ and  $c_j=\min(\delta(j, \sigma(j, S))^2, p(j))$ similarly for $S$. Based on possible reassignments of $j \in X$, we break the analysis into four cases:
 
 \textbf{Case 1:} Either $\{\sigma(j, S), \sigma(j, O)\} \subseteq S \cap O$ or both $\delta(j, O)^2, \delta(j, S)^2 \geq p(j)$. The former means that $\sigma(j, S)$ remains open after each swap, and not reassigning $j$ is a valid option; the latter means paying the penalty for $j$ is the better option. Thus,  the reassignment cost is bounded by $c_j^* - c_j = 0$.
 
     \textbf{Case 2:} $\sigma(j, S) \in S^{\prime}$ and $\sigma(j, O) \in S \cap O$ and $\delta(j, S)^2 < p(j)$ (note that if the last condition does not hold, we are in the previous case). Move $j$ to $\sigma(j, O)$ when swapping the part $T$ with $\sigma(j, S) \in T$. As $\sigma(j, S)$ remains open when swapping all other $T^{\prime} \neq T$, we can leave $j$ assigned to $\sigma(j, S)$ to upper-bound its cost change for swaps $T^{\prime} \neq T$ by 0. The total cost assignment for $j$ is then bounded by $c_j^*-c_j$.

  \textbf{Case 3:} $j$ with $\sigma(j, S) \in S \cap O$ and $\sigma(j, O) \in O^{\prime}$ and $\delta(j, O)^2 < p(j)$. Move $j$ to $\sigma(j, O)$ when swapping the part $T$ with $\sigma(j, O) \in T$ and do not move $j$ when swapping any other part $T^{\prime} \neq T$. This places an upper bound of $c_j^*-c_j$ on the total assignment cost change for $j$.
 
     \textbf{Case 4:} Finally, consider $j$ with $\sigma(j, S) \in S^{\prime}$ and $\sigma(j, O) \in O^{\prime}$ and $\delta(j, O)^2, \delta(j, S)^2 < p(j)$ (note that if the last condition doesn't hold, we are in one of the previous cases). Note these are precisely the points $j \in \overline{X}^{pen}_{S, O}$. From \eqref{eq:rand-partition-penalty},
$$
\mathbf{E}_\pi\left[\sum\nolimits_{T \in \pi} \Tilde{\Delta}_j^T\right] \leq(1+\varepsilon) \cdot c_j^*-(1-\varepsilon) \cdot c_j=c_j^*-c_j+\varepsilon \cdot\left(c_j+c_j^*\right)
$$
Aggregating this cost bound for all clients, we see
$$
\mathbf{E}_\pi\left[\sum\nolimits_{T \in \pi} \costp(S \triangle T)-\costp(S)\right] \leq \costp(O)-\costp(S) +\varepsilon \cdot \Psi_{pen}(S, O)
$$
Therefore there is some $\pi$ and some $T \in \pi$ with
$$
\begin{aligned}
    \costp(S \triangle T)-\costp(S) &\leq \frac{\costp(O)-\costp(S)+\varepsilon \cdot \Psi_{pen}(S, O)}{|\pi|} \\
    &\leq \frac{\costp(O)-\costp(S)+\varepsilon \cdot \Psi_{pen}(S, O)}{k}
\end{aligned}
$$
The last inequality relies on $|\pi| \leq k$ and $\costp(O)-\costp(S)+\varepsilon \cdot \Psi_{pen}(S, O) < 0$.
\end{proof}

\subsection{Proof of Lemma~\ref{lem:c4}}
\begin{proof}
The proof sis similar to that of \cite{friggstad2019exact}. As $S$ is a nearly-good solution, $c_j^* \leq c_j$ for $j \in X^2$, and $c_j^*=c_j$ for $j \in X^3$, we have:
\begin{equation*}
\begin{aligned}
\sum\nolimits_{j \in X^4} c_j & \leq \sum\nolimits_{j \in X^1} c_j+\sum\nolimits_{j \in X^4} c_j=\costp(S)-\sum\nolimits_{j \in X^2} c_j-\sum\nolimits_{j \in X^3} c_j \\
& \leq \costp(S)-\sum\nolimits_{j \in X^2} c_j^*-\sum_{j \in X^3} c_j^* \\
& \leq \costp(O_q)+2 \varepsilon \cdot \Psi_{pen}(S, O_q)-\sum\nolimits_{j \in X^2} c_j^*-\sum\nolimits_{j \in X^3} c_j^* \\
& =\sum\nolimits_{j \in X^1} c_j^*+\sum_{j \in X^4} c_j^*+2 \varepsilon\left(\sum\nolimits_{j \in X^4} c_j^*+c_j\right)
\end{aligned}
\end{equation*}
Rearranging,
\begin{equation} \label{eq:k-means-alpha-solution-penalty-polynomial-2}
    \sum_{j \in X^4} c_j \leq \frac{1}{1-2 \varepsilon} \left(\sum_{j \in X^1} c_j^*+(1+2 \varepsilon) \cdot \sum_{j \in X^4} c_j^*\right) \leq(1+6 \varepsilon) \cdot\left(\sum_{j \in X^1} c_j^*+\sum_{j \in X^4} c_j^*\right)
\end{equation}
\end{proof}

\section{Tools for Reduction Proofs}

We frequently use the moment curve throughout our reductions, which we define as follows.

\begin{definition}
    The curve $\bR^{+} \rightarrow \bR^d$ defined by $t \mapsto\left(t, t^2, \ldots, t^d\right)$ is called the moment curve.
\end{definition}

All of our lower bounds are conditioned on the Exponential Time Hypothesis (\ETH), which was formulated in \cite{IP99}.

\begin{definition}[Exponential Time Hypothesis (\ETH) \cite{}] There exists a positive real value $s>0$ such that 3-CNF-SAT, parameterized by $n$, has no $2^{sn}(n+m)^{O(1)}$-time algorithm (where $n$ denotes the number of variables and $m$ denotes the number of clauses).
\end{definition}

The following problem, \pvclong, plays a critical role in our reductions. In particular, for showing hardness of inverse poly of $n$ stable $\kmd$ (Section~\ref{sec:hard-alpha}).

\begin{definition}[\pvclong (\pvc)]
\textbf{Input:} A graph $G=(V, E)$, an integer $s \in \mathbb{N}$. \\
\textbf{Parameter:} Integer $k$. \\
\textbf{Output:} YES if and only if there exists a set of $k$ vertices that covers at least $s$ edges.
\end{definition} \farrd{\pvc used in the previous section}

The following lower bound is already shown for \pvc \cite{impagliazzo2001problems}.

\begin{theorem} [\pvc Hardness \cite{impagliazzo2001problems}]. There is no $f(k) n^{o(k)}$-time algorithm for the \pvclong problem unless \ETH fails (for any computable function $f$ ), where $n$ is the size of the input.
\end{theorem}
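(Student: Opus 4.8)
The plan is to prove this by a parameter-preserving reduction from $k$-\textsc{Independent Set} (equivalently $k$-\textsc{Clique} on the complement graph), for which it is classical that, assuming \ETH, no $f(k)\,n^{o(k)}$-time algorithm exists. The only thing that matters for transferring this bound is that the reduction leave the parameter at (at most) $k$ and blow up the instance size only polynomially. The obstacle in reducing \textsc{Independent Set} to \pvc is the familiar one: \pvc rewards picking high-degree vertices, so ``covering many edges with $k$ vertices'' corresponds to ``the chosen set has no internal edge'' only once all vertices have equal degree. I will get around this with a cheap ``up-regularization'' by pendant leaves rather than an actual regularization.

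\textbf{Construction.} Given $G=(V,E)$ with $N=|V|$ (we may assume $N>k$, else the instance is decided in polynomial time), build $G'$ by attaching to each vertex $v\in V$ exactly $N-\deg_G(v)\ge 1$ new leaves, each adjacent only to $v$; thus every original vertex has degree exactly $N$ in $G'$ while every leaf has degree $1$. Output the \pvc instance $(G',\,s,\,k)$ with $s=kN$. Here $|V(G')|=N+\sum_{v\in V}(N-\deg_G(v))=N+N^2-2|E|\le N^2$, and the parameter is still $k$.

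\textbf{Correctness.} If $I\subseteq V$ is an independent set with $|I|=k$, then in $G'$ the set $I$ has no internal edge and each of its $k$ vertices has degree $N$, so $I$ covers exactly $kN=s$ edges. Conversely, suppose a $k$-set $S$ in $G'$ covers at least $kN$ edges. A leaf contributes at most one covered edge; since $|S|<N$, there is an original vertex $w\notin S$, and after deleting a leaf of $S$ the set has $k-1$ vertices, at most $k-1$ of which are neighbours of $w$, so re-adding $w$ covers at least $N-k+1>1$ new edges. Hence repeatedly swapping a leaf of $S$ for such a $w$ does not decrease the coverage, and we obtain $S'\subseteq V$ with $|S'|=k$ still covering $\ge kN$ edges. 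For an all-original $k$-set the number of covered edges equals $kN-|E(G[S'])|$, forcing $|E(G[S'])|=0$, i.e.\ $S'$ is an independent set of size $k$ in $G$. (If $G$ has no size-$k$ independent set, then every all-original $k$-set covers at most $kN-1$ edges, and every $k$-set containing at least one leaf covers at most $(k-1)N+k<kN$ edges, so the \pvc instance is a no-instance.)

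\textbf{Wrap-up and main obstacle.} An $f(k)\,n^{o(k)}$-time algorithm for \pvc would, through this reduction, decide $k$-\textsc{Independent Set} on $N$-vertex graphs in time $f(k)\cdot(N^2)^{o(k)}=f'(k)\,N^{o(k)}$, contradicting the \ETH-based lower bound for \textsc{Clique}/\textsc{Independent Set}. The one genuinely delicate point is the exchange argument certifying that pendant leaves are dominated and can be removed from any candidate solution without loss; the rest is degree counting, and the essential structural feature is that the reduction is linear in the parameter, which is exactly what yields the $n^{o(k)}$ bound (a quadratic parameter blow-up, as in edge-selection reductions from \textsc{Multicolored Clique}, would give only the weaker $n^{o(\sqrt{k})}$).
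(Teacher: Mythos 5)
The paper offers no proof of this statement at all: it is imported as a black-box citation (attributed to the ETH literature), so there is no in-paper argument to compare yours against; what can be judged is whether your blind reduction is sound, and it is. Your leaf-padding construction correctly equalizes all original degrees to $N$, the forward direction is immediate, and the exchange argument in the reverse direction is the one genuinely delicate step and it checks out: removing a leaf loses at most one covered edge, while adding an uncovered original vertex $w \notin S$ (which exists since $k < N$) gains at least $N-(k-1)$ edges, so each swap does not decrease coverage and terminates with an all-original $k$-set, for which coverage equals $kN - |E(G[S'])|$, forcing independence. The parameter stays $k$ and the blow-up is polynomial, so an $f(k)\,n^{o(k)}$ algorithm for \pvc would contradict the classical \ETH-based $f(k)\,N^{o(k)}$ lower bound for $k$-Clique/Independent Set; this matches how the cited result is established in the literature (a degree-equalizing reduction from independent set/clique), so your route is the standard one rather than a detour. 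Two trivial nits: $|V(G')| = N + N^2 - 2|E|$ is bounded by $2N^2$, not $N^2$ (immaterial, since any polynomial bound suffices), and your swap in fact strictly increases coverage when $N > k$, so ``does not decrease'' is a safe understatement; also your closing remark is apt that the linear parameter dependence is what delivers the $n^{o(k)}$ bound quoted here, in contrast to the $n^{o(\sqrt{k})}$ bounds the paper derives elsewhere via grid-tiling.
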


\gti problem also plays a critical role in our reductions. In particular, for showing hardness of $(\alpha, \beta)$-stability of $\kmd$ (Section~\ref{sec:hard-alpha-beta}).

\begin{definition} [\gti]
\textbf{Input:} Integer $n$, collection $S$ of $k^2$ nonempty sets $S_{i, j} \subseteq[n] \times[n]$ (where $1 \leq i, j \leq k$ ). \\
\textbf{Parameter:} Integer $k$. \\
\textbf{Output:} YES if and only if there exists a set of $k^2$ pairs $s_{i, j} \in S_{i, j}$ such that
\begin{itemize}
    \item  If $s_{i, j}=(a, b)$ and $s_{i+1, j}=\left(a^{\prime}, b^{\prime}\right)$, then $a \leq a^{\prime}$.
     \item If $s_{i, j}=(a, b)$ and $s_{i, j+1}=\left(a^{\prime}, b^{\prime}\right)$, then $b \leq b^{\prime}$.
\end{itemize}

\end{definition}

We call this problem \gti, and it is also known that this problem has no $f(k)n
^{o(k)}$-time algorithm unless \ETH fails \cite{cygan2015parameterized}.

\section{Omitted Proofs of Section~{\ref{sec:hard-alpha-beta}}}

\subsection{Proof of Lemma~\ref{lem:mu-approx}}

\begin{proof}

\emph{Claim 1.} For every $\vec y$ in $A$ define $M_{\vec y}$ to be square centred at $y$ with side length $\varepsilon$. Notice that every point in $R$ is in an $M_{\vec y}$ for some $\vec y \in A$, and since every point in each square is at most $\varepsilon$ away from its centre, we immediately derive $D_{r - \varepsilon} \subseteq \bigcup_{\vec y \in A \cap D_r} M_{\vec y} \subseteq D_{r + \varepsilon}$. Therefore,
    $$
    \iint_{D_{r - \varepsilon}} d \mu^\star \leq \iint_{\bigcup_{\vec y \in A \cap D_r} M_{\vec y}} d \mu^\star = \varepsilon^2 |A \cap D_r| = \varepsilon^2 \iint_{D_r} d \mu$$
    This yields the first inequality, and the second inequality can be attained in a similar manner.
    
    \emph{Claim 2.} 
    Again, using the fact that $\bigcup_{\vec y \in A \cap D_{r}} M_{\vec y} \subseteq D_{r + \varepsilon}$ we derive
    $$
    \begin{aligned}
        \iint_{D_{r + \varepsilon}} \delta(\vec a, \vec x) d \mu^\star &\geq \iint_{\bigcup_{\vec y \in A \cap D_{r }} M_{\vec y}} \delta(\vec a, \vec x)  d \mu^\star \\
        & \geq \iint_{\bigcup_{\vec y \in A \cap D_r} M_{\vec y}} \delta(\vec a, \vec y) - \delta(\vec x, \vec y)  d \mu^\star  \\
        & \geq \iint_{\bigcup_{\vec y \in A \cap D_r} M_{\vec y}} \delta(\vec a, \vec y) - \varepsilon 
 \iint_{\bigcup_{\vec y \in A \cap D_r} M_{\vec y}} d \mu^\star  \\
         & = \sum_{\vec y \in A \cap D_r}\iint_{M_{\vec y}} \delta(\vec a, \vec y) -   \varepsilon \iint_{\bigcup_{\vec y \in A \cap D_r} M_{\vec y}}  d \mu^\star \\
         & \stackrel{(i)}{\geq}  \varepsilon^2 \sum_{\vec y \in A \cap D_r} \delta(\vec a, \vec y) - \varepsilon \iint_{D_{r + \varepsilon}} d \mu^\star \\ 
         & = \varepsilon^2 \iint_{D_r} \delta(\vec a, \vec x) d \mu - \varepsilon \iint_{D_{r + \varepsilon}}  d \mu^\star \\
    \end{aligned}    
    $$
    Where (i) is due the fact that $\bigcup_{\vec y \in A \cap D_r} M_{\vec y} \subseteq D_{r + \varepsilon}$. 

     Finally, note that $\iint_{D_{r}} 1 d \mu^\star$ is simply the area of a circle with radius $r$, which is $\pi r^2$. Moreover, $\iint_{D_r} \delta(\vec x, \vec a)  d \mu^\star$ is the volume of a cylinder with radius and height $r$, which has a cone with radius and height $r$, carved out from it. Therefore, $\iint_{D_r} \delta(\vec x, \vec a)  d \mu^\star = \pi r^3 - \frac{1}{3} \pi r^3 = \frac{2}{3} \pi r^3$. This completes the proof.
\end{proof}

\subsection{Proof of Theorem~\ref{thm:k-median-alpha-beta-stable-hardness-mult-mult}}
\begin{proof}
   Define
   $\delta^{pen} (\vec x, \vec y) := \max \left(\|(x_1, x_2) - (y_1, y_2)\|_2, |x_3 - y_3|\right)$ for $\vec x = (x_1, x_2, x_3)$ and $\vec y = (y_1, y_2, y_3)$.
   Note that this is a metric, because
   $$
   \begin{aligned}
       \delta^{pen}(\vec x, \vec z) &= \max \left(\|(x_1, x_1) - (z_1, z_2)\|_2, |x_3 - z_3|\right) \\
       &  \leq \max \left(\|(x_1, x_1) - (y_1, y_2)\|_2 + \|(x_1, x_1) - (z_1, z_2)\|_2, |x_3 - y_3| + |y_3 - z_3|\right) \\
       &   \leq \max\left(\|(x_1, x_2) - (y_1, y_2)\|_2, |x_3 - y_3|\right) +  \max \left(\|(y_1, y_1) - (z_1, z_2)\|_2, |y_3 - z_3|\right) \\
       & \leq \delta^{pen}(\vec x, \vec y) + \delta^{pen}(\vec y, \vec z) 
   \end{aligned}
   $$
   Moreover, $\delta^{pen}$ has doubling dimension $3 = \log_2 8$, because any ball of radius $r$ w.r.t. $\delta^{pen}$ is a cylinder with radius and height $r$, which can be covered by $8$ balls of radius $r/2$.

   Define $X', C'$ to extend $X^{grid}, C^{grid}$ into 3 dimensions by setting their third coordinate to zero: 
   $$
    X' := \{ (u, v, 0) \mid (u, v) \in X^{grid} \}, \quad  C' := \{ (u, v, 0) \mid (u, v) \in C^{grid} \}
   $$
   Then define $C'' = \bigcup_{1 \leq i, j \leq k} \{ (2i - 1, 2j, 1),  (2i, 2j - 1, 1)$\} and $C := C' \cup C''$. Also, define $X$ to be $X'$ together with $\varSigma$ points in each of $(2i - 1, 2j, 1)$ and $(2i, 2j - 1, 1) $ for $1 \leq i, j \leq k$.

   We prove that solving the \xmd{3k^2} instance $(X, C, \delta^{pen})$ is equivalent to solving the \xmd{k^2} instance $(X^{grid}, C^{grid}, \delta, p^{grid})$. First, we show all $2k^2$ candidate centres in $C''$ should be selected in all optimum solutions of $(X, C, \delta^{pen})$. Note that every third coordinate of every candidate centre in $C''$ is 1. Thus, every data point in $X'$ has distance at least 1 to them. Moreover, note that for every $(u, w) \in X^{grid}$ such that $\|(u, w) , (2i, 2j - 1)\|_2 \leq 1$, we have $\delta^{pen}((u, w, 0), (2i, 2j - 1, 1)) = 1$. Similarly, for every $(u, w) \in X^{grid}$ such that $\|(u, w) , (2i - 1, 2j)\|_2 \leq 1$, we have $\delta^{pen}((u, w, 0), (2i, 2j - 1, 1)) = 1$. In other words, the circle with radius 1 around every $(2i, 2j - 1, 0)$ and $(2i, 2j - 1, 0)$ is distance 1 away from some centre in $C''$. These circles cover all of $X'$. Thus, the cost of any solution containing $C''$ will be at most $\varSigma$. On the other hand, for any solution that doesn't contain any $\vec a \in C''$, the cost of data points located in $\vec a$  alone is $\varSigma$.

  Next, since aforementioned circles cover $X'$, any data point with a distance of more than 1 from some selected candidate centre in $C'$ will be assigned to some centre in $C''$ and will have a cost of 1. This means that solving 
   \xmd{k^2} for $(X, C, \delta^{pen})$ is equivalent to solving \xmd{3k^2} for $(X^{grid}, C^{grid}, \delta, p^{grid})$. Finally, with an argument identical to our argument in Theorem~\ref{thm:k-median-penalty-alpha-beta-stable-hardness-mult-mult} we can also prove that $(X, C, \delta^{pen})$ is $(\alpha, \beta)$-stable 
   for any $0 < \beta, 1 < \alpha < 1.2$.
\end{proof}

\section{Omitted Proofs of Section~\ref{sec:hard-alpha}}

\subsection{Proof of Lemma~\ref{lem:mom-curve}}
\begin{proof}
Let the sphere be centred at $(a, b, c, d)$ and radius $r$. Consider the function $f(t)=(t-a)^2+\left(t^2-b\right)^2+\left(t^3-c\right)^2+\left(t^4-d\right)^2-r^2$. Note that $t_1$, $t_2$, and $t_3$ are roots of this polynomial. Moreover, $t_2, t_3$ are also roots of $f'(t)$. We will apply Descartes' rule of signs to upper-bound the number of strictly positive roots of $f'(t)$. The rule says that the number of strictly positive roots of a polynomial is upper-bounded by the number of sign changes between non-zero coefficients (assuming the coefficients are arranged in decreasing order of the degree of their corresponding term). To this end, we expand the polynomial $f(t)$ :
$$
\begin{aligned}
f(t) & =t^2-2 a t+a^2+t^4-2 b t^2+b^2+t^6-2 c t^3+c^2+t^8-2 d t^4+d^2-r^2 \\
& =t^8+t^6+(1-2 d) t^4-2 c t^3+(1-2 b) t^2-2 a t+\left(a^2+b^2+c^2+d^2-r^2\right)
\end{aligned}
$$
Thus,
$
f'(t) = 8 t^7+ 6t^5 + 4(1-2 d) t^3 - 6 c t^2 + 2 (1-2 b) t -2 a
$.

Hence, the coefficient sequence is given by $\left(8,6,4(1-2 d),-6 c,2(1-2 b), -2a\right)$. Clearly, there are (at most) 4 changes in sign in this sequence, which implies that the number of strictly positive roots is upper-bounded by 4. However, we already know of 4 roots to this polynomial. Two of them are corresponding to $t_2, t_3$. Using the median value theorem and observing that $f(t_1) = f(t_2) = f(t_3)$, there should exist a $ t'_1 \in (t_1, t_2)$ and $t'_2 \in (t_2, t_3)$ such that $f'(t'_1) = f'(t'_2) = 0$. Therefore $f'$ cannot be zero anywhere else. Note that for $t \rightarrow \infty$, the moment curve must be outside the sphere. Thus, the only way $t_2, t_3, t'_1, t'_2$ could be the only roots of $f'$ is for all $t \in\left(t_1, t_2\right) \cup\left(t_2, t_3\right) \cup (t_3, \infty)$ to lie outside of the sphere.
%
\end{proof}

\subsection{Proof of Lemma~\ref{lem:mom-curve-2}}
\begin{proof}
    Similarly to the proof of Lemma~\ref{lem:mom-curve} , let the sphere have centre $(a, b, c)$ and radius $r$. We then analyze the derivative of the function

$$
\begin{aligned}
f(t) & =(t-a)^2+\left(t^2-b\right)^2+\left(t^3-c\right)^2-r^2 \\
& =t^2-2 a t+a^2+t^4-2 b t^2+b^2+t^6-2 c t^3+c^2-r^2 \\
& =t^6+t^4-2 c t^3+(1-2 b) t^2-2 a t+\left(a^2+b^2+c^2-r^2\right)
\end{aligned}
$$
Thus,
$f'(t) = 6 t^5+4 t^3- 6 c t^2+2(1-2 b) t-2 a$.

The coefficients are $\left(6,4,-6 c, 2(1-2 b),-2 a\right)$, which has (at most) 3 changes of sign. Hence Descartes' rule implies that there are at most 3 roots. We already know of 3 roots to this polynomial, two of which correspond to $t_1, t_2$. From the median value theorem, since  $f(t_1) = f(t_2) $, there must exist a $t'_1 \in (t_1, t_2)$ such that $f'(t'_1) = 0$.  Similar to Lemma~\ref{lem:mom-curve}, the only way $t_1, t_2, t'_1$ could be the only roots of $f'$ is for $t \in\left(O, t_1\right) \cup(t_1, t_2) \cup \left(t_2, \infty\right)$ to all lie outside of the sphere.
\end{proof}

\subsection{Proof of Theorem~\ref{thm:k-median-alpha-hardness-panlty}}

\begin{proof}
For a fixed parameter $k$, consider a graph $G=(V, E)$ on $n=|V|$ vertices and $m=|E|$ edges, along with an integer $s$. 
Arbitrarily index the vertices $v_1, \ldots, v_n$. 
Construct a Euclidean \kmd instance with penalties in $\bR^4$ denoted by $\cI(G, k)$ as follows. 
Define $A_3 := \{ (x_1, x_2, x_3, x_4) \in \bR^4: x_4 = 0\}$, i.e., the affine subspace of $\bR^4$ with the fourth coordinate of all points being zero. Every point we define initially lies on $A_3$. 

Consider the 3-dimensional moment curve $\left(t, t^2, t^3\right)$. For each vertex $v_i$, we define $\tilde{v}_i = \left(i ,i^2,i^3, 0 \right) \in A_3$. For each edge $e_{i, j}=\left(v_i, v_j\right)$ in $G$, consider the unique 2-sphere in $A_3$, which we denote by $\bS_{i, j}$, that is perpendicular to the moment curve at points $\tilde{v}_{i}, \tilde{v}_j$. Let $c_{i, j}$ and $r_{i, j}$ denote the centre and radius of the 2-sphere $\bS_{i, j}$, respectively. Let $c_{i, j} = (a, b, c, 0)$. Then the equation system that uniquely solves $a, b, c$ is as follows.
$$
\begin{aligned}
& (i) \quad (i - a)^2+(i^2 - b)^2+(i^3 - c)^2 = (j - a)^2+(j^2 - b)^2+(j^3 - c)^2\\
& (ii)  \quad 6 i^5+4 i^3- 6 c i^2+2(1-2 b) i-2 a = 0\\
& (iii) \quad 6 j^5+4 j^3- 6 c j^2+2(1-2 b) j-2 a = 0
\end{aligned}
$$
Solving the equation system,  we get
\begin{equation} \label{eq:hardness-alpha-k-median-penalty-abc}
    \begin{aligned}
a &= i\cdot j\cdot (i + j)\cdot (3\cdot i^2 + 3\cdot i\cdot j + 3\cdot j^2 + 1) \\
b &= -(3\cdot i^4 + 12\cdot i^3\cdot j + 15\cdot i^2\cdot j^2 + i^2 + 12\cdot i\cdot j^3 + 4\cdot i\cdot j + 3\cdot j^4 + j^2 - 1)/2 \\
c &= (i + j)\cdot (2\cdot i^2 + i\cdot j + 2\cdot j^2 + 1)
\end{aligned}
\end{equation}

Let $q$ be an index pair that gives rise to the maximum $r_{i, j}$, namely $q=\operatorname{argmax}_{i, j} \{r_{i, j}\}$ (i.e., $q$ is of the form " $i, j$ "). Next, define $c^\prime_{i, j} := (a, b, c, \sqrt{r_q^2 - r^2_{i, j}})$ so that $\delta(\tilde v_i, c^\prime_{i, j}) = \delta(\tilde v_j, c^\prime_{i, j}) = r_q$.

As we discussed in the proof of Lemma~\ref{lem:mom-curve-2}, for all $t \in [n]$ we have $\delta( \tilde v_t, c_{i, j})^2 = f(t)$ where $f(t) := (t-a)^2+\left(t^2-b\right)^2+\left(t^3-c\right)^2$. Using Lemma~\ref{lem:mom-curve-2} for any $t \neq i, j$, we have $\delta(\tilde v_t, c_{i, j})^2 = f(t) > r_{i, j}^2$. Examining equation (\ref{eq:hardness-alpha-k-median-penalty-abc}) more closely reveals that $f(t)$ is an integer multiple of $1/4$ for all $t \in [n]$. Thus, we derive $\delta(\tilde v_t, c_{i, j})^2 \geq r_{i, j}^2 + \frac{1}{4}$. Therefore, $\delta(\tilde v_t, c^\prime_{i, j})^2 \geq r_q^2  - r_{i, j}^2 + r_{i, j}^2 + \frac{1}{4} = r_q^2 + \frac{1}{4}$.

Next, we notice that in equation \eqref{eq:hardness-alpha-k-median-penalty-abc}, every coordinate of every $c_{i, j}$ is polynomial with degree at most $5$ with respect to $i$ and $j$, while every coefficient is an integer divided by two. This immediately yields that every $r^2_{i, j} = O(n^{10})$. In particular, $r^2_{q} = O(n^{10})$. 
  Denoting $\varepsilon = \sqrt{\frac{r_{q}^2 + \frac{1}{4}}{r_{q}^2} } - 1$, we would have $\varepsilon = \Omega(\frac{1}{n^{10}})$. As a result, $\delta\left(c^\prime_{i, j}, \tilde z^* \right) \geq (1+ \varepsilon) r_q$.

 We are finally ready to introduce the data points, candidate centres, and penalty function of $\cI(G, k)$. Define data point to be $X := \{c^\prime_{i, j} \mid i, j \in [n], (v_i, v_j) \in E\}$, candidate centres to be $C := \{\tilde v_i \mid i \in [n]\}$, and penalty function to be $p(c^\prime_{i, j}) := r_q (1 + \varepsilon)$.
\begin{lemma} \label{lem:hardness-alpha-k-median-penalty-cost}
    Let $S = \{\tilde v_{i_1}, \tilde v_{i_2}, ..., \tilde v_{i_k}\} \subseteq X$. Assume that $\{v_{i_1}, v_{i_2}, ..., v_{i_k}\}$ covers $s^*$ edges in $G$. Then $\costp(S) = r^q (m + (m - s^*) \varepsilon)$.
\end{lemma}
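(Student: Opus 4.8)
The plan is to evaluate $\costp(S)$ by classifying the data points $c'_{i,j}$ (one per edge $e_{i,j}$) according to whether the corresponding edge is covered by the vertex set $\{v_{i_1},\dots,v_{i_k}\}$. The key structural facts, all established above, are: (a) for the chosen $c'_{i,j}$, the two centres $\tilde v_i,\tilde v_j$ are at distance exactly $r_q$ from $c'_{i,j}$; (b) every other centre $\tilde v_t$ ($t\ne i,j$) is at distance at least $\sqrt{r_q^2+\tfrac14}=(1+\varepsilon)r_q$ from $c'_{i,j}$; and (c) the penalty is $p(c'_{i,j})=(1+\varepsilon)r_q$.

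First I would take an arbitrary data point $c'_{i,j}$ and compute $\min\bigl(\delta(c'_{i,j},S),\,p(c'_{i,j})\bigr)$. If the edge $e_{i,j}$ is covered by $S$ — i.e.\ at least one of $v_i,v_j$ has its index among $i_1,\dots,i_k$ — then $\tilde v_i\in S$ or $\tilde v_j\in S$, so by (a) the nearest centre in $S$ is at distance exactly $r_q$ (it cannot be smaller, since all centre-to-$c'_{i,j}$ distances are $\ge r_q$ by (a) and (b)), and since $r_q<(1+\varepsilon)r_q=p(c'_{i,j})$, this point contributes exactly $r_q$. If instead $e_{i,j}$ is \emph{not} covered, then every centre in $S$ is one of the "other" centres, hence at distance $\ge (1+\varepsilon)r_q$ by (b); since the penalty is exactly $(1+\varepsilon)r_q$, the minimum is $(1+\varepsilon)r_q$ and this point contributes $(1+\varepsilon)r_q$.

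Summing over all $m$ edges: $s^*$ of them are covered and contribute $r_q$ each, while the remaining $m-s^*$ are uncovered and contribute $(1+\varepsilon)r_q$ each. Therefore
\[
\costp(S) = s^*\, r_q + (m-s^*)(1+\varepsilon) r_q = r_q\bigl(m + (m-s^*)\varepsilon\bigr),
\]
which is the claimed formula.

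The main obstacle is the edge-case bookkeeping in the "uncovered" case: one must be careful that when $e_{i,j}$ is uncovered, no centre in $S$ accidentally lands at distance exactly $r_q$ or less — but this is precisely ruled out by Lemma~\ref{lem:mom-curve-2} together with the integrality-of-$f(t)/\tfrac14$ argument already carried out, which gives the strict gap $\delta(\tilde v_t,c'_{i,j})^2 \ge r_q^2+\tfrac14$ for $t\ne i,j$. A secondary point worth stating explicitly is that $S\subseteq C$ rather than $S\subseteq X$ as the lemma statement writes (a minor typo), and that the "covered" contribution is exactly $r_q$ and not less, which again follows because $r_q$ is the global minimum over all candidate-centre distances to $c'_{i,j}$. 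Once these are noted the computation is immediate.
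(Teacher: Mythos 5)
Your proof is correct and follows essentially the same case split as the paper's: classify each $c'_{i,j}$ by whether the edge $e_{i,j}$ is covered by the chosen vertices, use that the covered case pays exactly $r_q$ (since $\tilde v_i$ or $\tilde v_j\in S$ and $r_q < p(c'_{i,j})$) and the uncovered case pays the penalty $(1+\varepsilon)r_q$ (since all $\tilde v_t$, $t\ne i,j$, are at distance $\ge (1+\varepsilon)r_q$), then sum over the $s^*$ covered and $m-s^*$ uncovered edges. You are also right that the lemma statement has a typo: it should read $S\subseteq C$ rather than $S\subseteq X$.
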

\begin{proof}
    Fix any $i, j \in [n]$ where $(v_i, v_j) \in E$. We first calculate the cost of $c^\prime_{i,j}$, which is given by $\min_{\vec t \in S} \left\{ \min\left\{ \delta(c^\prime_{i, j}, \vec t), p(c^\prime_{i, j})\right\} \right\}$. Then, $\costp(S)$ is the sum of the costs of all such $c^\prime_{i, j}$. In case $\tilde v_i$ or $\tilde v_j$ are in $S$ we derive $\min_{\vec t \in S} \left\{ \min\left\{ \delta(c^\prime_{i, j}, \vec t), p(c^\prime_{i, j})\right\} \right\} = \delta(c^\prime_{i, j}, \tilde v_i) = r_{q}$. Otherwise, $\min_{\vec t \in S} \left\{ \min\left\{ \delta(c^\prime_{i, j}, \vec t), p(c^\prime_{i, j})\right\} \right\} = p(c^\prime_{i, j}) = r_{q} ( 1 + \varepsilon)$. Therefore, the total cost of $S$ is $r_q$ multiplied by the number of edges  $\{v_{i_1}, v_{i_2}, ..., v_{i_k}\}$ covers plus $r_q ( 1 + \varepsilon)$ multiplied by the number of edges $\{v_{i_1}, v_{i_2}, ..., v_{i_k}\}$ does not cover. Hence, $\costp(S) = r_q s^* + r_q (m - s^*) (1 + \varepsilon)$. This completes the proof.
\end{proof}

Firstly, Lemma~\ref{lem:hardness-alpha-k-median-penalty-cost} immediately implies that the graph $G$ has a partial vertex cover of size $k$, covering at least $s$ edges, if and only if $\cI(G, k)$ has a solution of cost at most $r^q (m + (m - s) \varepsilon)$. Therefore, the only thing left to prove is that $\cI(G, k)$ is stable. Secondly, from Lemma~\ref{lem:hardness-alpha-k-median-penalty-cost}, we also derive that, assuming that a maximal partial vertex cover of size $k$ of $G$ covers $s^*$ edges, the set of optimum solutions of $\cI(G, k)$ is the set of all $\{\tilde v_{i_1}, \tilde v_{i_2}, ..., \tilde v_{i_k}\}$ such that $\{v_{i_1}, v_{i_2}, ..., v_{i_k}\}$ covers $s^*$ edges in $G$.


Fix $\varepsilon^{\prime} = \varepsilon / 2m$. We prove that $\cI(G, k)$ is $(1+\varepsilon^{\prime})$-stable 
to complete the proof. Consider any $\delta \leq \delta^\prime \leq (1 + \varepsilon^\prime) \cdot \delta$ and $p \leq p^\prime \leq (1 + \varepsilon^\prime) \cdot p$. Assume the maximal partial vertex cover of size $k$ of $G$ covers $s^*$ edges. We only need to show that any optimum solution of the \kmd instance $(X, C, \delta^\prime, p')$ corresponds to a set of vertices that cover $s^*$ edges. For every $S \subseteq C$ define $\costp^\prime(S) := \sum_{j \in X} \min_{i \in S} \left\{\min \left\{\delta^\prime(i, j), p'(j)\right\}\right\}$. Note that by definition, $\costp(S) \leq \costp^\prime(S) \leq \costp(S) ( 1 + \varepsilon^\prime)$. 

Denote by $O^\prime$ an optimum solution of $(X, C, \delta^\prime, p')$. For the sake of contradiction assume that $O^\prime$ corresponds to a set of vertices that cover $s < s^*$ edges of $G$, and let $O \in \cF_k$ be corresponding to any set of $k$ vertices that cover $s^*$ edges in $G$. By Lemma~\ref{lem:hardness-alpha-k-median-penalty-cost} we have
$$
\begin{aligned}
    \costp^\prime(O) &\leq (1 + \varepsilon^\prime) \costp(O) =  r^q ( m (1 + \varepsilon^\prime)  + (m - s^*) \varepsilon (1 + \varepsilon^\prime)  ) \\
    & <  r^q ( m + \frac{\varepsilon}{2} + (m - s^*) \varepsilon + \frac{\varepsilon^2}{2} )  < r^q ( m  + (m - s^* + 1) \varepsilon )\\
    & \leq r^q ( m  + (m - s) \varepsilon) = \costp(O^\prime) \leq \costp^\prime(O^\prime)
\end{aligned}
$$
This is a contradiction. Therefore, every optimum solution of $(X, C, \delta^\prime, p')$ is also an optimum solution of $(X, C, \delta, P)$, which completes the proof.
\end{proof}

\subsection{Proof of Theorem~\ref{thm:k-median-alpha-hardness}}

\begin{proof}
For a fixed parameter $k$, we are given a graph $G=(V, E)$ on $n=|V|$ vertices and $m=|E|$ edges, along with an integer $s$. Arbitrarily index the vertices $v_1, \ldots, v_n$. 
Inspired by Theorem 5.1 of \cite{cohen2018bane}, we construct a Euclidean \xmd{k^\prime} instance in $\bR^6$ with $k^\prime = k + 1$ denoted by $\cI(G, k)$. We emphasize that the proof has been significantly modified to fit our stability setting while also being simplified.
Define $A_4 := \{ (x_1, x_2, x_3, x_4, x_5, x_6) \in \bR^4: x_5 = 0, x_6 = 0\}$, i.e., the affine subspace of $\bR^4$ with the fifth and sixth coordinates of all points being zero. Every point we define initially lies on $A_4$. 

Consider the 4-dimensional moment curve $\left(t, t^2, t^3, t^4\right)$. For each vertex $v_i$, we define $\tilde{v}_i = \left(i + 1,(i+1)^2,(i+1)^3,(i+1)^4, 0, 0 \right) \in A_4$, and $z^{*} = (1, 1, 1, 1, 0, 0) \in A$. For each edge $e_{i, j}=\left(v_i, v_j\right)$ in $G$, consider the unique 3-sphere in A, which we denote by $\mathbb{S}_{i, j}$, that is perpendicular to the moment curve at point $\tilde{v}_{i}, \tilde{v}_j$ and also has $z^*$ on its surface. Let $c_{i, j}$ and $r_{i, j}$ denote the centre and radius of the 3-sphere $\mathbb{S}_{i, j}$, respectively. Let $c_{i, j} = (a, b, c, d, 0, 0)$ and $\tilde i = (i + 1), \tilde j = (j + 1)$. Then the equation system that uniquely solves $a, b, c, d$ is as follows.
$$
\begin{aligned}
& (i) \quad (\tilde i - a)^2+(\tilde i^2 - b)^2+(\tilde i^3 - c)^2+(\tilde i^4 - d)^2 \\
 & \quad \quad \quad= (\tilde j - a)^2+(\tilde j^2 - b)^2+(\tilde j^3 - c)^2+(\tilde j^4 - d)^2 \\
 & \quad \quad \quad   = (1 - a)^2+(1 - b)^2+(1 - c)^2+(1 - d)^2  \\
& (ii)  \quad a  + 2\tilde i b + 3\tilde i^2 c + 4\tilde i^3 d = \tilde i + 2\tilde i^3 + 3\tilde i^5 + 4\tilde i^7, \\
& (iii) \quad  a  + 2\tilde j b + 3\tilde j^2 c + 4\tilde j^3 d  = \tilde j + 2\tilde j^3 + 3\tilde j^5 + 4\tilde j^7. 
\end{aligned}
$$
One can verify, using any solver system, that the solution to the system of equations is
\begin{equation} \label{eq:hardness-alpha-k-median-abcd}
\begin{aligned}
    a &= -\tilde i\cdot \tilde j\cdot(\tilde i + \tilde j)\cdot(2\cdot \tilde i^3\cdot \tilde j + 4\cdot \tilde i^3 + \tilde i^2\cdot \tilde j^2 + 6\cdot \tilde i^2\cdot \tilde j + 3\cdot \tilde i^2 + 2\cdot \tilde i\cdot \tilde j^3 + 6\cdot \tilde i\cdot \tilde j^2 + 5\cdot \tilde i\cdot \tilde j \\
    & + 4\cdot \tilde i + 4\cdot \tilde j^3 + 3\cdot \tilde j^2 + 4\cdot \tilde j + 2) \\
    b &= (8\cdot \tilde i^5\cdot \tilde j + 4\cdot \tilde i^5 + 17\cdot \tilde i^4\cdot \tilde j^2 + 22\cdot \tilde i^4\cdot \tilde j + 3\cdot \tilde i^4 + 20\cdot \tilde i^3\cdot \tilde j^3 + 34\cdot \tilde i^3\cdot \tilde j^2 + 20\cdot \tilde i^3\cdot \tilde j \\
    & + 4\cdot \tilde i^3 +17\cdot \tilde i^2\cdot \tilde j^4 + 34\cdot \tilde i^2\cdot \tilde j^3 + 29\cdot \tilde i^2\cdot \tilde j^2 + 20\cdot \tilde i^2\cdot \tilde j + 2\cdot \tilde i^2 + 8\cdot \tilde i\cdot \tilde j^5 + 22\cdot \tilde i\cdot \tilde j^4  \\
    & + 20\cdot \tilde i\cdot \tilde j^3 + 20\cdot \tilde i\cdot \tilde j^2+ 8\cdot \tilde i\cdot \tilde j + 4\cdot \tilde j^5 + 3\cdot \tilde j^4 + 4\cdot \tilde j^3 + 2\cdot \tilde j^2 + 1)/2 \\
    c &= -(\tilde i + \tilde j)\cdot (2\cdot \tilde i^4 + 6\cdot \tilde i^3\cdot \tilde j + 4\cdot \tilde i^3 + 5\cdot \tilde i^2\cdot \tilde j^2 + 6\cdot \tilde i^2\cdot \tilde j + 4\cdot \tilde i^2 + 6\cdot \tilde i\cdot \tilde j^3 + 6\cdot \tilde i\cdot \tilde j^2 \\
    &+ 7\cdot \tilde i\cdot \tilde j + 4\cdot \tilde i  + 2\cdot \tilde j^4 + 4\cdot \tilde j^3 + 4\cdot \tilde j^2 + 4\cdot \tilde j + 2)\\
    d &= (5\cdot \tilde i^4 + 8\cdot \tilde i^3\cdot \tilde j + 4\cdot \tilde i^3 + 9\cdot \tilde i^2\cdot \tilde j^2 + 6\cdot \tilde i^2\cdot \tilde j + 6\cdot \tilde i^2 + 8\cdot \tilde i\cdot \tilde j^3 + 6\cdot \tilde i\cdot \tilde j^2 + 8\cdot \tilde i\cdot \tilde j \\ &
    + 4\cdot \tilde i + 5\cdot \tilde j^4 + 4\cdot \tilde j^3 + 6\cdot \tilde j^2 + 4\cdot \tilde j + 3)/2 \\
\end{aligned}
\end{equation}

Similar to Theorem~\ref{thm:k-median-alpha-hardness-panlty}, let $q:=\operatorname{argmax}_{i, j} \{r_{i, j}\}$ and define $c^\prime_{i, j} := (a, b, c, d, \sqrt{r_q^2 - r^2_{i, j}}, 0)$ so that $\delta(z^*, c^\prime_{i, j}) = \delta(\tilde v_i, c^\prime_{i, j}) = \delta(\tilde v_i, c^\prime_{i, j}) = r_q$. Again for any $t \neq i, j \in [n]$, we have $\delta(\tilde v_t, c^\prime_{i, j})^2 \geq r_q^2 + \frac{1}{4}$. 
We also define 
$\tilde{z}^{*} = \left(1, 1, 1, 1, 0, \frac{1}{2} \right)$, and subsequently $\delta^2(\tilde z^*, c_{i, j}) = r^2_q + \frac{1}{4}$.

Moreover, in \eqref{eq:hardness-alpha-k-median-penalty-abc} every coordinate of every $c_{i, j}$ is a polynomial with degree at most $7$ with respect to $\tilde i$ and $\tilde j$, while every coefficient is an integer divided by two. This immediately yields that $r^2_{i, j} = O(n^{14})$. In particular, $r^2_{q} = O(n^{14})$. 
Setting 
$\varepsilon =  \sqrt{\frac{r_q^2 + \frac{1}{4}}{r_{q}^2} } - 1, $
we would have $\delta(z^*, c^\prime_{i, j}) = (1 + \varepsilon) r_q$ and $\varepsilon = \Omega(\frac{1}{n^{14}})$.

 We are finally ready to introduce the data points and candidate centres of $\cI(G, k)$. Define the set of candidate centres to be $C := \{\tilde v_1, \tilde v_2, ..., \tilde v_n, \tilde z^*\}$, and the set of data points $X$ to be $\{c^\prime_{i, j} \mid i, j \in [n], (v_i, v_j) \in E\}$ in addition to $\lceil m r_q \rceil$ points in $\tilde z^*$.

\begin{lemma} \label{lem:hardness-alpha-k-median-cost}
    Let $S = \{\tilde v_{i_1}, \tilde v_{i_2}, ..., \tilde v_{i_k}\} \cup \{ \tilde{z}^* \} \subseteq C$. Assume that $\{v_{i_1}, v_{i_2}, ..., v_{i_k}\}$ covers $s^*$ edges in $G$. Then $\cost(S) = r^q (m + (m - s^*) \varepsilon)$.
\end{lemma}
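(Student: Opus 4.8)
The plan is to compute $\cost(S)=\sum_{\vec x\in X}\delta(\vec x,S)$ by splitting $X$ into the two families from which it was constructed: the $\lceil m r_q\rceil$ copies of $\tilde z^{*}$ and the $m$ edge-points $c'_{i,j}$ (one per edge $(v_i,v_j)\in E$). Since $\tilde z^{*}\in S$, each copy of $\tilde z^{*}$ is at distance $0$ from $S$ and contributes nothing. Hence $\cost(S)=\sum_{(v_i,v_j)\in E}\delta(c'_{i,j},S)$, and it remains only to evaluate one term for each edge.

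Next I would record, for a fixed edge-point $c'_{i,j}$, its distance to each candidate centre; all three facts are already available from the construction. By the choice of the fifth coordinate $\sqrt{r_q^2-r_{i,j}^2}$ and the fact that $\tilde v_i,\tilde v_j$ are tangency points of $\mathbb{S}_{i,j}$ (so $\delta(\tilde v_i,c_{i,j})^2=\delta(\tilde v_j,c_{i,j})^2=r_{i,j}^2$), we get $\delta(c'_{i,j},\tilde v_i)=\delta(c'_{i,j},\tilde v_j)=r_q$. Because $z^{*}$ lies on $\mathbb{S}_{i,j}$ we have $(1-a)^2+(1-b)^2+(1-c)^2+(1-d)^2=r_{i,j}^2$, so adding the last-coordinate contribution of $\tilde z^{*}=(1,1,1,1,0,\tfrac12)$ gives $\delta(c'_{i,j},\tilde z^{*})^2=r_{i,j}^2+(r_q^2-r_{i,j}^2)+\tfrac14=r_q^2+\tfrac14$, i.e.\ $\delta(c'_{i,j},\tilde z^{*})=\sqrt{r_q^2+\tfrac14}=(1+\varepsilon)r_q$. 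Finally, for every $t\neq i,j$ the separation bound $\delta(\tilde v_t,c'_{i,j})^2\ge r_q^2+\tfrac14$ proved earlier in the theorem (via Lemma~\ref{lem:mom-curve} together with the quarter-integer gap) shows that no other candidate centre undercuts $r_q$, nor even beats $(1+\varepsilon)r_q$.

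With these distances the per-edge cost is immediate. If $e_{i,j}$ is covered by $\{v_{i_1},\dots,v_{i_k}\}$, then $\tilde v_i\in S$ or $\tilde v_j\in S$, and since every other member of $S$ is at distance $\ge\sqrt{r_q^2+\tfrac14}>r_q$, we get $\delta(c'_{i,j},S)=r_q$. If $e_{i,j}$ is uncovered, then $S$ contains only $\tilde z^{*}$ and centres $\tilde v_t$ with $t\neq i,j$, all at distance $(1+\varepsilon)r_q$ (exactly for $\tilde z^{*}$, at least for the others), so $\delta(c'_{i,j},S)=(1+\varepsilon)r_q$. Summing over the $s^{*}$ covered and $m-s^{*}$ uncovered edges gives $\cost(S)=s^{*}r_q+(m-s^{*})(1+\varepsilon)r_q=r_q\big(m+(m-s^{*})\varepsilon\big)$, as claimed. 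The only non-routine ingredient here is the separation bound for $t\neq i,j$ — which is precisely why the reduction uses the moment curve and the $\tfrac14$-shift — but since it is established before this lemma is invoked, the argument above is pure bookkeeping of distances and I anticipate no real obstacle.
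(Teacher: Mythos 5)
Your proof is correct and follows exactly the same route as the paper's: set aside the zero-cost copies of $\tilde z^{*}$, compute the three relevant distances from each edge-point ($r_q$ to its two tangency centres, $(1+\varepsilon)r_q$ to $\tilde z^{*}$, and $\ge (1+\varepsilon)r_q$ to every other $\tilde v_t$), then sum over covered and uncovered edges. The paper's proof is just a terser statement of the same bookkeeping, relying on the distance facts established immediately before the lemma exactly as you do.
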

\begin{proof}
    Note that all data points in $\tilde z^*$ will be assigned to $\tilde z^*$ and their cost would be zero. Consider any $(v_i, v_j) \in E$. Then, if $\tilde v_i$ or $\tilde v_j$ is in $S$, we have $\min_{\vec t \in S} \delta(c^\prime_{i, j}, \vec t) = r_{q}$. Otherwise, $\min_{\vec t \in S} \delta(c^\prime_{i, j}, \vec t) = \delta(c^\prime_{i, j}, \tilde z^*) = r_{q} ( 1 + \varepsilon)$. This completes the proof.
\end{proof}

\begin{lemma} \label{lem:hardness-alpha-k-median-solutions}
    Assuming a maximal partial vertex cover of size $k$ of $G$ covers $s^*$ edges, the set of optimum solutions of $\cI(G, k)$ is the set of all $\{\tilde v_{i_1}, \tilde v_{i_2}, ..., \tilde v_{i_k}, \tilde z^*\}$ such that $\{v_{i_1}, v_{i_2}, ..., v_{i_k}\}$ covers $s^*$ edges in $G$.
\end{lemma}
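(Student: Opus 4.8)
The plan is to prove both set inclusions by first showing that the centre $\tilde z^*$ belongs to \emph{every} optimum solution of $\cI(G,k)$, and then observing that, once we restrict to feasible solutions containing $\tilde z^*$, the cost is (by Lemma~\ref{lem:hardness-alpha-k-median-cost}) a strictly decreasing function of the number of covered edges, which pins down the optima exactly. Throughout, recall that a feasible solution lies in $\cF_{k+1}$, i.e.\ is a set of $k+1$ distinct elements of $C=\{\tilde v_1,\dots,\tilde v_n,\tilde z^*\}$, and that $s^*$ denotes the largest number of edges of $G$ coverable by $k$ vertices.

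\textbf{Step 1: $\tilde z^*$ is in every optimum.} Fix a feasible $S\in\cF_{k+1}$ with $\tilde z^*\notin S$; then $S$ consists of $k+1$ distinct vertex centres $\tilde v_i$, so each of the $\lceil m r_q\rceil$ data points placed at $\tilde z^*$ must be served by one of them. Since $\tilde z^*=(1,1,1,1,0,\tfrac12)$ and $\tilde v_i=(i+1,(i+1)^2,(i+1)^3,(i+1)^4,0,0)$, the value $\delta(\tilde z^*,\tilde v_i)^2 = i^2+((i+1)^2-1)^2+((i+1)^3-1)^2+((i+1)^4-1)^2+\tfrac14$ is minimised at $i=1$, where it equals $1+9+49+225+\tfrac14>16^2$; hence $\delta(\tilde z^*,\tilde v_i)>16$ for all $i$, so $\cost(S)\ge \lceil m r_q\rceil\cdot 16 > 16\,m\,r_q$. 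On the other hand, taking $\{v_{i_1},\dots,v_{i_k}\}$ to be a $k$-vertex set covering $s^*$ edges, Lemma~\ref{lem:hardness-alpha-k-median-cost} gives $\cost(\{\tilde v_{i_1},\dots,\tilde v_{i_k},\tilde z^*\})=r_q\big(m+(m-s^*)\varepsilon\big) < 2\,m\,r_q$, using $\varepsilon=\sqrt{1+1/(4r_q^2)}-1<1$. Since $2\,m\,r_q<16\,m\,r_q$, no $\tilde z^*$-free solution can be optimal.

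\textbf{Step 2: identifying the optima.} Every feasible solution containing $\tilde z^*$ has the form $\{\tilde v_{i_1},\dots,\tilde v_{i_k},\tilde z^*\}$ with distinct indices, and by Lemma~\ref{lem:hardness-alpha-k-median-cost} its cost equals $r_q\big(m+(m-s_{\mathrm{cov}})\varepsilon\big)$, where $s_{\mathrm{cov}}$ is the number of edges of $G$ covered by $\{v_{i_1},\dots,v_{i_k}\}$. As $r_q,\varepsilon>0$, this quantity strictly decreases in $s_{\mathrm{cov}}$, so it is minimised precisely when $s_{\mathrm{cov}}$ attains its maximum value $s^*$. Combining with Step~1: any optimum solution contains $\tilde z^*$ and covers exactly $s^*$ edges, and conversely every set $\{\tilde v_{i_1},\dots,\tilde v_{i_k},\tilde z^*\}$ whose underlying vertex set covers $s^*$ edges attains the common minimum cost $r_q\big(m+(m-s^*)\varepsilon\big)$ and hence is optimal. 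This is exactly the claimed characterisation.

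The one geometric ingredient that does the real work — namely that for each edge $(v_i,v_j)$ one has $\delta(c'_{i,j},\tilde v_i)=\delta(c'_{i,j},\tilde v_j)=r_q$, $\delta(c'_{i,j},\tilde z^*)=\sqrt{r_q^2+1/4}=r_q(1+\varepsilon)$, and $\delta(c'_{i,j},\tilde v_t)\ge\sqrt{r_q^2+1/4}$ for all $t\neq i,j$ — is exactly what underlies Lemma~\ref{lem:hardness-alpha-k-median-cost}, and it has already been established above using Lemma~\ref{lem:mom-curve} together with the fact that $f(t)$ only takes values that are integer multiples of $1/4$. Given that input, the present lemma is essentially bookkeeping, paralleling the analogous argument inside the proof of Theorem~\ref{thm:k-median-alpha-hardness-panlty}; the only mildly delicate point is the magnitude estimate of Step~1, i.e.\ verifying that $\lceil m r_q\rceil\,d_{\min}$ comfortably exceeds the optimal cost $r_q\big(m+(m-s^*)\varepsilon\big)$, which it does since $d_{\min}>16$ while $\varepsilon<1$.
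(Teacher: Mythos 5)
Your proof is correct and follows essentially the same two-step argument as the paper: first showing that any $\tilde z^*$-free solution is strictly costlier than the cheapest $\tilde z^*$-containing one (so $\tilde z^*$ is in every optimum), then invoking Lemma~\ref{lem:hardness-alpha-k-median-cost} to identify the optima among solutions containing $\tilde z^*$ as exactly those whose vertex part covers $s^*$ edges. The only cosmetic difference is that you compute the sharper bound $\delta(\tilde z^*,\tilde v_1)>16$ where the paper contents itself with $\delta(\tilde z^*,\tilde v_1)\geq 2$; both suffice for the comparison against the upper bound $m\,r_q(1+\varepsilon)<2m\,r_q$.
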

\begin{proof}
    First consider any $O \in \cF_{k^\prime}$ that does not contain $\tilde z^*$. We will prove that such $O$ cannot be an optimum solution. The closest centre to $\tilde z^*$ is $\tilde v_1$ and $\delta(\tilde z^*, \tilde v_1) \geq 2$. If all data points in $z^*$ were assigned to $\tilde v_1$, the cost would be $2 \lceil m r_q \rceil$. Therefore, due to Lemma~\ref{lem:hardness-alpha-k-median-cost}, $\cost(O)$ would be bigger than the cost of any $S \in \cF_{k^\prime}$ that contains $\tilde z^*$, which is a contradiction. The rest of the proof immediately follows from Lemma~\ref{lem:hardness-alpha-k-median-cost}.
\end{proof}

Lemma~\ref{lem:hardness-alpha-k-median-cost} and Lemma~\ref{lem:hardness-alpha-k-median-solutions} imply that the graph $G$ has a partial vertex cover of size $k$ covering at least $s$ edges if and only if $\cI(G, k)$ has a solution of cost at most
$r^q (m + (m - s) \varepsilon)$. Therefore, the only thing left to prove is that $\cI(G, k)$ is stable.


Fix $\varepsilon^{\prime} = \varepsilon / 2m$. We prove that $\cI(G, k)$ is $(1 + \varepsilon^{\prime})$-stable, which completes the proof. Consider any 
$\delta \leq \delta^\prime \leq (1 + \varepsilon^\prime) \cdot \delta$ and $p \leq p' \leq (1 + \varepsilon^\prime) \cdot p$. Assume that a maximal partial vertex cover of size $k$ of $G$ covers $s^*$ edges. It suffices to show that any optimum solution of $(X, C, \delta^\prime)$ corresponds to a set of vertices that covers $s^*$ edges. For every $S \in \cF_{k^\prime}$, define $\cost^\prime(S) = \sum_{j \in X} \min_{i \in S} \delta^\prime(i, j)$. Note that by definition $\cost(S) \leq \cost^\prime(S) \leq \cost(S) ( 1 + \varepsilon^\prime)$. 

Let $O^\prime$ be an optimum solution of the \xmd{k^\prime} instance $(X, C, \delta^\prime)$.  For the sake of contradiction assume that $O^\prime$ corresponds to a set of vertices that cover $s < s^*$ edges of $G$, and let $O$ correspond to any set of vertices that cover $s^*$ edges in $G$. Similarly to the proof of Theorem~\ref{thm:k-median-alpha-hardness-panlty}, and using Lemma~\ref{lem:hardness-alpha-k-median-penalty-cost}, we derive
$$
\begin{aligned}
    \cost^\prime(O) &\leq (1 + \varepsilon^\prime) \cost(O)  < r^q \left( m  + (m - s^* + 1) \varepsilon \right) \leq r^q ( m  + (m - s) \varepsilon) \\
    & = \cost(O^\prime)  \leq \cost^\prime(O^\prime)
\end{aligned}
$$
This is a contradiction. Therefore, every optimum solution of $(X, C, \delta^\prime)$ is also an optimum solution of $(X, C, \delta)$, which completes the proof.
\end{proof}

\end{document}